\documentclass{lmcs}
\pdfoutput=1

\usepackage{lastpage}

\lmcsheading{}{1--\pageref{LastPage}}{}{}%
{Jan.~02,~2017}{Jul.~26,~2018}{}

\usepackage{hyperref}
\hypersetup{hidelinks}
\newcommand{\rightsquigstar}{\rightsquigarrow^{*}}
\usepackage{theoremref}
 \usepackage{array} \usepackage{graphicx}
\pagestyle{headings} 
 \usepackage{amssymb} \usepackage{amsmath} \usepackage{stmaryrd}

\newcommand{\fctrans}[1]{\rightsquigarrow_{#1}}
\newcommand{\lkappa}{<\!\!\kappa}

\newcommand{\leqlambda}{\leqslant\!\!\lambda}
\newcommand{\llambda}{<\!\!\lambda}
\newcommand{\On}{\mathsf{Ord}}

\newcommand{\zat}[1]{0_{#1}}
\newcommand{\brtree}[1]{C}
\newcommand{\prune}[1]{D}

\newcommand{\biginfty}{\infty\hspace{-0.95em} \infty}
\newcommand{\subbiginfty}{\infty\hspace{-0.825em} \infty}
\newcommand{\supof}[1]{\mathrm{sup}({#1})}

\newcommand{\emptytup}{()}

\newcommand{\nats}{\ensuremath{\mathbb{N}}}

\newcommand{\alzero}{f}

\newcommand{\alone}[1]{\alpha_i}

\newcommand{\betone}[1]{\beta_i}
\newcommand{\bitsen}[1]{\overline{#1}}
\newcommand{\lbitsen}[1]{\overline{#1}}

\newcommand{\von}[1]{\vono{#1}} 
\newcommand{\vonset}[1]{{\vonoset{#1}}}
\newcommand{\vonsetset}[1]{{\vonosetset{#1}}}
\newcommand{\vono}[1]{v_{#1}} 
\newcommand{\vonoset}[1]{u_{#1}}
\newcommand{\vonosetset}[1]{t_{#1}}
\newcommand{\limp}[1]{\mathrm{LP}({#1})}     
\newcommand{\nump}[1]{\mathrm{FP}({#1})}  

\newcommand{\pset}{\mathcal{P}}

\newcommand{\psetfin}{\mathcal{P}_{\mathsf{f}}}

\newcommand{\psetcount}{\mathcal{P}_{\mathsf{c}}}

\newcommand{\pco}[2]{p^{#1}_{{#2}}}

\newcommand{\pcop}[2]{p_{#1}{#2}}
\newcommand{\pcote}[2]{p_{#1}{#2}}
\newcommand{\pcofc}[3]{p_{#2}{#3}}
\newcommand{\pfc}[2]{{#2}}

\newcommand{\bits}{\setbr{0,1}}

\newcommand{\simi}[1]{\sim_{#1}}
 \usepackage[matrix,arrow,curve]{xy}

\newcommand{\ords}{\mathsf{Ord}}

\newcommand{\nuord}[2]{\nu^{(#1)}{#2}}
\newcommand{\muord}[2]{\mu^{(#1)}{#2}}
\newcommand{\nucoalg}[2]{\nu^{(#1)}_{\mathrm{coalg}}{#2}}

\newcommand{\nuordp}[1]{\nuord{#1}{\pset}}
\newcommand{\nuordf}[1]{\nuord{#1}{F}}
\newcommand{\nucoalgf}[1]{\nucoalg{#1}{F}}

\newcommand{\nuordpc}[1]{\nuord{#1}{\psetcount}}

\newcommand{\nuordpf}[1]{\nuord{#1}{\psetfin}}
\newcommand{\nuordpk}[1]{\nuord{#1}{\psetkappa}}
\newcommand{\nucoalgpk}[1]{\nucoalg{#1}{\psetkappa}}
\newcommand{\nuconn}[2]{\nu^{({#2}\geqslant{#1})}}
\newcommand{\ddiag}[2]{D_{{#2}\geqslant{#1}}}

\newcommand{\psetsub}[1]{\pset_{{#1}}}
\newcommand{\psetkappa}{\psetsub{<\kappa}}
\newcommand{\psetlambda}{\psetsub{<\lambda}}
\newcommand{\psetpluskappa}{\pset^{+}_{<\kappa}}
\newcommand{\pred}[1]{\mathsf{pred}({#1})}
\newcommand{\predm}[2]{\mathsf{pred}^{#1}({#2})}
\newcommand{\eqdef}{\stackrel{\mbox{\rm {\tiny def}}}{=}}
\newcommand{\setbr}[1]{\{{#1}\}}
\newenvironment{spaceout}[1]{\begin{displaymath}\setlength{\extrarowheight}{3pt}\begin{array}{#1}}{\end{array}\setlength{\extrarowheight}{0pt}\end{displaymath} \noindent}

\newcommand{\set}{\mathbf{Set}}
\newcommand{\op}{^{\mathsf{op}}}
\newcommand{\betwixt }{\hspace{1em}}
\newcommand{\ccons}{\; : \;}
\newcommand{\id}{\mathsf{id}}

\begin{document}
\title{A Ghost at $\omega_1$}
\author{Paul Blain Levy}
\address{University of Birmingham}
\email{P.B.Levy@cs.bham.ac.uk}

\begin{abstract}
  In the final chain of the countable powerset functor, we show that the set at index $\omega_1$, regarded as a transition system, is not strongly extensional because it contains a ``ghost'' element that has no successor even though its component at each successor index is inhabited.  The method, adapted from a construction of Forti and Honsell, also gives ghosts at larger ordinals in the final chain of other subfunctors of the powerset functor.  This leads to a precise description of which sets in these final chains are strongly extensional.
\end{abstract}

\maketitle

\bibliographystyle{alpha}


\emph{For Jirka Ad\'{a}mek on his 70th birthday, with thanks for his many contributions and the inspiration he has given to our community.}

\section{Introduction}

Initial algebras and final coalgebras of endofunctors are important in many areas of mathematics and computer science.  One versatile way of constructing an initial algebra of $F$ is to form the \emph{initial chain}~\cite{Adamek:freealgsautomatareal}, a transfinite sequence of objects $(\muord{i}{F})_{i \in \On}$ where $\ords$ denotes the class of ordinals.  We form the initial chain by applying $F$ at each successor ordinal and taking a colimit at each limit ordinal.   If it stabilizes, it yields an initial algebra $\mu F$.  Dually~\cite{Barr:terminalcoalg}, we form the \emph{final chain}  $(\nuord{i}{F})_{i \in \On}$,  by applying $F$ at each successor ordinal and taking a limit at each limit ordinal. If it stabilizes, it yields a final coalgebra $\nu F$.

These constructions make sense for any endofunctor on any category, provided the relevant colimits or limits exist.  But certain endofunctors on $\set$ have received particular attention: the powerset functor ($\pset$) and its subfunctors, notably finite powerset ($\psetfin$) and countable powerset ($\psetcount$), which send a set to its set of finite or countable subsets respectively.  That is because these functors have many applications, e.g.\ nondeterministic automata, the semantics of nondeterministic programs and the foundations of set theory.  For cardinality reasons, the powerset functor has no initial algebra or final coalgebra; when we refer to $\mu \pset$ or $\nu \pset$, these are proper classes.  By contrast, $\psetfin$ and $\psetcount$ do have a (small) initial algebra and final coalgebra.

Now the initial chains of $\psetfin$ and $\psetcount$ are easy to understand: each is an increasing sequence of subsets of the initial algebra.  But the final chains are more subtle.  Their form was established in~\cite{AdamekLevyMiliusMossSousa:saturate,Worrell:finalseq}.
\begin{itemize}
\item $\nuordpf{\omega}$ consists of the final
  coalgebra and some extra elements.  The next $\omega$ steps in the final chain of $\psetfin$ remove
  these extra elements, and the chain stabilizes at $\omega + \omega$.
\item $\nuordpc{\omega_1}$ consists of the final
  coalgebra and some extra elements.  The next $\omega$ steps in the final chain of $\psetcount$ remove
  these extra elements, and the chain stabilizes at $\omega_{1} + \omega$.
\end{itemize}
These descriptions may give the impression that the final chains of $\psetfin$ and $\psetcount$ are essentially similar.  However, they tell only part of the story.  For $\nuord{\omega}{\psetfin}$, as well as being a \emph{superset} of $\nu\psetfin$, is also a \emph{subset} of $\nu\pset$.  This is because it can be endowed with a transition relation, setting $x \rightsquigarrow y$ when $y_n \in x_{n+1}$ for all $n \in \nats$, that makes it into a \emph{strongly extensional} transition system, i.e.\ one where bisimilarity is equality.  Specifically, Worrell~\cite{Worrell:finalseq} characterized  $\nuord{\omega}{\psetfin}$ as the set of \emph{compactly branching} elements\footnote{Elements of $\nu \pset$ are represented in~\cite{AdamekLevyMiliusMossSousa:saturate,Schwencke:coequational,Worrell:finalseq} as strongly extensional trees modulo isomorphism.  Those trees are unrelated to the trees used in this paper.} of $\nu \pset$, also studied in~\cite{Abramsky:cookfinnwfset,KupkeKurzVenema:stonecoalg}.  While all these elements are compactly branching, they are not all finitely (or even countably) branching, and the extra $\omega$ steps are required to remove those that are not, so that only the finitely branching elements remain.

Is there a similar story for $\nuord{\omega_1}{\psetcount}$?  The question is asked in~\cite{AdamekLevyMiliusMossSousa:saturate}.  This set too can be endowed with a transition relation, setting  $x \rightsquigarrow y$ when $y_i \in x_{i+1}$ for all $i < \omega_1$.  But the resulting transition system is not strongly extensional.  We demonstrate this by giving two distinct elements that are ``dead'' in the sense of having no successor.  One of these is the expected dead element: each component at a successor index is empty.  The other's components at successor index are all inhabited.  Since the latter ``appears to be alive'', yet is dead, we call it a \emph{ghost}.

Thus we see a significant difference between $\nuord{\omega}{\psetfin}$ and $\nuord{\omega_1}{\psetcount}$.  An element of $\nuord{\omega}{\psetfin}$ may lie outside the final coalgebra of $\psetfin$, but only because it is not finitely branching.   By contrast, an element of  $\nuord{\omega_1}{\psetcount}$ may lie outside the final coalgebra of $\psetcount$ despite having no successors at all.

\subsection*{Structure of Paper}

Before introducing the final chain, Section~\ref{sect:prelim} gives preliminaries on transitions systems, cardinals, and the sequence of approximants to bisimilarity.  Section~\ref{sect:finalchain} introduces the final chain of a general functor, and in particular of the powerset and restricted powerset functors.  It also introduces the notions of channel and range that help us to understand these final chains.  

The main narrative begins in Section~\ref{sect:fctrans}, which endows the final chain of $\pset$ with the structure of a transition system and describe its basic properties.  The rest of the paper is devoted to studying this system.  

In Section~\ref{sect:count} our focus is on those properties that hold specifically at \emph{countable} ordinals.  This leads us in Section~\ref{sect:ghostomegaone} to the question of whether these properties---most importantly, strong extensionality---hold at $\omega_1$.  We prove that there is a ghost and deduce that these properties do not hold. 

The ghost is obtained by adapting a method in~\cite{FortiHonsell:modselfdescript}.  The next two sections give further results that showcase the power of this method.  
\begin{itemize}
\item Section~\ref{sect:numbersucc} answers the question: how many successors may an element of the final chain of $\psetcount$ have?
\item Section~\ref{sect:beyond} looks at larger subfunctors of $\pset$ corresponding to subsets of larger cardinality, and their final chain beyond $\omega_1$.  Remarkably, we see a sharp division: at each limit ordinal $i$, either the $i$th set is strongly extensional or there is a ghost.  We thus revisit each of the properties we initially proved at countable ordinals, including surjectivity of connecting maps, and completely characterize the ordinals at which they hold.
\end{itemize}
We end with a discussion of related work in Section~\ref{sect:related} and conclusions in Section~\ref{sect:conclusions}.




\subsection*{Acknowledgements.}  I thank David Fernandez-Breton and Benedikt L\"{o}we for supplying the information in footnotes~\ref{foot:kurepa}--\ref{foot:treeprop}.

\section{Preliminaries} \label{sect:prelim}
\subsection{Pointed Systems}

A \emph{transition system} consists of a set $X$ and relation $\rightsquigarrow\, \subseteq X \times X$.  When $x \rightsquigarrow y$ we say that $y$ is a \emph{successor} of $x$.  A subset $U \subseteq X$ is a \emph{subsystem} when, for all $x \in U$, every successor of $x$ is in $U$. An \emph{embedding} of transition systems $e \colon M \to N$ is an isomorphism from $M$ to a subsystem of $N$.
  
A \emph{pointed system} consists of a transition system $M=(X,\rightsquigarrow)$ and an element $x\in X$.  
Given a family of pointed systems $((M_i,x_i))_{i \in I}$, a \emph{parent} of the family is a pointed system $(N,y)$ equipped with embeddings $(e_i \colon M_i \to N)_{i \in I}$ such that the successors of $y$ are listed without repetition as $(e_i(x_i))_{i \in I}$.  A parent may be constructed as follows: if each $M_i$ is $(X_i,\rightsquigarrow_i)$, define
\begin{eqnarray*}
  N & \eqdef & (1+\sum_{i \in I}X_i, \rightsquigarrow) \betwixt \text{with  $\rightsquigarrow$ given in the evident way}  \\ 
  y & \eqdef & \mathsf{inl}\,* \\
  e_i & \colon & z \mapsto \mathsf{inr}\,(i,z)
\end{eqnarray*}


We generalize transition systems to coalgebras.  Let $F$ be an endofunctor on $\set$, e.g.\ the powerset functor $\pset$, which sends a set $X$ to $\pset X$ and a function  $f \colon A \to B$ to $U \mapsto \setbr{f(a) \mid a \in U}$.  An \emph{$F$-coalgebra} consists of a set $X$ and map $X \to FX$.  Thus a transition system corresponds to a $\pset$-coalgebra.   An \emph{$F$-coalgebra morphism} $(X,\zeta) \to (Y,\xi)$ is a map $f \ccons X \to Y$ such that
    \begin{math}
      \xymatrix{
X \ar[d]^-{\zeta} \ar[r]_{f} & Y \ar[d]_{\xi} \\
 FX \ar[r]^-{Ff} & FY
        }
    \end{math} commutes.  For example, an embedding of transition systems corresponds to an injective $\pset$-coalgebra morphism.   A \emph{pointed $F$-coalgebra} consists of an $F$-coalgebra $M=(X,\zeta)$ together with an element $x \in X$.  

We often write a pointed system or pointed coalgebra $(M,x)$ as just $x$, leaving $M$ implicit.

\subsection{Cardinals}

As usual, we identify a cardinal with the least ordinal of that cardinality. Thus we identify $\aleph_0$ with $\omega$ and $\aleph_1$ with $\omega_1$---the least uncountable ordinal.  The special symbol $\biginfty$ is treated as greater than every cardinal and ordinal.

Let $\kappa$ be either a cardinal or $\biginfty$. 
For a set $X$, we write $\psetkappa X$ for the set of its $\lkappa$-sized subsets.  This gives a subfunctor $\psetkappa$ of $\pset$.   In particular:
\begin{eqnarray*}  \psetfin &  = & \pset_{< \aleph_0} \\
 \psetcount & = & \pset_{< \aleph_1} \\
 \pset & = & \pset_{< \subbiginfty}
\end{eqnarray*}  
We also refer to the functor $\psetpluskappa$, that sends a set $X$ to the set of $\lkappa$-sized subsets that are inhabited.

A pointed system $(M,x)$ is \emph{$\lkappa$-branching} when $x \rightsquigstar y$ implies that $y$ has $\lkappa$-many successors.  This is equivalent to $(M',x)$ being a pointed $\psetkappa$-coalgebra, for some subsystem $M'$ of $M$.

Henceforth we assume that $\kappa$ is \emph{infinite and regular}.   (In particular $\aleph_0$ is regular, as are infinite successor cardinals such as $\aleph_1$.  But $\aleph_{\omega}$, for example, is not. We deem $\biginfty$ to be regular.)  This makes $\psetkappa$ a submonad of $\pset$, and it also gives the following.
\begin{prop} \label{prop:regular}
 For a $\lkappa$-sized set $A$, any decreasing sequence of subsets $(X_i)_{i < \kappa}$ is eventually constant.  In particular, if each  $X_i$ is inhabited then so is $\bigcap_{i < \kappa} X_i$. 
\end{prop}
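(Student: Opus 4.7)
The plan is to argue by tracking when each element of $A$ leaves the sequence. Since $(X_i)_{i < \kappa}$ is decreasing, for each $a \in A$ that is absent from some $X_i$, there is a unique least index $i_a < \kappa$ such that $a \notin X_{i_a}$; and once $a$ leaves it never returns, so $a \notin X_i$ iff $i \geq i_a$. Writing $A' \subseteq A$ for the set of such eventually-removed elements, this defines a function $A' \to \kappa$, $a \mapsto i_a$.

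The key step is to bound the image. Because $|A'| \leq |A| < \kappa$, the set $S = \{i_a \mid a \in A'\} \subseteq \kappa$ has cardinality $<\kappa$, and regularity of $\kappa$ then forces $S$ to be bounded: there is some $j < \kappa$ with $i_a < j$ for all $a \in A'$. (In the case $\kappa = \biginfty$ we use instead that $S$ is a set while the ordinals are a proper class, so $S$ is bounded.) For any $i$ with $j \leq i < \kappa$, no element has exit index in the interval $[j,i]$, so $X_i = X_j$, establishing eventual constancy.

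For the second assertion, the common eventual value $X_j$ equals $\bigcap_{i < \kappa} X_i$: from $j$ onward the sequence is constant at $X_j$, and for $i < j$ we have $X_i \supseteq X_j$ by monotonicity. Hence if every $X_i$ is inhabited then so is $X_j$, and therefore so is the intersection.

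There is no real obstacle here: the proposition is the standard consequence of regularity that every $<\kappa$-indexed family of ordinals below $\kappa$ is bounded. The only care needed is to treat $\kappa = \biginfty$ uniformly with the convention that it is declared regular, and to note that the use of $|A| < \kappa$ (rather than $|A| \leq \kappa$) is essential, since $\kappa$ itself admits a strictly decreasing $\kappa$-sequence of subsets, namely the tails $\{i \mid i \geq \alpha\}$ for $\alpha < \kappa$.
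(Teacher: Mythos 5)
Your proof is correct. The paper states this proposition without proof, treating it as a standard consequence of regularity; your argument---assigning to each eventually-removed element its least exit index, noting the set of exit indices is $\lkappa$-sized and hence bounded by regularity, and taking the constant tail value as the intersection---is exactly the standard one, and your separate handling of the $\kappa = \biginfty$ case and your closing remark on why $|A| < \kappa$ (rather than $|A| \leqslant \kappa$) is needed are both accurate.
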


\subsection{Approximants to bisimilarity}

The class of pointed systems is equipped with a decreasing sequence  $(\simi{i})_{i \in \On}$ of equivalence relations, defined as follows~\cite{Malitz:phd,Milner:ccsbook}.   
\begin{itemize}
\item $x \simi{i+1} x'$ when for every $x \rightsquigarrow y$ there is $x' \rightsquigarrow y'$ such that $y \simi{i} y'$, and \emph{vice versa}.
\item If $i$ is a limit, then $x \simi{i} x'$ when for all $j < i$ we have $x \simi{j} x'$.
\item We deem $0$ a limit, so $x \simi{0} x'$ always.
\end{itemize}
We say that $x$ and $x'$ are \emph{bisimilar}, written $x \sim x'$, when for all $i \in \On$ we have $x \simi{i} x'$. In particular, for any embedding $e \colon M \to N$ we have $(N,e(x)) \sim (M,x)$.  The following is a key tool for analyzing these relations.
\begin{defi}\label{def:von}
Pointed systems $(\von{i})_{i \in \On}$ are defined as follows: $\von{i}$ is a parent of $(\von{j})_{j < i}$. 
\end{defi}
Essentially Definition~\ref{def:von} is the von Neumann encoding of the ordinals. It is useful because of the following facts.
\begin{prop} \label{prop:von}
Let $i$ and $j$ be ordinals.
\begin{enumerate}
\item \label{item:vonsimi}  For an ordinal $k$, we have $\von{i} \simi{k} \von{j}$ iff either $i=j$ or $k \leqslant i,j$.
\item \label{item:vonsim} We have $\von{i} \sim \von{j}$ iff $i=j$.
\end{enumerate}
\end{prop}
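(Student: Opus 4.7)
The plan is to prove part (1) by transfinite induction on $k$, and then to deduce part (2) as an immediate corollary. Before starting, I would note that since every embedding is an isomorphism onto a subsystem, its graph is a bisimulation; so in the parent $\von{i}$ of $(\von{j})_{j < i}$, the distinct successors of the distinguished point are indexed without repetition by $\setbr{j : j < i}$, and the $j$th successor is $\simi{k}$-equivalent (indeed bisimilar) to $\von{j}$ for every $k$. This reduces the forth/back analysis of $\simi{k+1}$ on $\von{i}$ to working directly with the family $(\von{j})_{j < i}$, without having to carry the embeddings around.

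In the induction, $k = 0$ is trivial. For the successor step $k = k'+1$, reflexivity handles $i = j$, so assume $i \neq j$, WLOG $i < j$. For $(\Leftarrow)$, assuming $k'+1 \leqslant i$: each successor $\von{i'}$ of $\von{i}$ (with $i' < i < j$) is matched by $\von{i'}$ on $\von{j}$'s side; each $\von{j'}$ of $\von{j}$ with $j' < i$ is matched symmetrically on $\von{i}$'s side; and each $\von{j'}$ with $j' \geqslant i$ is matched by $\von{k'}$ on $\von{i}$'s side, which exists because $k' < i$ and for which the induction hypothesis gives $\von{k'} \simi{k'} \von{j'}$ since both indices are $\geqslant k'$. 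For $(\Rightarrow)$ with $i \neq j$, WLOG $i < j$, the element $\von{i}$ is itself a successor of $\von{j}$, so $\von{i} \simi{k'+1} \von{j}$ provides some successor $\von{i'}$ of $\von{i}$ (hence $i' < i$, in particular $i' \neq i$) with $\von{i} \simi{k'} \von{i'}$; the induction hypothesis then yields $k' \leqslant i'$, whence $k = k'+1 \leqslant i < j$. The limit step reduces, after handling $i = j$ by reflexivity, to the ordinal fact that for a limit $k$, ``$k' \leqslant i$ for every $k' < k$'' is equivalent to $k \leqslant i$; the nontrivial direction uses that a limit has no immediate predecessor, so $i < k$ gives $k' = i+1$ as a witness.

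Part (2) then follows: one direction is reflexivity, and for $i \neq j$, setting $k = \max(i,j)+1$ in part (1) falsifies both $i = j$ and $k \leqslant i, j$, so $\von{i} \not\simi{k} \von{j}$ and hence $\von{i} \not\sim \von{j}$. I expect the main obstacle to be the $(\Leftarrow)$ half of the successor step: when matching a successor $\von{j'}$ of $\von{j}$ with $j' \geqslant i$, the naive ``same index'' strategy fails because no such successor exists on the $\von{i}$-side. The fix is to substitute $\von{k'}$, which is a valid successor of $\von{i}$ because $k' < i$ (using $k'+1 \leqslant i$), and which is $\simi{k'}$-equivalent to $\von{j'}$ by the induction hypothesis since $k' \leqslant j'$ (using $j' \geqslant i > k'$). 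This is the sole place where the hypothesis $k'+1 \leqslant i$ is genuinely used.
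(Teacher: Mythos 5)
Your proof is correct and follows exactly the route the paper indicates: part~(\ref{item:vonsimi}) by transfinite induction on $k$ (the paper leaves the details to the reader), with part~(\ref{item:vonsim}) deduced by instantiating $k = \max(i,j)+1$. The key point you identify---matching a successor $\von{j'}$ with $j' \geqslant i$ by the successor $\von{k'}$ of $\von{i}$, which is where the hypothesis $k'+1 \leqslant i$ is used---is precisely the content the paper's one-line proof elides, and your handling of it is sound.
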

\begin{proof}
  Part (\ref{item:vonsimi}) is by induction on $k$ and part (\ref{item:vonsim}) follows.
\end{proof}
\noindent The approximants to bisimilarity have the following properties.
\begin{prop}\label{prop:simkappa}\hfill
  \begin{enumerate}
\item  \label{item:simkappa}   Let $x$ be an $\lkappa$-branching pointed system and $y$ a pointed system.  Then $x \sim  y$ iff, for all $i < \kappa$, we have $x \simi{i} y$.
  \item  \label{item:simkappastrict} For $i < \kappa$, there are $\lkappa$-branching pointed systems $x$ and $y$ such that $x \simi{i} y$ but $x \not\simi{i+1} y$.
  \end{enumerate}
\end{prop}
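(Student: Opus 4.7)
For part (1), the forward direction follows immediately from $\sim \;=\; \bigcap_j \simi{j}$. For the converse I would prove by transfinite induction on $j$ that for every $\lkappa$-branching pointed system $a$ and every pointed system $b$, if $a \simi{i} b$ for all $i < \kappa$ then $a \simi{j} b$; applied to $(x,y)$ and varying $j$ over all ordinals, this gives $x \sim y$. The base case and limit cases are immediate. In the successor case $j = k+1$, I must produce for each $a \rightsquigarrow a'$ some $b \rightsquigarrow b'$ with $a' \simi{k} b'$, and symmetrically for each $b \rightsquigarrow b'$ some $a \rightsquigarrow a'$ with $a' \simi{k} b'$. Since successors of $\lkappa$-branching systems are $\lkappa$-branching, by the induction hypothesis at $k$ it suffices in each case to find $a'$ or $b'$ such that $a' \simi{i} b'$ for \emph{all} $i < \kappa$.

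The backward side is easy: given $b \rightsquigarrow b'$, the sets $S'_i = \{a' : a \rightsquigarrow a',\, a' \simi{i} b'\}$ are nonempty (by applying $a \simi{i+1} b$ to $b \rightsquigarrow b'$), decreasing, and contained in the $\lkappa$-sized set $T_a$ of successors of $a$, so Proposition~\ref{prop:regular} yields a common element. The forward side is the main obstacle: writing $T$ for the successors of $b$, the analogous family $S_i = \{b' \in T : a' \simi{i} b'\}$ is again nonempty and decreasing, but now $T$ may be arbitrarily large, so Proposition~\ref{prop:regular} does not apply to $T$ directly.

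My plan is to quotient $T$ by the equivalence $\equiv$ defined by $p \equiv q$ iff $p \simi{i} q$ for all $i < \kappa$, and establish $|T/{\equiv}| < \kappa$. This follows from an injection $T/{\equiv} \hookrightarrow T_a/{\equiv}$: for each $b' \in T$, applying the easy backward argument above to the transition $b \rightsquigarrow b'$ produces $a'' \in T_a$ with $a'' \simi{i} b'$ for all $i < \kappa$, that is $a'' \equiv b'$, and the map $[b']_\equiv \mapsto [a'']_\equiv$ is well-defined and injective by transitivity of $\equiv$. Hence $|T/{\equiv}| \leq |T_a| < \kappa$. Choosing a transversal $T^* \subseteq T$ and noting that each $S_i$ is a union of $\equiv$-classes (since $\equiv$ refines $\simi{i}$), the family $(S_i \cap T^*)_{i<\kappa}$ is decreasing and nonempty in the $\lkappa$-sized $T^*$, and Proposition~\ref{prop:regular} produces a common $b^* \in \bigcap_i S_i$, closing the induction.

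For part (2), take $x = \von{i}$ and $y = \von{i+1}$. Since $\kappa$ is an infinite cardinal, it is a limit ordinal, so $i+1 < \kappa$; every state reachable from these roots has at most $|i+1| < \kappa$ successors, so both systems are $\lkappa$-branching. Proposition~\ref{prop:von}(\ref{item:vonsimi}) then gives $\von{i} \simi{i} \von{i+1}$ (since $i \leq i, i+1$) but $\von{i} \not\simi{i+1} \von{i+1}$ (since $i \neq i+1$ and $i+1 \not\leq i$).
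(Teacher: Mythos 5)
Your proposal is correct. Part~(2) is exactly the paper's proof: take $\von{i}$ and $\von{i+1}$ and apply Proposition~\ref{prop:von}(\ref{item:vonsimi}); your additional check that these are $\lkappa$-branching (using that $\kappa$, being an infinite cardinal, is a limit ordinal, so $i+1<\kappa$) is welcome but routine. For part~(1) the paper gives no argument at all---it simply cites Barwise--Moss (Lemma~11.13), van Glabbeek, and Aceto et al.---so there is nothing in the text to compare against; what you supply is a complete, self-contained proof, essentially the standard approximation-induction argument from those references. The transfinite induction is set up correctly (quantifying over all pairs, so that the inductive hypothesis can be applied to successors), the asymmetry between the two directions is correctly identified, and the one genuinely delicate point---that the successor set of the unrestricted system $y$ may be too large for Proposition~\ref{prop:regular}---is handled soundly by your quotient $T/{\equiv}$ together with the injection into $T_a/{\equiv}$ obtained from the easy direction. (A mild variant would avoid the transversal by running the decreasing-intersection argument entirely inside $T_a$, intersecting $\setbr{a''\in T_a \mid a''\simi{i}a'}$ with the set of $a''$ that are $\equiv$-matched to some element of $T$, but your version is equally valid.)
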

\begin{proof}\leavevmode
\begin{enumerate}
\item  See~\cite{BarwiseMoss:vicious}[Lemma 11.13] following~\cite{vanglabbeek:bounded}, and also~\cite{AcetoIngolfsdottirLevySack:charform}[Theorem B.1(3)]. 
\item Take $\von{i}$ and $\von{i+1}$. \qedhere
\end{enumerate}
\end{proof}

\begin{prop} \label{prop:simkappasucc}
  Let $x$ be an $\lkappa$-branching pointed system and $y$ a pointed system.  Then $x$ has a successor $z$ such that  $z \sim y$ iff, for all $i < \kappa$, it has a successor $z$ such that $z \simi{i} y$.
\end{prop}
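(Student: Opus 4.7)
The forward direction is immediate, since any $z$ with $z \sim y$ satisfies $z \simi{i} y$ for every $i \in \On$.

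For the converse, assume that for every $i < \kappa$ there exists a successor $z$ of $x$ with $z \simi{i} y$. For each $i < \kappa$, let
\begin{displaymath}
S_i \;\eqdef\; \setbr{z \mid x \rightsquigarrow z \text{ and } z \simi{i} y}.
\end{displaymath}
By hypothesis each $S_i$ is inhabited. Since the sequence $(\simi{i})_{i \in \On}$ is decreasing, $(S_i)_{i < \kappa}$ is a decreasing sequence of subsets of the set $A$ of successors of $x$.

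The plan is now to apply Proposition~\ref{prop:regular}. Because $x$ is $\lkappa$-branching, in particular $x$ itself has fewer than $\kappa$ successors, so $A$ is of size $\lkappa$. Hence Proposition~\ref{prop:regular} guarantees that $\bigcap_{i < \kappa} S_i$ is inhabited. Pick any $z$ in this intersection.

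It remains to check $z \sim y$. By construction $z \simi{i} y$ for all $i < \kappa$, and $z$ is a successor of $x$; in particular everything reachable from $z$ is reachable from $x$, so $z$ is itself $\lkappa$-branching. Proposition~\ref{prop:simkappa}(\ref{item:simkappa}) then yields $z \sim y$, completing the proof. The only non-routine step is the application of Proposition~\ref{prop:regular}, which depends crucially on the regularity of $\kappa$ together with the $\lkappa$-branching hypothesis on $x$; without either of these, a sequence of witnesses indexed by $i < \kappa$ need not stabilize on a single successor of $x$.
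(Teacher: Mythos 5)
Your proof is correct and follows essentially the same route as the paper: both apply Proposition~\ref{prop:regular} to the decreasing family of witness sets among the $\lkappa$-many successors of $x$ to find a single successor $z$ with $z \simi{i} y$ for all $i < \kappa$, and then invoke Proposition~\ref{prop:simkappa}(\ref{item:simkappa}) using the fact that $z$ is itself $\lkappa$-branching. You have merely spelled out the details that the paper leaves implicit.
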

\begin{proof} \hfill
($\Rightarrow$) is obvious.  For ($\Leftarrow$), by Proposition~\ref{prop:regular}, $x$ has a successor $z$ such that $z \simi{j} y$ for all $j < \kappa$, i.e.\ $z \simi{k} y$.  Since $z$ is $\lkappa$-branching, Proposition~\ref{prop:simkappa}~(\ref{item:simkappa}) gives $z \sim y$. 
\end{proof}

\section{The Final Chain} \label{sect:finalchain}

\subsection{Constructing the Final Chain}

Our treatment of the final chain relies on the following.  Let $I$ be a well-ordered set.
\begin{defi}
  An \emph{inverse $I$-chain} is a functor $D \colon I^{\op} \to \set$.  Explicitly, it consists of
    \begin{itemize}
    \item for all $i \in I$, a set $D_i$, called the $i$-th
      \emph{level}
    \item for all $j \leqslant i \in I$, a \emph{connecting map}
      $\ddiag{j}{i} \ccons D_j \to D_i$
    \end{itemize}
    such that $\ddiag{i}{i} = \id_{D_i}$ for all $i \in I$, and
    \begin{math}
      \xymatrix{
        D_i \ar[r]^{\ddiag{j}{i}} \ar[dr]_{\ddiag{k}{i}} & D_{j} \ar[d]^{\ddiag{k}{j}} \\
        & D_{k} }
    \end{math}
    commutes for all $k \leqslant j \leqslant i \in I$.
\end{defi}
An inverse $I$-chain is a kind of tree~\cite{Jech:settheory,BirkedalMogelbergSchwinghammerStovring:topostrees,BizjakBirkedalMiculan:modcountnondet}.  It may be represented as a poset $A$ in which, for every $x \in A$, the set $\setbr{y \in A \mid x < y}$ is well-ordered with order-type less than that of $I$.  This intuitive picture gives rise to the following notions.
\begin{defi}
  Let $D$ be an inverse $I$-chain.
  \begin{enumerate}
  \item If $I$ has a least element $0$, then an element of $D_0$ is a \emph{root}.
\item For $j \leqslant i \in I$ and $x \in D_j$, an \emph{$i$-development} of $x$ is an element $y \in D_i$ such that
  $\ddiag{j}{i}y = x$.
 \item A \emph{full branch} of $D$ is  an indexed tuple
     $(x_i)_{i \in I}$ with $x_i \in D_i$ for all $i \in I$ and
     $\ddiag{j}{i}x_i= x_j$ for all $j \leqslant i \in I$.  (If $I$ is empty, the empty tuple $\emptytup$ is the sole full branch.)
  \end{enumerate}
\end{defi}

We now proceed to define the final chain of an endofunctor $F$ on $\set$.
\begin{defi}~\cite{Adamek:freealgsautomatareal,Barr:terminalcoalg}  \label{def:finalseq}
The \emph{final chain} of $F$ is an inverse $\On$-chain, with $i$th level written $\nuordf{i}$ and
  connecting map written
  $\nuconn{j}{i}F \ccons \nuord{i}{F} \longrightarrow \nuord{j}{F}$ or
  $\nuconn{j}{i}$ for short.  These sets and maps are given as
  follows.
  \begin{itemize}\item
    $\nuordf{i+1} = F\nuordf{i}$.
  \item $\nuconn{j+1}{i+1} = F\nuconn{j}{i}$ for $j \leqslant i$.
  \item If $i$ is a limit then $\nuordf{i}$ is the limit of $(\nuordf{j})_{j < i }$, i.e.\ the set of full branches.  For
    $j < i$, the map $\nuconn{j}{i}$ is
    $\pi_j \ccons (x_j)_{j < i} \mapsto x_j$.
  \item Since we deem 0 a limit ordinal, we have
    $\nuordf{0} = \setbr{\emptytup}$.
  \end{itemize}
\end{defi}
Thus 
\begin{spaceout}{ccccc}
  \nuconn{j+1}{i+1}\psetkappa & \ccons & a & \mapsto & \setbr{\nuconn{j}{i}b \mid b \in a} 
\end{spaceout}%

We define a full branch $(\zat{i})_{i \in \On}$ through the final chain of $\psetkappa$, as follows:
\begin{itemize}
\item $\zat{i+1} \eqdef \emptyset$.
\item If $i$ is a limit then $\zat{i} \eqdef (\zat{j})_{j < i}$.
\end{itemize}
For $i > 0$, note that $\zat{i}$ is the unique $i$-development of $\zat{1}$.

\subsection{Coalgebra projections} \label{sect:coalgproj}

Again let $F$ be an endofunctor on $\set$.
\begin{defi}
 For an $F$-coalgebra $(X,\zeta)$ and ordinal $i$, the $i$th
    \emph{coalgebra projection}, written $\pco{(X,\zeta)}{i}$ or just $\pco{}{i}$, is the map $X \to \nuordf{i}$ given as follows\footnote{In the language of~\cite{CaprettaUustaluVene:corecursive}, the pair $(\nuordf{i},\nuconn{i}{i+1})$ is a \emph{corecursive $F$-algebra}, and $\pco{(X,\zeta)}{i}$ is the unique map from $(X,\zeta)$ to it.  The connecting maps are $F$-algebra morphisms.}.
    \begin{itemize}
    \item The map $\pco{}{i+1}$ is 
    \begin{math}
      \xymatrix{
        X \ar[r]^-{\zeta} & FX\ar[r]^-{F\pco{}{i}} & F\nuordf{i} 
      } 
    \end{math}
   \item For a limit $i$, the map $\pco{}{i}$ is   $(\pco{}{j})_{j < i}$.
    \end{itemize}
\end{defi}
We may summarize as follows.
\begin{itemize}
\item Each pointed $F$-coalgebra $x$ gives rise to a full branch $(\pcop{i}{x})_{i\in \On}$ through the final chain of $F$.
\item For any $F$-coalgebra morphism $f \colon (X,\zeta) \to (Y,\xi)$ and $x \in X$, the full branches  $(\pcop{i}{x})_{i\in \On}$ and  $(\pcop{i}{f(x)})_{i\in \On}$ are equal.
\end{itemize}
For a pointed system $x$, we have $\pcop{i+1}{x} = \setbr{\pcop{i}{y} \mid x \rightsquigarrow y}$.  The coalgebra projections are related to bisimilarity and its approximants as follows.
\begin{prop} \label{prop:kernelsim}
  Let $x$ and $y$ be pointed systems.
  \begin{enumerate}
  \item \label{item:kernelsimi} For any ordinal $i$, we have $x \simi{i} y$ iff $\pcop{i}{x} = \pcop{i}{y}$. 
  \item \label{item:kernelsim} We have $x \sim y$ iff the full branches  $(\pcop{i}{x})_{i\in \On}$ and  $(\pcop{i}{y})_{i\in \On}$ are equal.
  \end{enumerate}
\end{prop}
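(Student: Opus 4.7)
The plan is to prove part (\ref{item:kernelsimi}) by transfinite induction on $i$, after which part (\ref{item:kernelsim}) falls out immediately from the definition of $\sim$ as the intersection $\bigcap_{i \in \On} {\simi{i}}$ and the definition of equality of full branches as componentwise equality.

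First I would handle the base case. Since $0$ is deemed a limit with empty set of predecessors, $\nuordp{0} = \setbr{\emptytup}$, so $\pcop{0}{x} = \emptytup = \pcop{0}{y}$ trivially, and likewise $x \simi{0} y$ trivially. Then I would handle the successor case $i+1$. Using the identity $\pcop{i+1}{x} = \setbr{\pcop{i}{x'} \mid x \rightsquigarrow x'}$ stated just before the proposition, the equation $\pcop{i+1}{x} = \pcop{i+1}{y}$ unfolds to: for every successor $x'$ of $x$ there exists a successor $y'$ of $y$ with $\pcop{i}{x'} = \pcop{i}{y'}$, and symmetrically. By the induction hypothesis at $i$, this is equivalent to the corresponding back-and-forth condition with $\simi{i}$ in place of the equation, which by definition is $x \simi{i+1} y$.

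Next I would handle the limit case. If $i$ is a limit ordinal, then by Definition~\ref{def:finalseq} the element $\pcop{i}{x}$ is the tuple $(\pcop{j}{x})_{j<i}$, so $\pcop{i}{x} = \pcop{i}{y}$ iff $\pcop{j}{x} = \pcop{j}{y}$ for every $j < i$. By the induction hypothesis this holds iff $x \simi{j} y$ for every $j < i$, which by definition is $x \simi{i} y$.

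No step presents a real obstacle; the argument is a routine transfinite induction, and the only point worth verifying carefully is that the successor-case unfolding genuinely matches the back-and-forth clause of $\simi{i+1}$, i.e.\ that set equality of the projected successor sets corresponds exactly to the existence of matching partners on both sides. For part (\ref{item:kernelsim}), $x \sim y$ means $x \simi{i} y$ for all $i \in \On$, which by part (\ref{item:kernelsimi}) is equivalent to $\pcop{i}{x} = \pcop{i}{y}$ for all $i \in \On$, i.e.\ equality of the full branches.
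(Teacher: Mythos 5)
Your proof is correct and follows exactly the route the paper takes: the paper's proof of part~(\ref{item:kernelsimi}) is simply ``by induction on $i$'' with part~(\ref{item:kernelsim}) following, and your write-up is a faithful elaboration of that transfinite induction, with the successor case correctly matching set equality of $\setbr{\pcop{i}{x'} \mid x \rightsquigarrow x'}$ against the back-and-forth clause of $\simi{i+1}$ via the induction hypothesis. Nothing to add.
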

\begin{proof} Part~(\ref{item:kernelsimi}) is by induction on $i$ and part~(\ref{item:kernelsim}) follows. \end{proof}

\begin{defi}
  An element $a \in \nuordpf{i}$ is said to be \emph{$F$-coalgebraic} when it is of the form $\pcop{i}{x}$ for a pointed $F$-coalgebra $x$.  We write $\nucoalgf{i}$ for the set of $F$-coalgebraic elements of $\nuordf{i}$.
\end{defi}
Thus $\nucoalgpk{i}$ corresponds to the class of $\lkappa$-branching pointed systems modulo $\simi{i}$.  Note that $a \in \nuordpk{i+1}$ is $\psetkappa$-coalgebraic iff all its elements are, by the parent construction.
\begin{defi}
  Let $i$ be a limit.  An element  of $\nuordf{i}$ is \emph{$F$-Cauchy} when all of its components are $F$-coalgebraic.
\end{defi}
Thus the set of $F$-Cauchy elements is $\lim_{j<i} \nucoalgf{j}$.  Note that $F$-coalgebraic implies $F$-Cauchy, but the converse need not be the case.  (For example, for $F \colon X \mapsto 0$, the element $\emptytup \in \nuordf{0}$ is $F$-Cauchy but not $F$-coalgebraic.)  As explained in Section~\ref{sect:cauchyco}, the ``Cauchy'' terminology comes from the notion of Cauchy sequence.


\subsection{Channels and Ranges} \label{sect:channel}

The following notions are key to understanding the final chain of $\psetkappa$.
\begin{defi}
  Let $I$ be a well-ordered set, and $D$ an inverse $I$-chain.
  \begin{enumerate}
  \item A \emph{channel} through $D$ consists of a subset $C_i \subseteq D_i$ for all $i\in I$, such that for all $j \leqslant i \in I$
  \begin{itemize}
  \item if $x \in C_i$ then $\ddiag{j}{i} x \in C_j$
  \item every $y \in C_j$ has an $i$-development in $C_i$.
  \end{itemize}
  \item For any set $U$ of full branches of $D$, the \emph{range} of $U$ is the channel whose $i$th level is $\setbr{x_{i} \mid x \in U}$.
  \end{enumerate}
\end{defi}
Note the following:
\begin{itemize}
\item Any channel through $D$ with a root has all levels inhabited.
\item $\lim_{i \in I} \psetkappa D_i$ is the set of channels through $D$ with all levels $\lkappa$-sized.
\item The function $(\psetkappa \pi_i)_{i \in I} \colon \psetkappa \lim_{i \in I}D_i \to \lim_{i \in I} \psetkappa D_i$ sends a $\lkappa$-sized set of full branches to its range.
\end{itemize}

For general $F$, we have the following.
\begin{prop} \label{prop:psi}
Let $i$ be an ordinal.
  \begin{enumerate}
  \item The map $\Psi_i \eqdef (\nuconn{j}{i})_{j < i} \colon \nuordf{i} \cong  \lim_{j < i} F \nuordf{j}$ is an isomorphism (bijection).
  \item \label{item:psicomm} If $i$ is a limit, the following commutes:
\begin{displaymath}
    \xymatrix{
 \nuordf{i} & = & \lim_{j < i} \nuordf{j} \ar[d]_{\Psi_i}^{\cong} & & F \lim_{j < i} \nuordf{j}  \ar[ll]_{\nuconn{i}{i+1}} \ar[dll]^{(F\pi_{j})_{j < i}} & = & \nuordf{i+1} \\
& &  \lim_{j < i} F \nuordf{j}
}
  \end{displaymath}
  \end{enumerate}
\end{prop}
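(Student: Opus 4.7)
The plan is to prove part~(1) by case analysis on whether $i$ is zero, a successor, or a positive limit, and then deduce part~(2) by a direct computation using functoriality of the chain. No transfinite induction is needed: each case is self-contained, relying only on the definition of $\nuord{i}{F}$ at that kind of ordinal. Throughout, I read the family $\Psi_i = (\nuconn{j}{i})_{j<i}$ under the identification $\nuordf{j+1} = F\nuordf{j}$, so that the $j$-th component is really the map $\nuconn{j+1}{i}\ccons \nuordf{i} \to F\nuordf{j}$.

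For part~(1), if $i = 0$ then $\nuordf{0} = \setbr{\emptytup}$ and $\lim_{j<0}F\nuordf{j}$ is the empty limit, which is also a one-element set, so $\Psi_0$ is trivially a bijection. If $i = i'+1$ is a successor, the index set $\setbr{j : j < i}$ has maximum element $i'$, so the canonical projection $\lim_{j \leqslant i'} F\nuordf{j} \to F\nuordf{i'} = \nuordf{i'+1}$ is a bijection whose inverse sends $a \in F\nuordf{i'}$ to the cone $(F\nuconn{k}{i'}(a))_{k \leqslant i'}$. The top component of $\Psi_{i'+1}$ is $\nuconn{i'+1}{i'+1} = F\nuconn{i'}{i'} = F\id = \id$, so $\Psi_{i'+1}$ agrees with the inverse of this top projection, hence is a bijection. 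Finally, if $i$ is a nonzero limit then by definition $\nuordf{i} = \lim_{j<i}\nuordf{j}$; since the successor ordinals below $i$ are cofinal, the reindexing $(x_j)_{j<i} \mapsto (x_{j+1})_{j<i}$ is a bijection onto $\lim_{j<i}\nuordf{j+1} = \lim_{j<i}F\nuordf{j}$, and since $\nuconn{j+1}{i} = \pi_{j+1}$ in this case, the reindexing is exactly $\Psi_i$.

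For part~(2), fix a limit $i$ and take $a \in F\nuordf{i} = \nuordf{i+1}$. I need to show that $\Psi_i(\nuconn{i}{i+1}(a))$ equals $(F\pi_j(a))_{j<i}$. Computing the $j$-th component of the left-hand side: it is $\nuconn{j+1}{i}(\nuconn{i}{i+1}(a)) = \nuconn{j+1}{i+1}(a)$ by functoriality of the chain, which equals $F\nuconn{j}{i}(a) = F\pi_j(a)$ by the definition of the connecting map and the fact that $\nuconn{j}{i} = \pi_j$ when $i$ is a limit. This is precisely the $j$-th component of $(F\pi_j)_{j<i}(a)$, so the triangle commutes. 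The principal obstacle is purely bookkeeping: one must be careful that the notation $(\nuconn{j}{i})_{j<i}$ is compatible with the target $\lim_{j<i}F\nuordf{j}$ via the identification $\nuordf{j+1} = F\nuordf{j}$, after which each case of part~(1) and the verification of part~(2) reduce to routine unfolding.
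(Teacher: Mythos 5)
Your proof is correct. The paper states Proposition~\ref{prop:psi} without any proof, treating it as a routine unfolding of Definition~\ref{def:finalseq}, and your case analysis (empty limit at $0$, limit over an index set with a maximum at successors, cofinality of successor ordinals at limits, plus the composition law $\nuconn{j+1}{i}\circ\nuconn{i}{i+1}=\nuconn{j+1}{i+1}=F\nuconn{j}{i}$ for part~(2)) is exactly the verification the author leaves implicit; your explicit handling of the indexing convention, reading the $j$-th component of $\Psi_i$ as $\nuconn{j+1}{i}\colon\nuordf{i}\to F\nuordf{j}$, is the right way to make the stated family typecheck against the codomain $\lim_{j<i}F\nuordf{j}$.
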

Consider the case $F = \psetkappa$.  For a limit $i$, the elements of $\nuordpk{i}$ are, by definition, the \emph{full branches} of the inverse chain $(\nuordpk{j})_{j < i}$.  But they correspond via $\Psi_i$ to the \emph{channels} through the same inverse chain.  We repeatedly move between these two viewpoints---full branches and channels---in this paper.  For example, the map $\nuconn{i}{i+1}$ is hard to grasp directly, but by Proposition~\ref{prop:psi}(\ref{item:psicomm}) it corresponds to $(\psetkappa\pi_j)_{j < i}$, which sends a set of full branches to its range.  

\subsection{Injectivity of Connecting Maps} \label{sect:stable}

A key question in the study of final chains (as with initial chains) is stabilization~\cite{TrnkovaAdamekKoubekReiterman:free,AdamekKoubek:terminal,AdamekTrnkova:manysorted,Worrell:finalseq,AdamekLevyMiliusMossSousa:saturate,AdamekPalm:steps}.
\begin{prop}  \label{prop:ifstable}
 Suppose the final chain of $F$ is \emph{stable} at $i$, i.e.\ $\nuconn{i}{i+1}$ is bijective.
  \begin{itemize}
  \item $\nuconn{j}{k}$ is bijective for all $k \geqslant j \geqslant i$.
  \item \label{item:stabfin} The $F$-coalgebra $(\nuordf{i}, (\nuconn{i}{i+1})^{-1})$ is final, with the unique coalgebra map from any coalgebra $M$ given by $p^{M}_i$.
  \item The $j$th projection from this final coalgebra is given by $\nuconn{j}{i}$ for $j \leqslant i$, and by $(\nuconn{i}{j})^{-1}$ for $j \geqslant i$.
    \qedhere
  \end{itemize}
\end{prop}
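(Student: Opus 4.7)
I would prove the three bullets in order, deferring the second until after the third, since finality will leverage the explicit formula for the projections.

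\emph{First bullet.} I would proceed by transfinite induction on $k \geq i$ to show that $\nuconn{i}{k}$ is a bijection; the general claim then follows from the factorization $\nuconn{j}{k} = (\nuconn{i}{j})^{-1} \circ \nuconn{i}{k}$. The base $k = i$ is the identity. At a successor $k+1$ with $k \geq i$, I would use $\nuconn{i}{k+1} = \nuconn{i}{i+1} \circ \nuconn{i+1}{k+1}$, where the first factor is bijective by hypothesis and the second equals $F\nuconn{i}{k}$, bijective by the induction hypothesis together with the fact that every endofunctor on $\set$ preserves bijections. At a limit $k > i$, the subchain $(\nuordf{j})_{i \leq j < k}$ consists entirely of bijections by the induction hypothesis and is cofinal in $(\nuordf{j})_{j<k}$, so $\nuordf{k}$ is its inverse limit, reached from $\nuordf{i}$ by the bijection $\nuconn{i}{k}$.

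\emph{Third bullet.} Writing $M \eqdef (\nuordf{i}, (\nuconn{i}{i+1})^{-1})$, I would prove by transfinite induction on $j$ that $\pco{M}{j} = \nuconn{j}{i}$ when $j \leq i$ and $\pco{M}{j} = (\nuconn{i}{j})^{-1}$ when $j \geq i$. The case $j = 0$ is forced since $\nuordf{0} = \setbr{\emptytup}$. At a successor $j+1$, I would use $\pco{M}{j+1} = F\pco{M}{j} \circ (\nuconn{i}{i+1})^{-1}$ together with the definitional identity $F\nuconn{j}{i} = \nuconn{j+1}{i+1}$: when $j+1 \leq i$ this combines with the chain relation $\nuconn{j+1}{i+1} = \nuconn{j+1}{i} \circ \nuconn{i}{i+1}$ to give $\nuconn{j+1}{i}$, when $j = i$ it collapses to $(\nuconn{i}{i+1})^{-1}$ itself, and when $j > i$ the analogous calculation using $F(\nuconn{i}{j})^{-1} = (\nuconn{i+1}{j+1})^{-1}$ yields $(\nuconn{i}{j+1})^{-1}$. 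At a limit $j$, I would verify that the tuple $(\pco{M}{j'})_{j'<j}$ determined by the induction hypothesis matches $\nuconn{j}{i}$ or $(\nuconn{i}{j})^{-1}$ componentwise, using $\nuconn{j'}{j} \circ \nuconn{j}{i} = \nuconn{j'}{i}$ in the first regime and, in the second regime, the fact that $(\nuconn{i}{j})^{-1}(x)$ is the unique full branch with $i$th component $x$, whose $j'$th component is necessarily $\nuconn{j'}{i}(x)$ for $j' \leq i$ and $(\nuconn{i}{j'})^{-1}(x)$ for $i \leq j' < j$.

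\emph{Second bullet.} Since $\nuordf{i+1} = F\nuordf{i}$, the inverse $(\nuconn{i}{i+1})^{-1}$ is a bona fide $F$-coalgebra structure on $\nuordf{i}$. For any coalgebra $(X,\zeta)$, the map $\pco{(X,\zeta)}{i}$ is a coalgebra morphism to $M$: the required square $(\nuconn{i}{i+1})^{-1} \circ \pco{(X,\zeta)}{i} = F\pco{(X,\zeta)}{i} \circ \zeta$ is equivalent, after composing with $\nuconn{i}{i+1}$, to $\pco{(X,\zeta)}{i} = \nuconn{i}{i+1} \circ \pco{(X,\zeta)}{i+1}$, which is the full-branch property recorded in Section~\ref{sect:coalgproj} together with the successor clause in the definition of $\pco{}{i+1}$. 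For uniqueness, any coalgebra morphism $f \colon (X,\zeta) \to M$ satisfies $\pco{(X,\zeta)}{j} = \pco{M}{j} \circ f$ for every $j$ (this is the naturality stated just after Definition of $\pco{}{i}$); at $j = i$ the third bullet gives $\pco{M}{i} = \nuconn{i}{i} = \id_{\nuordf{i}}$, so $f = \pco{(X,\zeta)}{i}$.

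The main obstacle I anticipate is the limit step in the third bullet, and particularly the regime $j > i$: one must check that the indexed tuple built from the inductive formulas for $\pco{M}{j'}$, which take two different shapes on either side of $j' = i$, really coincides with the inverse of $\nuconn{i}{j}$. This is where the functoriality identities for the final chain and the invertibility established in the first bullet must be deployed in tandem; once this bookkeeping is done, the rest of the argument is essentially formal.
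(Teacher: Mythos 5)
Your proof is correct. The paper itself states Proposition~\ref{prop:ifstable} without any proof, treating it as a standard fact about stabilization of final chains (cf.\ the references cited in Section~\ref{sect:stable}), so there is no argument in the paper to compare against; your write-up is the standard argument, and all three steps check out --- the transfinite induction for bijectivity (with the cofinal-subchain observation at limits), the two-regime computation of $\pco{M}{j}$, and the derivation of finality from the full-branch property and the naturality of coalgebra projections recorded in Section~\ref{sect:coalgproj}.
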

By Proposition~\ref{prop:psi}(\ref{item:psicomm}), for any limit $i$, if $F$ preserves limits of inverse $i$-chains then its final chain  is stable at $i$.  This fact is useful for  a \emph{polynomial} functor $F$, i.e.\ one of the form $X \mapsto \sum_{i \in I} X^{A_i}$.  Since it preserves limits of connected diagrams, its final chain is stable at $\omega$~\cite{Barr:terminalcoalg}.  
However, $\psetkappa$ requires a more subtle analysis, given in~\cite{Worrell:finalseq}, that we now reprise.

\begin{prop} \label{prop:allinj}
Let $F$ be an endofunctor on $\set$ that preserves injections and intersections (such as $\psetkappa$ or $\psetpluskappa$). Suppose  $\nuconn{i}{i+1}F$ is an injection.  Then  $\nuconn{j}{k}F$ is an injection for all  $k \geqslant j \geqslant i$, and the final chain is stable at $i+\omega$.
 \end{prop}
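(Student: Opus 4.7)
The plan is to propagate injectivity of the successor connecting maps from $i$ up to $i+\omega$, then use intersection preservation at the limit to obtain stability at $i+\omega$, and finally combine with Proposition~\ref{prop:ifstable} to cover all pairs $k \geq j \geq i$.

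First, by induction on $n < \omega$, I would show that $\nuconn{i+n}{i+n+1}F$ is an injection: the base case is the hypothesis, and the inductive step uses $\nuconn{i+n+1}{i+n+2}F = F\nuconn{i+n}{i+n+1}F$ together with preservation of injections by $F$. Composing these gives injectivity of $\nuconn{i}{i+n}F$ for every $n < \omega$. Under these injections I would identify each $\nuordf{i+n}$ with its image in $\nuordf{i}$, producing a descending chain of subsets $\nuordf{i} \supseteq \nuordf{i+1} \supseteq \cdots$. Since the limit of a chain of injections is the intersection of images, $\nuordf{i+\omega}$ is identified with $\bigcap_{n<\omega} \nuordf{i+n}$ inside $\nuordf{i}$, and consequently every projection $\nuconn{i+n}{i+\omega}F$ is also injective.

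For stability at $i+\omega$, Proposition~\ref{prop:psi}(\ref{item:psicomm}) reduces the task to showing that the canonical map $F\nuordf{i+\omega} \to \lim_{j<i+\omega} F\nuordf{j}$ is a bijection; restricting to the cofinal tail $\{i+n : n < \omega\}$, it suffices to treat $(F\pi_{i+n})_{n<\omega}$. Applying $F$ to the inclusions obtained above and invoking preservation of injections identifies each $F\nuordf{i+n} = \nuordf{i+n+1}$ with a subset of $F\nuordf{i} = \nuordf{i+1}$, so $\lim_n F\nuordf{i+n}$ becomes $\bigcap_n \nuordf{i+n+1}$ inside $\nuordf{i+1}$. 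By preservation of intersections, $F(\bigcap_n \nuordf{i+n}) = \bigcap_n F\nuordf{i+n}$, i.e.\ $F\nuordf{i+\omega}$ equals that same intersection as a subset of $\nuordf{i+1}$. The canonical map is therefore the identity on this common subset, hence bijective, so the chain stabilizes at $i+\omega$.

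To finish, Proposition~\ref{prop:ifstable} yields that $\nuconn{j}{k}F$ is bijective whenever $k \geq j \geq i+\omega$. For $i \leq j \leq k \leq i+\omega$, injectivity was already obtained above (with the limit projections included at $k = i+\omega$). Mixed pairs factor as $\nuconn{j}{k}F = \nuconn{j}{i+\omega}F \circ \nuconn{i+\omega}{k}F$ and are injective because both factors are. The main obstacle I anticipate is the limit step: one must carefully track two parallel identifications of each $\nuordf{i+n}$, as a subset of $\nuordf{i}$ and as a subset of $\nuordf{i+1}$, and verify that the canonical map really does become the identity on the common intersection. This is exactly where preservation of injections (to obtain these subset identifications at all) and preservation of intersections (to commute $F$ past $\bigcap$) are both indispensable.
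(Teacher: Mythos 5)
Your proof is correct. The paper itself gives no proof of this proposition (it is reprised from Worrell's paper on the final sequence of a finitary set functor), and your argument — propagating injectivity forward by applying $F$, identifying the limit at $i+\omega$ with the intersection $\bigcap_{n<\omega}\nuordf{i+n}$, and using preservation of intersections together with Proposition~\ref{prop:psi}(\ref{item:psicomm}) to get stability — is exactly the standard argument from that reference, including the correct handling of the remaining cases via Proposition~\ref{prop:ifstable} and the factorization $\nuconn{j}{k} = \nuconn{j}{i+\omega}\circ\nuconn{i+\omega}{k}$.
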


A functor $F$ is said to preserve the limit of an inverse $I$-chain $D$ up to an injection (surjection) when the map 
\begin{displaymath}
(F\pi_i)_{i \in I} \colon  F\lim_{i \in I} D_i \to \lim_{i \in I} FD_i
\end{displaymath}
is injective (surjective).

\begin{prop}~\cite{Worrell:finalseq} \label{prop:rangeinj} Let $\kappa <\biginfty$.  Then $\psetkappa$ and $\psetpluskappa$ preserve, up to an injection, the limits of any inverse $\kappa$-chain $D$.   Explicitly: if  $U$ and $V$ are $\lkappa$-sized sets of full branches through $D$ and have the same range, then $U=V$.
\end{prop}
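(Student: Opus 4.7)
The plan is to prove the injectivity statement for $\psetkappa$ directly; the case of $\psetpluskappa$ then follows since $\psetpluskappa X \subseteq \psetkappa X$ makes the comparison map for $\psetpluskappa$ a restriction of that for $\psetkappa$, inheriting injectivity. So suppose $U, V \subseteq \lim_{i<\kappa} D_i$ are $\lkappa$-sized sets of full branches with the same range at every level $i<\kappa$. By symmetry it suffices to show $U \subseteq V$. Pick $x \in U$ arbitrarily; the goal is to produce $y \in V$ with $y = x$, i.e., $y_i = x_i$ for all $i < \kappa$.

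The key construction is, for each $i < \kappa$, the set
\begin{displaymath}
V_i \eqdef \setbr{y \in V \mid y_i = x_i}.
\end{displaymath}
First I would observe that each $V_i$ is inhabited: since $x_i \in D_i$ lies in the $i$th level of the range of $U$, and this coincides with the $i$th level of the range of $V$, there exists some $y \in V$ with $y_i = x_i$. Next I would observe that the family $(V_i)_{i<\kappa}$ is decreasing: if $j \leqslant i < \kappa$ and $y \in V_i$, then $y_j = \ddiag{j}{i} y_i = \ddiag{j}{i} x_i = x_j$, so $y \in V_j$.

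Now comes the essential use of regularity. The sets $V_i$ are all subsets of the $\lkappa$-sized set $V$, so by Proposition~\ref{prop:regular} the decreasing sequence $(V_i)_{i<\kappa}$ is eventually constant, and since each $V_i$ is inhabited the intersection $\bigcap_{i<\kappa} V_i$ is inhabited. Any $y$ in this intersection satisfies $y_i = x_i$ for every $i < \kappa$, and therefore $y = x$ as full branches. Thus $x = y \in V$, completing the inclusion $U \subseteq V$.

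I do not anticipate a serious obstacle: the proof is a clean application of the fact that $V$ has cardinality $\lkappa$ together with regularity of $\kappa$, which is precisely what lets a $\kappa$-indexed decreasing sequence of subsets stabilize. The only place one must be careful is in noting that the argument really does require the chain length $\kappa$ and the branch-set cardinality $\lkappa$ to be matched in this way — if $U$ or $V$ were allowed to be $\kappa$-sized, the intersection step could fail, which is exactly why the analogous preservation result breaks down for $\pset = \psetsub{<\subbiginfty}$ and motivates the hypothesis $\kappa < \biginfty$.
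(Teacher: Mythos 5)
Your proof is correct and is essentially identical to the paper's: both fix an element of $U$, form the decreasing family of subsets of $V$ agreeing with it at each level $i<\kappa$, and invoke Proposition~\ref{prop:regular} to find an element of $V$ in the intersection, which must equal the chosen branch. You simply spell out the details (inhabitedness of each $V_i$ via the shared range, and the decreasing property) that the paper leaves implicit.
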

\begin{proof}
  Let $a \in U$. For $i<\kappa$, let $V(i)$ be the set of $b \in V$ such that $b_i = a_i$.   Since $V$ is $\lkappa$-sized, $\bigcap_{i < \kappa}V(i)$ is inhabited by Proposition~\ref{prop:regular}, so $a \in V$.
\end{proof}
Thus $\nuconn{\kappa}{\kappa+1}(\psetkappa)$ is injective and the final chain of $\psetkappa$ is stable at $\kappa + \omega$.  As proved in~\cite{AdamekLevyMiliusMossSousa:saturate} and reprised below (Proposition~\ref{prop:noearlier}), it is not stable at any smaller ordinal.

We can now say precisely which connecting maps are injective.
\begin{prop}  \label{prop:injective}
  For ordinals $j \leqslant i$, the following are equivalent.
\begin{enumerate}
\item \label{item:injectivep} The connecting map $\nuconn{j}{i}\psetkappa$ is an injection.
\item \label{item:injectivec} Either $j=i$ or $j \geqslant \kappa$ (for $\kappa < \biginfty$).
\end{enumerate}
\end{prop}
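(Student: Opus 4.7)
The plan is to treat the two implications separately. The direction (\ref{item:injectivec})$\Rightarrow$(\ref{item:injectivep}) will follow immediately from results already established. If $j=i$ then $\nuconn{j}{i}\psetkappa$ is the identity, hence injective. If $j\geqslant\kappa$ (and necessarily $\kappa<\biginfty$), then the discussion just after Proposition~\ref{prop:rangeinj} tells us that $\nuconn{\kappa}{\kappa+1}\psetkappa$ is an injection; Proposition~\ref{prop:allinj} applied at the ordinal $\kappa$ then gives injectivity of $\nuconn{j'}{k'}\psetkappa$ for every $k'\geqslant j'\geqslant\kappa$, covering our $j\leqslant i$.

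For the converse (\ref{item:injectivep})$\Rightarrow$(\ref{item:injectivec}) I plan to argue by contrapositive: assuming $j<i$ and $j<\kappa$, I will exhibit two distinct elements of $\nuordpk{i}$ that $\nuconn{j}{i}\psetkappa$ equates. The key input is Proposition~\ref{prop:simkappa}(\ref{item:simkappastrict}), which furnishes $\lkappa$-branching pointed systems $x,y$ with $x\simi{j} y$ but $x\not\simi{j+1} y$. Proposition~\ref{prop:kernelsim}(\ref{item:kernelsimi}) converts this into $\pcop{j}{x}=\pcop{j}{y}$ and $\pcop{j+1}{x}\neq\pcop{j+1}{y}$; since the approximants $\simi{k}$ are decreasing in $k$ and $j+1\leqslant i$, the latter strengthens to $\pcop{i}{x}\neq\pcop{i}{y}$. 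Both of these lie in $\nuordpk{i}$ (they are $\psetkappa$-coalgebraic), and the commuting identity $\nuconn{j}{i}\circ\pcop{i}{(-)}=\pcop{j}{(-)}$ built into the definition of the coalgebra projections shows that they collapse to a common image under $\nuconn{j}{i}\psetkappa$, defeating injectivity.

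I do not anticipate a serious obstacle; both directions slot cleanly into earlier machinery. The only subtlety worth double-checking is that the witnesses can indeed be taken $\lkappa$-branching for every $j<\kappa$. This is built into the statement of Proposition~\ref{prop:simkappa}(\ref{item:simkappastrict}), whose proof uses $\von{j}$ and $\von{j+1}$; these are $\lkappa$-branching because $\kappa$, being an infinite cardinal or $\biginfty$, is a limit ordinal, so $j<\kappa$ forces $j+1<\kappa$ and each descendant of $\von{j+1}$ has fewer than $\kappa$ successors.
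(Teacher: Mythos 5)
Your proposal is correct and follows essentially the same route as the paper: the forward direction via Propositions~\ref{prop:rangeinj} and~\ref{prop:allinj}, and the converse via the strictness witnesses of Proposition~\ref{prop:simkappa}(\ref{item:simkappastrict}). The only difference is that the paper tersely records non-injectivity of $\nuconn{j}{j+1}\psetkappa$ and leaves the passage to general $i>j$ implicit, whereas you spell out the lift to level $i$ via the coalgebra projections $\pcop{i}{x},\pcop{i}{y}$ and the full-branch identity $\nuconn{j}{i}\pcop{i}{(-)}=\pcop{j}{(-)}$ --- a worthwhile clarification, but not a different argument.
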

\begin{proof}
(\ref{item:injectivec})$\Rightarrow$(\ref{item:injectivep}) follows from Propositions~\ref{prop:rangeinj} and~\ref{prop:allinj}.  For the converse, if $j<\kappa$, then Proposition~\ref{prop:simkappa}(\ref{item:simkappastrict}) shows that $\nuconn{j}{j+1} \psetkappa$ is not injective.
\end{proof}

\subsection{Surjectivity of connecting maps}

\begin{prop} \label{prop:projconnect} 
For any ordinal $i$, the following are equivalent.
  \begin{enumerate}
  \item \label{item:succsurj} The map $\nuconn{i}{i+1}F$ is surjective.
  \item \label{item:allsuccsurj} For all $h \geqslant i$, the map $\nuconn{i}{h}F$ is surjective.
\item \label{item:somecorange} All elements of $\nuordf{i}$ are $F$-coalgebraic.
 \end{enumerate}
\end{prop}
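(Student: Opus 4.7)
The plan is to close a cycle of implications: $(\ref{item:allsuccsurj}) \Rightarrow (\ref{item:succsurj}) \Rightarrow (\ref{item:somecorange}) \Rightarrow (\ref{item:allsuccsurj})$. The first implication falls out immediately on taking $h = i+1$.

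For $(\ref{item:somecorange}) \Rightarrow (\ref{item:allsuccsurj})$, I would fix $h \geqslant i$ and an arbitrary $a \in \nuordf{i}$, then use hypothesis $(\ref{item:somecorange})$ to write $a = \pcop{i}{x}$ for some pointed $F$-coalgebra $x$. The summary at the start of Section~\ref{sect:coalgproj} tells us that $(\pcop{j}{x})_{j \in \On}$ is a full branch through the final chain, so $\nuconn{i}{h}(\pcop{h}{x}) = \pcop{i}{x} = a$, exhibiting $a$ in the image of $\nuconn{i}{h}$.

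The crux of the argument is $(\ref{item:succsurj}) \Rightarrow (\ref{item:somecorange})$. Here my strategy is to equip $\nuordf{i}$ itself with an $F$-coalgebra structure whose $i$th coalgebra projection is the identity. Invoking the axiom of choice, I pick a section $s \colon \nuordf{i} \to F\nuordf{i} = \nuordf{i+1}$ of the surjection $\nuconn{i}{i+1}$, and form the coalgebra $(\nuordf{i}, s)$. The main claim is that, for every $j \leqslant i$, the coalgebra projection $\pco{(\nuordf{i}, s)}{j}$ coincides with the connecting map $\nuconn{j}{i}$. I prove this by transfinite induction on $j$: the base case ($j=0$) is trivial by uniqueness of maps into $\nuordf{0}$, and the limit case follows because the two sides agree componentwise by the inductive hypothesis. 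At a successor $j+1 \leqslant i$, unfolding the definition and applying the inductive hypothesis yields $\pco{(\nuordf{i}, s)}{j+1} = F\nuconn{j}{i} \circ s = \nuconn{j+1}{i+1} \circ s$; then the chain law $\nuconn{j+1}{i+1} = \nuconn{j+1}{i} \circ \nuconn{i}{i+1}$ combined with $\nuconn{i}{i+1} \circ s = \id$ collapses this to $\nuconn{j+1}{i}$. Setting $j = i$ gives $\pco{(\nuordf{i}, s)}{i} = \id_{\nuordf{i}}$, so every $a \in \nuordf{i}$ is witnessed as $F$-coalgebraic by the pointed coalgebra $((\nuordf{i}, s), a)$.

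The principal obstacle is orchestrating the successor step of this transfinite induction, where one must simultaneously juggle the defining equation of the coalgebra projection, the definition of the connecting map at a successor index ($\nuconn{j+1}{i+1} = F\nuconn{j}{i}$), the chain law, and the section property of $s$. Once this identification is in hand, the full equivalence follows by routine bookkeeping.
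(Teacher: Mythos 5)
Your proposal is correct and follows essentially the same route as the paper: the implications $(\ref{item:somecorange})\Rightarrow(\ref{item:allsuccsurj})\Rightarrow(\ref{item:succsurj})$ are dispatched as routine, and the substantive implication $(\ref{item:succsurj})\Rightarrow(\ref{item:somecorange})$ is proved by choosing a section of $\nuconn{i}{i+1}$, regarding $\nuordf{i}$ as a coalgebra, and showing by transfinite induction that its $j$th coalgebra projection equals $\nuconn{j}{i}$ for all $j\leqslant i$. Your algebraic unwinding of the successor step is just the equational form of the commuting diagram in the paper's proof.
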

\begin{proof}
  \begin{description}\item[(\ref{item:somecorange}) $\Rightarrow$ (\ref{item:allsuccsurj}) $\Rightarrow$ (\ref{item:succsurj})] Trivial.
\item[(\ref{item:succsurj}) $\Rightarrow$ (\ref{item:somecorange})] By the Axiom of Choice,  $\nuconn{i}{i+1}$ has a section $\zeta \ccons \nuordf{i} \to \nuordf{i+1}$.  Put $M =
    (\nuordf{i},\zeta)$. For all $j \leqslant i$, we show $\pco{M}{j} =
    \nuconn{j}{i}$ by induction on $j$.  The limit case is trivial.  For the successor case, $\pco{M}{j} = \nuconn{j}{i}$
      implies $F\pco{M}{j} = F\nuconn{j}{i} = \nuconn{j+1}{i+1}$ and hence
      \begin{displaymath}
        \xymatrix{
          \nuordf{i} \ar@/^1pc/[drr]^-{\pco{M}{j+1}} \ar[dr]_{\zeta} \ar@/_{1pc
}/[dd]_{\id} & & \\
            & \nuordf{i+1} \ar[r]^-{F\pco{M}{j}} \ar[dl]_{\nuconn{i}{i+1}} & \nuordf{j+1} \\
            \nuordf{i} \ar@/_{1pc}/[urr]_-{\nuconn{j+1}{i}} & & 
          }
        \end{displaymath}
commutes.  We conclude that $\pco{M}{i}$ is the identity. \qedhere
      \end{description}
    \end{proof}
\begin{prop} \label{prop:pressurj}
       Let $\kappa > \aleph_0$.  Then  $\psetkappa$ and $\psetpluskappa$ preserve, up to a surjection, the limit of any inverse $\omega$-chain $D$.  Explicitly: any channel $C$ through $D$ with all levels $\lkappa$-sized is the range of some $\lkappa$-sized set of full branches.
      \end{prop}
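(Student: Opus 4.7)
The plan is to construct, for each $i \in \omega$ and each $c \in C_i$, a full branch $x^{(c,i)}$ of $D$ that passes through $c$ at index $i$ and remains inside the channel $C$ at every level, and then to let $U$ be the collection of all such branches.

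To build $x^{(c,i)}$, I would proceed as follows. At indices $j \leqslant i$, put $x^{(c,i)}_j \eqdef \ddiag{j}{i}(c)$, which lies in $C_j$ by the first clause of the channel definition. At indices $j > i$, use the Axiom of Choice to select, recursively, $x^{(c,i)}_{j+1}$ as any $(j+1)$-development of $x^{(c,i)}_j$ lying in $C_{j+1}$; such a development exists by the second clause. The compatibility conditions $\ddiag{k}{j}\, x^{(c,i)}_j = x^{(c,i)}_k$ then hold by construction, so this yields a genuine full branch of $D$ contained in $C$.

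I would then set $U \eqdef \setbr{x^{(c,i)} \mid i \in \omega,\, c \in C_i}$. By construction, for each $c \in C_i$ the branch $x^{(c,i)} \in U$ has $i$th projection $c$, so $C_i \subseteq \setbr{x_i \mid x \in U}$; the reverse inclusion holds because every branch in $U$ stays in $C$. Hence $\mathrm{range}(U) = C$.

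It remains to bound the size of $U$. We have $|U| \leqslant \sum_{i \in \omega} |C_i|$, a countable sum of sets of size $\lkappa$. Since $\kappa > \aleph_0$ is regular, any such sum has size $\lkappa$, so $U$ is $\lkappa$-sized as required. For the $\psetpluskappa$ variant, one additionally assumes each $C_i$ is inhabited, whence $U$ is inhabited as well. The only subtle point is this last cardinality bound: the hypothesis that $\kappa > \aleph_0$ is regular is exactly what is needed to close $\lkappa$-sized sets under countable union, and is precisely the reason the analogous statement fails for $\psetfin$.
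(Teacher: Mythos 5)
Your proposal is correct and follows essentially the same route as the paper: extend each element of each level $C_n$ to a full branch through the channel by dependent choice, collect these branches into a set $U$ using the Axiom of Choice, and bound $|U|$ by a countable union of $\lkappa$-sized sets, which is $\lkappa$-sized since $\kappa$ is uncountable and regular. Your write-up merely makes explicit the details (the branch agreeing with $\ddiag{j}{i}(c)$ below $i$, and both inclusions for the range) that the paper leaves implicit.
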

\begin{proof}
For $n \in \nats$, each $x \in C_{n}$ extends by dependent choice to a full
  branch through $D$.  For each $n \in \nats$ and $x \in C_n$, choose such a branch $\theta_n(x)$, by the Axiom of Choice.  Then the set 
  $\setbr{\theta_n(x) \mid n \in \nats, x \in C_{n}}$ is $\lkappa$-sized because $\lkappa$ is uncountable and regular, and has range $C$.
\end{proof}



\begin{cor} \label{cor:countsurj}
  Let $\kappa > \aleph_0$.  
  \begin{enumerate}
  \item For $i < \omega_{1}$ and $h \geqslant i$, the connecting map $\nuconn{i}{h}\psetkappa$ is surjective.
  \item For $i < \omega_1$, every element of $\nuordpk{i}$ is $\psetkappa$-coalgebraic.
  \item \label{item:omegaonecauchy} Every element of $\nuordpk{\omega_1}$ is $\psetkappa$-Cauchy.
  \end{enumerate}
\end{cor}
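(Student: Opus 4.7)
The plan is to establish (1) by transfinite induction on $i < \omega_1$ that $\nuconn{i}{i+1}\psetkappa$ is surjective, then to apply Proposition~\ref{prop:projconnect} both to upgrade (1) to the stated form and to deduce (2), and finally to read off (3) from (2) by unpacking the definition of $\psetkappa$-Cauchy.

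For the induction, the base case $i=0$ is immediate since $\nuordpk{0} = \setbr{\emptytup}$. In the successor case, the connecting map $\nuconn{i+1}{i+2}\psetkappa$ equals $\psetkappa$ applied functorially to $\nuconn{i}{i+1}\psetkappa$, which is surjective by the inductive hypothesis; the Axiom of Choice supplies a section of the latter, and applying $\psetkappa$ to that section gives a section of the former.

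The limit case is the main obstacle. Given a limit $i < \omega_1$, the plan is to exploit that $i$ has cofinality $\omega$ to reduce to Proposition~\ref{prop:pressurj}. Fix a strictly increasing cofinal sequence $j_0 < j_1 < \cdots$ in $i$. By Proposition~\ref{prop:psi}(\ref{item:psicomm}), surjectivity of $\nuconn{i}{i+1}\psetkappa$ reduces to showing that $(\psetkappa \pi_j)_{j<i} \colon \psetkappa \nuordpk{i} \to \lim_{j < i} \psetkappa \nuordpk{j}$ is surjective, i.e.\ that every channel $C$ through $(\nuordpk{j})_{j<i}$ with $\lkappa$-sized levels is the range of some $\lkappa$-sized set of full branches. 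The restriction $C'$ of $C$ to the subchain $(\nuordpk{j_n})_{n \in \nats}$ is a channel with $\lkappa$-sized levels, so Proposition~\ref{prop:pressurj} supplies a $\lkappa$-sized set $V$ of full branches of the subchain with range $C'$. Each $v \in V$ extends uniquely to a full branch $v^{\ast}$ of the original chain via $v^{\ast}_j = \nuconn{j}{j_n}(v_{j_n})$ for any $j_n \geqslant j$; the set $U = \setbr{v^{\ast} \mid v \in V}$ is $\lkappa$-sized, its range at $j_n$ is $C_{j_n}$ by construction, and at an arbitrary $j < i$ it equals $\nuconn{j}{j_n}(C_{j_n}) = C_j$ by the channel property applied to $C$.

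With (1) in hand, Proposition~\ref{prop:projconnect} immediately yields (2). For (3), any element of $\nuordpk{\omega_1}$ is, by definition of the limit, a full branch $(x_i)_{i < \omega_1}$, and each component $x_i \in \nuordpk{i}$ is $\psetkappa$-coalgebraic by (2); hence the branch is $\psetkappa$-Cauchy. The delicate point is thus the limit step: checking that the extension $v \mapsto v^{\ast}$ is well-defined, respects the size bound, and lifts range $C'$ to range $C$ — with the latter the only place where the channel axioms (as opposed to mere compatibility with connecting maps) are genuinely used.
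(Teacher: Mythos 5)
Your proof is correct and follows essentially the same route as the paper: reduce via Proposition~\ref{prop:projconnect} to surjectivity of the single-step maps $\nuconn{i}{i+1}$ for $i<\omega_1$, handle $0$ and successors using that $\psetkappa$ preserves surjections, and handle positive limits by passing to a cofinal $\omega$-subsequence and invoking Proposition~\ref{prop:pressurj}. The only difference is that you spell out the restriction-to-subchain and extension-back bookkeeping that the paper leaves implicit, and your verification of it is sound.
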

\begin{proof}
We just need to prove  $\nuconn{i}{i+1}\colon \nuordpk{i} \to \nuordpk{i+1}$ surjective, for $i < \omega_1$.  The case $i=0$ is evident.  If $i$ is a positive limit, take a strictly increasing sequence $(i_{n})_{n \in \nats}$ with supremum $i$, then apply Proposition~\ref{prop:pressurj}.  The successor case follows from the fact that $\psetkappa$ preserves surjections. \end{proof}

\section{The Final Chain  as a Transition System} \label{sect:fctrans}
\subsection{Full powerset}

We shall now explain how the final chain of $\pset$ constitutes a transition system.  
\begin{defi}
  The \emph{predecessor} of an ordinal is defined as follows:
  \begin{itemize}
  \item $\pred{i+1} \eqdef i$. 
  \item If $i$ is a limit, $\pred{i} \eqdef i$. 
  \item Since we deem $0$ a limit, $\pred{0}=0$.
  \end{itemize}
In general, $\pred{i}$ is the
  supremum of all ordinals less than $i$.  
\end{defi}
Note, by the way, that $\mathsf{pred}$ is
  the left adjoint of the successor function.  We come to the main definition:
\begin{defi} \label{def:fctrans}
  The transition relation $\fctrans{i}$ from $\nuordp{i}$ to $\nuordp{\pred{i}}$ is defined as follows.
  \begin{itemize}
  \item For $a \in \nuordp{i+1}$ and $b \in \nuordp{i}$, we set $a \fctrans{i+1} b$ when $b \in a$.
  \item If $i$ is a limit, then for $a,b \in \nuordp{i}$, we set $a \fctrans{i} b$ when $b_{j} \in a_{j+1}$ for all $j < i$.
  \item Since we deem $0$ a limit, $\emptytup \fctrans{0} \emptytup$.
  \end{itemize}
In general, $a \fctrans{i} b$ when for all $j < i$ we have $\nuconn{j}{\pred{i}}b \in \nuconn{j+1}{i}a$.
\end{defi}
 Note that connecting maps and coalgebra projections preserve transition:
\begin{prop}\hfill
  \begin{enumerate}
  \item For $j \leqslant i$ and $a \in \nuordp{i}$, if $a \fctrans{i} b$ then $\nuconn{j}{i}a \fctrans{j} \nuconn{\pred{j}}{\pred{i}}b$.
  \item For a pointed system $x$ and ordinal $i$, if $x \rightsquigarrow y$ then $\pcop{i}{x} \fctrans{i} \pcop{\pred{i}}{x}$.
  \end{enumerate}
\end{prop}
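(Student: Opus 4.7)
The plan is to verify each clause by unfolding the recursive definition of $\fctrans{}$ (as packaged in the ``In general'' line at the end of Definition~\ref{def:fctrans}) and appealing to functoriality of the connecting maps. No induction beyond a case split on whether the index is zero, a successor, or a positive limit is needed; the two parts are essentially bookkeeping once the right formulas are in hand.

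For part~(1), I would fix $j \leqslant i$ with $a \fctrans{i} b$. By the general characterization, this hypothesis says $\nuconn{k}{\pred{i}}b \in \nuconn{k+1}{i}a$ for every $k < i$. To establish $\nuconn{j}{i}a \fctrans{j} \nuconn{\pred{j}}{\pred{i}}b$, it suffices to show, for each $k < j$, that
\[
\nuconn{k}{\pred{j}}\bigl(\nuconn{\pred{j}}{\pred{i}}b\bigr) \in \nuconn{k+1}{j}\bigl(\nuconn{j}{i}a\bigr).
\]
Functoriality of the final chain collapses both composites: $\nuconn{k}{\pred{j}} \circ \nuconn{\pred{j}}{\pred{i}} = \nuconn{k}{\pred{i}}$ and $\nuconn{k+1}{j} \circ \nuconn{j}{i} = \nuconn{k+1}{i}$, so the goal reduces to $\nuconn{k}{\pred{i}}b \in \nuconn{k+1}{i}a$, which is the hypothesis at index $k$ (valid because $k < j \leqslant i$). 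The side conditions $k \leqslant \pred{j} \leqslant \pred{i}$ and $k+1 \leqslant i$ are routine: $k < j$ yields $k \leqslant \pred{j}$ in both the successor and limit cases for $j$, while $j \leqslant i$ yields $\pred{j} \leqslant \pred{i}$ by monotonicity of $\mathsf{pred}$. The boundary case $j = i$ is harmless: there $\pred{j} = \pred{i}$, so $\nuconn{\pred{j}}{\pred{i}}$ is the identity and the conclusion collapses to the hypothesis.

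For part~(2), reading the right-hand side as $\pcop{\pred{i}}{y}$, I would case-split on $i$. When $i = 0$: both projections are $\emptytup$, and $\emptytup \fctrans{0} \emptytup$ holds by definition. When $i = k+1$: the goal $\pcop{k+1}{x} \fctrans{k+1} \pcop{k}{y}$ unfolds by Definition~\ref{def:fctrans} to $\pcop{k}{y} \in \pcop{k+1}{x}$, and the identity $\pcop{k+1}{x} = \setbr{\pcop{k}{z} \mid x \rightsquigarrow z}$ stated just before Proposition~\ref{prop:kernelsim} provides the witness $z = y$. When $i$ is a positive limit: $\pred{i} = i$, so the goal reduces to $\pcop{j}{y} \in \pcop{j+1}{x}$ for every $j < i$, which is again immediate from the same witness at one level lower.

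The main potential obstacle is only notational: one must keep track of the pattern $k < j \leqslant i$ forcing $k \leqslant \pred{j}$ and $\pred{j} \leqslant \pred{i}$, and recognise that these are exactly what functoriality of the connecting maps demands. Once that is noted, both clauses dissolve into the recursive definitions.
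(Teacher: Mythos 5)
Your proof is correct; the paper states this proposition without proof, and your direct unfolding of the ``in general'' clause of Definition~\ref{def:fctrans} together with functoriality of the connecting maps (plus the zero/successor/limit case split for part~(2)) is exactly the intended routine verification. You were also right to read the conclusion of part~(2) as $\pcop{\pred{i}}{y}$ rather than $\pcop{\pred{i}}{x}$ --- the latter is evidently a typo in the statement.
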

Let us translate Definition~\ref{def:fctrans} into the language of Section~\ref{sect:channel}.  For general $F$, the map
\begin{displaymath}
\Phi_i \eqdef  (\nuconn{j}{\pred{i}})_{j < i} \colon \nuordf{\pred{i}} \to \lim_{j < i}\nuordf{j}  
\end{displaymath}
is an isomorphism (bijection)---indeed the identity if $i$ is a limit.  For $a \in \nuordp{i}$ and $b \in \nuordp{\pred{i}}$ we note that $\Psi_i(a)$ is a channel, and $\Phi_i(b)$ a full branch, through $(\nuordp{j})_{j < i}$.  We have $a \fctrans{i} b$ when $\Phi_i(b)$ is a full branch through $\Psi_i(a)$.  

\subsection{Restricted powerset} \label{sect:transrestr}

The final chain of $\psetkappa$ forms a subsystem of the final chain of $\pset$, in the following sense.
\begin{prop} \label{prop:kappasub}
Let $a \in \nuordpk{i}$ and $a \fctrans{i} b$ then 
    $b \in \nuordpk{\pred{i}}$.
\end{prop}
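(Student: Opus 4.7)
The plan is to proceed by case analysis on whether $i$ is a successor or a limit, exploiting in each case the explicit description of $\nuordpk{i}$ as either a $\psetkappa$-image (at successors) or a limit of the restricted chain (at limits).

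First suppose $i = j+1$ is a successor. Then $\pred{i} = j$ and $\nuordpk{i} = \psetkappa \nuordpk{j}$, so $a$ is a $\lkappa$-sized subset of $\nuordpk{j}$. The definition of $\fctrans{i}$ gives $a \fctrans{j+1} b$ iff $b \in a$, so $b \in \nuordpk{j} = \nuordpk{\pred{i}}$, as required.

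Next suppose $i$ is a limit (including the trivial case $i = 0$, for which both $a$ and $b$ must be $\emptytup$). Then $\pred{i} = i$, and the elements of $\nuordpk{i}$ are exactly the full branches $(a_j)_{j<i}$ through the inverse chain $(\nuordpk{j})_{j<i}$; likewise $b = (b_j)_{j<i}$ is a full branch through $(\nuordp{j})_{j<i}$ and I must show each $b_j$ actually lies in $\nuordpk{j}$ (the compatibility with connecting maps is then automatic, since the connecting maps of the restricted chain are restrictions of those of the unrestricted chain). Fix $j < i$. Because $i$ is a limit, $j+1 < i$, so $a_{j+1} \in \nuordpk{j+1} = \psetkappa \nuordpk{j}$, i.e., $a_{j+1}$ is a $\lkappa$-sized subset of $\nuordpk{j}$. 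The definition of $\fctrans{i}$ at a limit gives $b_j \in a_{j+1}$, and hence $b_j \in \nuordpk{j}$.

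There is no real obstacle here; the two cases are essentially routine unfoldings of the definitions of $\nuordpk{i}$ and $\fctrans{i}$. The only point deserving emphasis is the implicit identification of $\nuordpk{i}$ as a subset of $\nuordp{i}$ via the subfunctor inclusion $\psetkappa \hookrightarrow \pset$, under which a full branch of the restricted chain is sent to a full branch of the unrestricted chain — so the statement $b \in \nuordpk{\pred{i}}$ is really the assertion that the components of $b$ satisfy the $\lkappa$-size constraint level by level, which is precisely what the successor case of the definition of $\nuordpk{j+1}$ delivers.
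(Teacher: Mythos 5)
Your proof is correct and is essentially the paper's argument: the paper states it uniformly in one line by observing that $\Psi_i(a)$ is a channel through $(\nuordpk{j})_{j<i}$ with all levels $\lkappa$-sized, so any full branch through it lies in the restricted chain, which is exactly your observation that $b_j \in a_{j+1} \in \psetkappa\nuordpk{j}$ forces $b_j \in \nuordpk{j}$. Your explicit successor/limit case split just unfolds that channel formulation.
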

\begin{proof}
Since $\Psi_{i}(a)$ is a channel through $(\nuordpk{j})_{j < i}$ (with all levels $\lkappa$-sized), every full branch through it is a full branch through $(\nuordpk{j})_{j < i}$.
\end{proof}
As we shall see, the subsystem $(\nuordpk{i})_{i \in \On}$ is not $\lkappa$-branching (Proposition~\ref{prop:kappasucc} below).  Furthermore, the $\psetkappa$-coalgebraic elements need not form a subsystem (Proposition~\ref{prop:cosub} below).  However, the $\psetkappa$-Cauchy elements do form a subsystem.


Suppose now that $\kappa < \biginfty$.  Beyond $\kappa$, we recall that connecting maps are injections (Proposition~\ref{prop:injective}).  They are moreover transition system embeddings in the following sense.
\begin{prop} \label{prop:embedtrans}
 For $i \geqslant j \geqslant \kappa$ and $a \in \nuordpk{i}$, the map $\nuconn{\pred{j}}{\pred{i}}$ is a bijection from the $\rightsquigarrow_i$-successors of $a$ to the $\rightsquigarrow_j$-successors of $\nuconn{j}{i}a$. 
\end{prop}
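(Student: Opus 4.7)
My strategy is to verify well-definedness, injectivity, and surjectivity of the map $b\mapsto\nuconn{\pred{j}}{\pred{i}}b$ in turn. Well-definedness---that the image of a $\rightsquigarrow_i$-successor of $a$ is a $\rightsquigarrow_j$-successor of $\nuconn{j}{i}a$---is the general preservation of transitions stated immediately after Definition~\ref{def:fctrans}. Since $\kappa$ is an infinite cardinal and hence a limit ordinal, $\pred{\kappa}=\kappa$, so $j\geqslant\kappa$ forces $\pred{j}\geqslant\kappa$; Proposition~\ref{prop:injective} then renders $\nuconn{\pred{j}}{\pred{i}}\psetkappa$ globally injective, and injectivity is automatic.

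For surjectivity I work through the channel picture of Section~\ref{sect:channel}. Via $\Psi_i$, the element $a$ corresponds to a channel whose $(k{+}1)$-st level is $\nuconn{k+1}{i}a\subseteq\nuordpk{k}$, and the $\rightsquigarrow_i$-successors of $a$ are exactly the full branches through this channel; likewise the $\rightsquigarrow_j$-successors of $\nuconn{j}{i}a$ are the full branches through its truncation to indices below $j$. Given such a truncated branch $c$, I extend it to a full branch $b$ through the whole channel by transfinite recursion on $k$ with $\pred{j}\leqslant k\leqslant\pred{i}$, starting from the base given by $c$ itself (with the added convention $b_n=c$ and $b_m=\nuconn{m}{n}c$ for $m<n$, when $j$ is a successor $n+1$). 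For the successor step, since $\psetkappa\nuconn{k}{k+1}$ sends $\nuconn{k+2}{i}a$ onto $\nuconn{k+1}{i}a$ and $b_k\in\nuconn{k+1}{i}a$, the Axiom of Choice furnishes $b_{k+1}\in\nuconn{k+2}{i}a$ with $\nuconn{k}{k+1}b_{k+1}=b_k$.

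The main obstacle is the limit step, where both injectivity (Proposition~\ref{prop:injective}) and regularity (Proposition~\ref{prop:regular}) come into play. At a limit $k'$ in the recursion we have $k'\geqslant\pred{j}\geqslant\kappa$. For $m<k'$ put $V_m=\setbr{x\in\nuconn{k'+1}{i}a:\nuconn{m}{k'}x=b_m}$; each $V_m$ is nonempty by the channel condition, and $(V_m)_{m<k'}$ is decreasing. If $k'>\kappa$, choose any $m\in[\kappa,k')$; by Proposition~\ref{prop:injective} the map $\nuconn{m}{k'}$ is injective on $\nuordpk{k'}$, so $V_m$ is a singleton whose unique element lies in every $V_{m'}$ with $m\leqslant m'<k'$, and by monotonicity also in every $V_{m'}$ with $m'<m$. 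If instead $k'=\kappa$ (which can only occur when $j=\kappa$), Proposition~\ref{prop:regular}, applied to the $\lkappa$-sized set $\nuconn{\kappa+1}{i}a$, directly supplies an element of $\bigcap_{m<\kappa}V_m$. In either case let $b_{k'}$ be this common element; it lies in $\nuconn{k'+1}{i}a$ and its components below $k'$ agree with those already constructed, so $b_{k'}=(b_m)_{m<k'}$ in $\nuordpk{k'}$. The full branch $b$ thus built is a $\rightsquigarrow_i$-successor of $a$ satisfying $\nuconn{\pred{j}}{\pred{i}}b=c$, completing surjectivity.
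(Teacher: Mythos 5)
Your proof is correct, and it reaches the same reduction as the paper---via $\Psi$, the claim becomes: every full branch of the truncated channel $C\restriction_{j}$ extends uniquely to a full branch of $C$---but it carries out the existence part differently. The paper defines the extension in one shot: for each level $k\geqslant\kappa$ it takes the \emph{unique} element of $C_k$ whose components below $\kappa$ agree with the given branch, existence coming from Proposition~\ref{prop:regular} applied to the decreasing family $(R_l)_{l<\kappa}$ and uniqueness from injectivity of $\nuconn{\kappa}{k}$. That uniform definition delivers existence and uniqueness (hence injectivity of the map) simultaneously and needs no choice at successor stages, at the small cost of a coherence check between levels. You instead build the branch by transfinite recursion, using the Axiom of Choice at successor steps and splitting the limit step into $k'>\kappa$ (pinned down by injectivity of $\nuconn{m}{k'}$ for any $m\in[\kappa,k')$) and $k'=\kappa$ (supplied by Proposition~\ref{prop:regular}); coherence is then automatic, and you obtain injectivity of the map separately and cleanly from Proposition~\ref{prop:injective} together with $\pred{j}\geqslant\kappa$. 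Both arguments ultimately rest on the same two facts---regularity of $\kappa$ and injectivity of connecting maps from level $\kappa$ onwards---so this is a reorganization rather than a new idea, but a legitimate one. Two cosmetic points: your ``$(k{+}1)$-st level'' of the channel should be its $k$-th level (it is $\nuconn{k+1}{i}a\subseteq\nuordpk{k}$, matching the paper's indexing of channels); and when $i$ is a limit the top index $k=\pred{i}=i$ of your recursion is not a level of the channel, so $\nuconn{k'+1}{i}a$ is undefined there---that final ``step'' is just the assembly of $(b_m)_{m<i}$ into an element of $\nuordpk{i}$, which is harmless but worth saying explicitly.
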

\begin{proof}
This says that, if $C$ is a channel through $(\nuordpk{k})_{k < i}$ with all levels $\lkappa$-sized, then each full branch $b$ of $C \restriction_{j}$ extends uniquely to a full branch $a$ of $C$.  For $k < i$ we define $a_{k}$ as follows.  
\begin{itemize}
\item If $k \geqslant \kappa$, then for each  $l < \kappa$ the set $R_l$ of $a \in C_{k}$ such that $a_{l} = b_{l}$ is inhabited, so $\bigcap_{l \in \kappa} R_l$ has an element by Proposition~\ref{prop:regular}, unique since  $\nuconn{\kappa}{k}$ is injective.  Let $a_k$ be this element.
\item If $k < \kappa$, set $a_k \eqdef b_k$. \qedhere
\end{itemize}
\end{proof}
Hence, beyond $\kappa + \omega$, the transition system agrees with the final coalgebra structure given by Proposition~\ref{prop:ifstable}:
\begin{cor} \label{cor:transfinal}
  For $i \geqslant \kappa+ \omega$ and $a \in \nuordpk{i}$, the map $\nuconn{\pred{i}}{i}$ is a bijection from  the elements of $(\nuconn{i}{i+1})^{-1}a$ to the $\rightsquigarrow_i$-successors of $a$.
\end{cor}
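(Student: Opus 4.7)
The plan is to deduce this corollary directly from Proposition~\ref{prop:embedtrans} applied one level up, using stability of the final chain at $i$.

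First I would note that since $i \geqslant \kappa + \omega$, Propositions~\ref{prop:rangeinj} and~\ref{prop:allinj} guarantee that the final chain of $\psetkappa$ is stable at $i$, so in particular $\nuconn{i}{i+1}$ is a bijection. Therefore $(\nuconn{i}{i+1})^{-1}a$ unambiguously denotes the unique element $\alpha \in \nuordpk{i+1}$ with $\nuconn{i}{i+1}\alpha = a$, and by definition $\alpha$ is a $\lkappa$-sized subset of $\nuordpk{i}$.

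Next I would unfold the definition of $\rightsquigarrow_{i+1}$: by Definition~\ref{def:fctrans}, the $\rightsquigarrow_{i+1}$-successors of $\alpha$ are precisely the elements of $\alpha$, i.e.\ the elements of $(\nuconn{i}{i+1})^{-1}a$.

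Finally I would apply Proposition~\ref{prop:embedtrans} to $\alpha \in \nuordpk{i+1}$ with the indices $i+1 \geqslant i \geqslant \kappa$ (which is valid since $i \geqslant \kappa + \omega \geqslant \kappa$). This yields that the map $\nuconn{\pred{i}}{\pred{i+1}} = \nuconn{\pred{i}}{i}$ is a bijection from the $\rightsquigarrow_{i+1}$-successors of $\alpha$ to the $\rightsquigarrow_i$-successors of $\nuconn{i}{i+1}\alpha = a$. Combining this with the identification from the previous step gives the desired bijection between elements of $(\nuconn{i}{i+1})^{-1}a$ and $\rightsquigarrow_i$-successors of $a$.

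No step is really hard here; the only subtlety is notational, namely that $(\nuconn{i}{i+1})^{-1}a$ is a single element of $\nuordpk{i+1}$ but is being viewed simultaneously as a subset of $\nuordpk{i}$ so that one may speak of ``its elements''. Once stability at $i$ legitimizes this reading, the corollary is immediate from Proposition~\ref{prop:embedtrans}.
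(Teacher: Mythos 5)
Your proof is correct and is essentially the argument the paper intends: the corollary is stated without an explicit proof, as an immediate consequence (``Hence'') of Proposition~\ref{prop:embedtrans} together with stability at $\kappa+\omega$. Applying Proposition~\ref{prop:embedtrans} with indices $i+1 \geqslant i \geqslant \kappa$ and identifying the $\rightsquigarrow_{i+1}$-successors of the unique preimage $\alpha$ with the elements of $\alpha$ is exactly the intended reading.
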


\subsection{Examples}

The following result is adapted from~\cite{FortiHonsell:modselfdescript}.
\begin{prop} \label{prop:pcoppcop}
  Let $x$ and $y$ be pointed systems and $i \in \On$.  We have  $\pcop{i}{x} \fctrans{i} \pcop{\pred{i}}{y}$ iff, for every $j < i$, the pointed system $x$ has a successor $z$ such that $z\simi{j} y$. (Cf.~Proposition~\ref{prop:simkappasucc}.) 
\end{prop}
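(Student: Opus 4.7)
The plan is to unwind Definition~\ref{def:fctrans} in its unified form and then translate the resulting set-membership condition into statements about successors using Proposition~\ref{prop:kernelsim}. I would work directly with the ``general'' clause: $a \fctrans{i} b$ holds iff $\nuconn{j}{\pred{i}}b \in \nuconn{j+1}{i}a$ for every $j<i$. Setting $a = \pcop{i}{x}$ and $b = \pcop{\pred{i}}{y}$, this becomes: for every $j<i$, $\nuconn{j}{\pred{i}}\pcop{\pred{i}}{y} \in \nuconn{j+1}{i}\pcop{i}{x}$.

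Next I would simplify each side. Because $j \leqslant \pred{i}$ in both the successor and limit cases, the standard coherence of coalgebra projections (immediate from the definition of $\pco{}{j}$: compose with $\nuconn{j}{k}$ to reach a lower level) gives $\nuconn{j}{\pred{i}}\pcop{\pred{i}}{y} = \pcop{j}{y}$ and $\nuconn{j+1}{i}\pcop{i}{x} = \pcop{j+1}{x}$. Using the explicit formula $\pcop{j+1}{x} = \setbr{\pcop{j}{z} \mid x \rightsquigarrow z}$ recorded just before Proposition~\ref{prop:kernelsim}, the condition becomes: for every $j<i$ there exists $z$ with $x \rightsquigarrow z$ and $\pcop{j}{z} = \pcop{j}{y}$. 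Now Proposition~\ref{prop:kernelsim}(\ref{item:kernelsimi}) replaces the equality of $j$-projections by $z \simi{j} y$, yielding exactly the right-hand side of the statement.

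There is no real obstacle; the only point to be careful about is handling the boundary cases. For $i = 0$, both sides are vacuously true (the right-hand side quantifies over no $j$, and $\emptytup \fctrans{0} \emptytup$ by definition). For a successor $i = k+1$, the quantifier ``for all $j < k+1$'' can be replaced by its largest instance $j = k$, since $\simi{k}$ refines $\simi{j}$ for $j \leqslant k$; this recovers the direct unwinding $\pcop{k+1}{x} \fctrans{k+1} \pcop{k}{y}$ iff $\pcop{k}{y} \in \pcop{k+1}{x}$. For a limit $i$, $\pred{i} = i$ and the full branch $\pcop{i}{y}$ has $j$th component $\pcop{j}{y}$, so the unified clause specializes to the limit clause of Definition~\ref{def:fctrans}. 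Thus one clean derivation covers all three cases.
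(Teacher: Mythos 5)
Your proof is correct and follows essentially the same route as the paper's (much terser) argument: unwind the general clause of Definition~\ref{def:fctrans} to ``$\pcop{j}{y} \in \pcop{j+1}{x}$ for all $j<i$'' via the coherence of coalgebra projections with connecting maps, identify $\pcop{j+1}{x}$ with $\setbr{\pcop{j}{z} \mid x \rightsquigarrow z}$, and convert the resulting equality of $j$-projections into $z \simi{j} y$ by Proposition~\ref{prop:kernelsim}. The extra boundary-case discussion is harmless but unnecessary, since the unified clause already covers all cases.
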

\begin{proof} Because $\pcop{i}{x} \fctrans{i} \pcop{\pred{i}}{y}$ iff for every $j < i$ we have $\pcop{j}{y} \in \pcop{j+1}{y}$, and the latter is $\setbr{\pcop{j}{z} \mid x \rightsquigarrow z}$. \end{proof}

In several cases we shall precisely describe the $\fctrans{i}$-successors of an element.
\begin{prop}
 For any $i>0$, the element $\zat{i}$ has no successors.
\end{prop}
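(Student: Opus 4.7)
The plan is to split on whether $i$ is a successor or a limit and use Definition~\ref{def:fctrans} in each case.

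First, if $i = j+1$ is a successor ordinal, then by definition $\zat{i} = \emptyset$, and $\zat{i} \fctrans{i} b$ would require $b \in \zat{i}$, which is impossible.

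Second, if $i$ is a limit with $i > 0$, then $\zat{i} = (\zat{k})_{k < i}$, so for any $j < i$ the component $(\zat{i})_{j+1}$ equals $\zat{j+1} = \emptyset$. By definition of $\fctrans{i}$ at a limit, $\zat{i} \fctrans{i} b$ would require $b_j \in (\zat{i})_{j+1} = \emptyset$ for every $j < i$. Since $i > 0$ there is at least one such $j$ (e.g.\ $j = 0$), which yields an immediate contradiction.

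There is no real obstacle here; the statement is essentially unpacking the two definitions (of $\zat{i}$ and of $\fctrans{i}$) and observing that in both cases one is forced to find an element of the empty set. The only minor point of care is remembering that $\zat{i}$ at a limit $i$ is a full branch whose $(j+1)$-st component is the empty set rather than itself being empty, so the appeal is to the limit clause of Definition~\ref{def:fctrans} applied componentwise rather than to the successor clause.
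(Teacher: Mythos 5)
Your proof is correct and is essentially the paper's argument: the paper's one-line proof ``if $\zat{i} \fctrans{i} b$ then $b_{0} \in \zat{1} = \emptyset$'' uses the uniform reformulation of Definition~\ref{def:fctrans} (for all $j<i$, $\nuconn{j}{\pred{i}}b \in \nuconn{j+1}{i}a$) at $j=0$, which is exactly what your two cases amount to. The case split is harmless but unnecessary.
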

\begin{proof} If $\zat{i} \fctrans{i} b$ then $b_{0} \in \zat{1} = \emptyset$. \end{proof}

\begin{prop} \label{prop:succpcopk}
  Let $\kappa < \biginfty$ and let $x$ be an $\lkappa$-branching pointed system.  For $i \geqslant \kappa$, the set of successors of $\pcop{i}{x}$ is $\setbr{\pcop{\pred{i}}{y} \mid x \rightsquigarrow y}$.
\end{prop}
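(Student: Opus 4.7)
The plan is to show both inclusions. The inclusion $\setbr{\pcop{\pred{i}}{y} \mid x \rightsquigarrow y} \subseteq \setbr{b \mid \pcop{i}{x} \fctrans{i} b}$ is immediate from the preservation of transitions by coalgebra projections noted just after Definition~\ref{def:fctrans}. For the reverse inclusion I would case-split on the shape of $i$. If $i = k+1$ is a successor, $\pcop{i}{x} \fctrans{i} b$ simply means $b \in \pcop{k+1}{x} = \setbr{\pcop{k}{y} \mid x \rightsquigarrow y}$, so $b = \pcop{\pred{i}}{y}$ for some successor $y$ of $x$; this sub-argument does not even invoke $i \geqslant \kappa$.

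The substantive case is $i = \kappa$ itself, which is a limit ordinal since $\kappa$ is infinite. Unfolding Definition~\ref{def:fctrans}, $\pcop{\kappa}{x} \fctrans{\kappa} b$ tells us that for each $j < \kappa$ the set $T_j \eqdef \setbr{y \mid x \rightsquigarrow y,\ \pcop{j}{y} = b_j}$ is inhabited. I would then observe that $(T_j)_{j<\kappa}$ is descending: if $y \in T_j$ and $j' \leqslant j$ then $\pcop{j'}{y} = \nuconn{j'}{j}\pcop{j}{y} = \nuconn{j'}{j} b_j = b_{j'}$. The ambient set of successors of $x$ has size $\lkappa$ because $x$ is $\lkappa$-branching, so Proposition~\ref{prop:regular} supplies some $y \in \bigcap_{j<\kappa} T_j$. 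Any such $y$ satisfies $\pcop{j}{y} = b_j$ for every $j < \kappa$, hence $\pcop{\kappa}{y} = b$, as required.

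For the remaining case, $i > \kappa$ a limit, I would reduce to the preceding one via Proposition~\ref{prop:embedtrans} taken with $j \eqdef \kappa$: since $\pred{\kappa} = \kappa$, that proposition yields a bijection $\nuconn{\kappa}{\pred{i}}$ from the successors of $\pcop{i}{x}$ onto those of $\pcop{\kappa}{x}$. Given a successor $b$ of $\pcop{i}{x}$, the $i = \kappa$ case provides some $x \rightsquigarrow y$ with $\pcop{\kappa}{y} = \nuconn{\kappa}{\pred{i}} b$, and since $\pcop{\pred{i}}{y}$ is another successor of $\pcop{i}{x}$ with the same image under this bijection, I conclude $b = \pcop{\pred{i}}{y}$. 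The main obstacle throughout is the diagonal step in the $i = \kappa$ case---producing a single $y$ that simultaneously realises every $b_j$, rather than a separate approximant for each $j$; regularity of $\kappa$ via Proposition~\ref{prop:regular}, applied to the $\lkappa$-sized set of successors of $x$, is precisely what unlocks this.
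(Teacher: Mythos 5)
Your proof is correct, but it is organized quite differently from the paper's. The paper gives a single uniform argument: setting $j$ to be the maximum of $i$ and $\kappa+\omega$, it notes that the successors of $\pcop{j+1}{x}$ are literally the elements of $\pcop{j+1}{x}=\setbr{\pcop{j}{y}\mid x\rightsquigarrow y}$, and then transports this description down to level $i$ via the bijection of Proposition~\ref{prop:embedtrans} applied to the pair $j+1\geqslant i$, relying on stabilization beyond $\kappa+\omega$. You instead do the substantive work at level $\kappa$ itself by a direct diagonal argument: the decreasing family $(T_j)_{j<\kappa}$ of witnesses inside the $\lkappa$-sized successor set of $x$ has inhabited intersection by regularity (Proposition~\ref{prop:regular}), yielding one $y$ that realizes the whole branch $b$; limits above $\kappa$ are then pulled \emph{down} to $\kappa$ via Proposition~\ref{prop:embedtrans}, rather than up to a successor level. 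Both routes ultimately rest on the same regularity phenomenon---your $i=\kappa$ case replays, for the successor set of $x$, the intersection trick that drives the proofs of Propositions~\ref{prop:embedtrans} and~\ref{prop:simkappasucc}---but yours avoids any appeal to the final-coalgebra/stabilization machinery at $\kappa+\omega$, at the price of a three-way case split, whereas the paper's argument is shorter and treats all $i\geqslant\kappa$ at once. One small point you should make explicit in your last case: Proposition~\ref{prop:embedtrans} is stated for $a\in\nuordpk{i}$, so you need to observe that $\pcop{i}{x}\in\nuordpk{i}$, which holds because $x$ is $\lkappa$-branching.
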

\begin{proof}
Let $j$ be the maximum of $i$ and $\kappa+\omega$.  If $\pcop{i}{x} \rightsquigarrow b$, then, by Proposition~\ref{prop:embedtrans}, $\pcop{j+1}{x}$ has a unique element $c$ such that $\nuconn{\pred{i}}{j} c = b$.  By the final coalgebra property, $x$ has a successor $y$ such that $\pcop{j}{y} = c$ and hence $\pcop{\pred{i}}{y}=b$.
\end{proof}

\begin{prop}
  Let $x$ be a pointed system with a unique successor $y$.  For any ordinal $i$, the element $\pcop{i}{x}$ has unique successor $\pcop{\pred{i}}{y}$.
\end{prop}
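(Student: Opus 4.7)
The plan is to split the argument into existence and uniqueness. Existence is immediate from the preceding proposition stating that coalgebra projections preserve transition: since $x \rightsquigarrow y$, we have $\pcop{i}{x} \fctrans{i} \pcop{\pred{i}}{y}$. All the work is in uniqueness.

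The key observation is that, because $y$ is the \emph{only} successor of $x$, the formula $\pcop{j+1}{x} = \setbr{\pcop{j}{z} \mid x \rightsquigarrow z}$ specializes to $\pcop{j+1}{x} = \setbr{\pcop{j}{y}}$, a singleton, for every ordinal $j$. I would then case-split on $i$. If $i = j+1$ is a successor, then by the defining clause of $\fctrans{i}$, any $b$ with $\pcop{i}{x} \fctrans{i} b$ must lie in $\pcop{i}{x} = \pcop{j+1}{x} = \setbr{\pcop{j}{y}}$, forcing $b = \pcop{j}{y} = \pcop{\pred{i}}{y}$.

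If $i$ is a positive limit, then $\pred{i} = i$, and by the limit clause of $\fctrans{i}$ any successor $b$ of $\pcop{i}{x}$ satisfies $b_{j} \in \pcop{j+1}{x} = \setbr{\pcop{j}{y}}$ for every $j < i$. Hence $b_{j} = \pcop{j}{y}$ componentwise, and so $b = (\pcop{j}{y})_{j<i} = \pcop{i}{y} = \pcop{\pred{i}}{y}$ by the limit clause of the definition of coalgebra projections. The $i = 0$ case is trivial: $\nuordp{0} = \setbr{\emptytup}$ has a single element, and $\pcop{0}{y} = \emptytup$.

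There is no real obstacle here; the argument is essentially bookkeeping once one notes that uniqueness of the successor of $x$ makes every $\pcop{j+1}{x}$ a singleton. The only mild care needed is to treat the limit and successor clauses of $\fctrans{i}$ separately and to recall that $\pred{i} = i$ at a limit.
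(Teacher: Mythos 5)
Your proof is correct and follows essentially the same route as the paper: the paper's one\nobreakdash-line argument is exactly your observation that $b_j \in \pcop{j+1}{x} = \setbr{\pcop{j}{y}}$ forces $b_j = \pcop{j}{y}$ for all $j<i$, hence $b = \pcop{\pred{i}}{y}$ (the paper states this uniformly via the general form of $\fctrans{i}$ rather than splitting into successor/limit/zero cases, but that is only a presentational difference). Your explicit note that existence comes from the earlier proposition on projections preserving transitions is a small, welcome addition that the paper leaves implicit.
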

\begin{proof} If $\pcop{i}{x} \rightsquigarrow b$ then for all $j < i$ we have $b_j \in \pcop{j+1}{x} = \setbr{\pcop{j}{y}}$ so $b_j = \pcop{j}{y}$.  Thus $b = \pcop{\pred{i}}{y}$.
 \end{proof}

We next consider the von Neumann ordinals.
\begin{prop} \label{prop:vonsucc}
  For ordinals $j \leqslant i$, the successors of
    $\pcop{i}{\von{j}}$ are listed without repetition
    \begin{itemize}
    \item as $(\pcop{\pred{i}}{\von{k}})_{k < j}$, if $j < i$
    \item as $(\pcop{\pred{i}}{\von{k}})_{k \leqslant \pred{i}}$, if
      $j=i$.
    \end{itemize}
\end{prop}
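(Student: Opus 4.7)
The plan is to split on whether $i$ is a successor or a limit, since $\fctrans{i}$ behaves rather differently in the two cases. Distinctness of the listed successors is uniform: in both clauses every index $k$ satisfies $k\leqslant\pred{i}$, with $k=\pred{i}$ occurring at most once (only in clause 2). So for distinct $k,k'$ at least one is strictly below $\pred{i}$, whence Proposition~\ref{prop:von}(\ref{item:vonsimi}) gives $\von{k}\not\simi{\pred{i}}\von{k'}$, which Proposition~\ref{prop:kernelsim}(\ref{item:kernelsimi}) upgrades to $\pcop{\pred{i}}{\von{k}}\neq\pcop{\pred{i}}{\von{k'}}$.

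To show each listed $\pcop{\pred{i}}{\von{k}}$ is a successor of $\pcop{i}{\von{j}}$, I would invoke Proposition~\ref{prop:pcoppcop}, which reduces the claim to finding, for each $m<i$, a successor $z$ of $\von{j}$ with $z\simi{m}\von{k}$. When $k<j$, take $z=\von{k}$ itself. The remaining case is $j=i$ with $k=\pred{i}$: if $i$ is a successor then $k=i'<i$ is already covered; if $i$ is a limit then $\pred{i}=i$ and we take $z=\von{m}$, which is a successor of $\von{i}$ satisfying $\von{m}\simi{m}\von{i}$ by Proposition~\ref{prop:von}(\ref{item:vonsimi}).

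The main obstacle is exhaustiveness: every $\fctrans{i}$-successor $b$ of $\pcop{i}{\von{j}}$ lies in the listing. When $i=i'+1$ is a successor, this is immediate, since $a\fctrans{i}b$ reduces to $b\in a$ and $\pcop{i}{\von{j}}=\setbr{\pcop{i'}{\von{k}}\mid k<j}$ exhibits $b$ in the required form. When $i$ is a limit, I would analyze $b$ componentwise: each $b_m$ lies in $\pcop{m+1}{\von{j}}=\setbr{\pcop{m}{\von{k}}\mid k<j}$, so $b_m=\pcop{m}{\von{k_m}}$ for some $k_m<j$. Coherence $\nuconn{m}{m'}b_{m'}=b_m$ translates via Proposition~\ref{prop:von}(\ref{item:vonsimi}) into the disjunction ``$k_m=k_{m'}$ or $m\leqslant k_m,k_{m'}$''. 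When $j<i$, then $k_m<j\leqslant m$ for $m\geqslant j$ rules out the second disjunct and forces the $k_m$ to stabilize to some $k^*<j$, with $b_m=\pcop{m}{\von{k^*}}$ for all $m$ by the same coherence, so $b=\pcop{i}{\von{k^*}}$. When $j=i$ a dichotomy appears: either some $k_m<m$, stabilizing the sequence to a $k^*<i$ and giving $b=\pcop{i}{\von{k^*}}$; or $k_m\geqslant m$ for every $m<i$, in which case Proposition~\ref{prop:von}(\ref{item:vonsimi}) gives $b_m=\pcop{m}{\von{m}}=\pcop{m}{\von{i}}$, so $b=\pcop{i}{\von{i}}$, the $k=\pred{i}$ entry.
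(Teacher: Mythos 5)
Your proof is correct and follows essentially the same route as the paper's: membership in the list via Proposition~\ref{prop:pcoppcop}, distinctness via Proposition~\ref{prop:von}, and exhaustiveness by expressing each component of a successor as $\pcop{m}{\von{k_m}}$ and using coherence to force either stabilization to a fixed $k^*$ or the diagonal behaviour yielding $\von{\pred{i}}$. The paper merely packages your case analysis uniformly (no successor/limit split) by normalizing $k_m$ to $g(m)\leqslant m$ and taking the supremum of $\setbr{m \mid g(m)=m}$, which is exactly your dichotomy in disguise.
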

\begin{proof}
Proposition~\ref{prop:pcoppcop} tells us that these are indeed
  successors.  Uniqueness is by Proposition~\ref{prop:von}.

  Let $j \leqslant i$ and $\pcote{i}{\von{j}} \rightsquigarrow x$.  For
  every $k < i$, we have $x_k \in \pcote{k+1}{\von{j}}$, which is
  $\setbr{\pcote{k}{y} \mid \von{j} \rightsquigarrow y}$.  Thus $x_k$ is
  expressible as $\pcote{k}{y}$ for some successor $y$ of $\von{j}$,
  i.e\ $y = \von{l}$ for some $l < j$.  There is a unique such $l$ that is
  $\leqslant k$; we call it $g(k)$.

  Let $C$ be the set of $k < i$ such that $g(k) = k$ and $D$ its
  complement, i.e.\ the set of $k < i$ such that $g(k) < k$.

  For any $k \leqslant l < i$ we have
  $\pcote{k}{\von{g(k)}} = \pcote{k}{\von{g(l)}}$ i.e.\
  $t_{g(k)} \sim_{k} t_{g(l)}$.  This can be unpacked as follows.
  \begin{enumerate}
  \item \label{item:inc} If $k \in C$ then $g(l) \geqslant k$.
  \item \label{item:ind} if $k \in D$ then $g(l) = g(k)$.
  \end{enumerate}
  (\ref{item:ind}) implies that $g(l) = g(k) < k \leqslant l$, so
  $l \in D$. Thus $D$ is upper and $C$ is lower.

  Let $k$ be the supremum of $C$.  Since $\pred{i}$ is an upper bound
  for $C$ we have $k \leqslant \pred{i}$.

  We show that $x = \pcote{\pred{i}}{\von{k}}$, i.e.\ that for any
  $l < i$ we have $g(l) = l \sqcap k$.  If $l < k$ this holds because
  $l \in C$.  If $l \geqslant k$, we must show $g(l) = k$.  Firstly,
  (\ref{item:inc}) says that $g(l)$ is an upper bound for $C$, hence
  $g(l) \geqslant k$.  If $l=k$ we are done.  If $l > k$ then
  $k + 1 \leqslant l < i$, and $k \leqslant g(k + 1) \leqslant k + 1$.
  Since $k + 1 \not\in C$, we have $g(k + 1) = k$ and hence by
  (\ref{item:ind}), $g(l) = k$.

  Since $j$ is an upper bound for $C$, we have $k \leqslant j$.  If
  $j < i$, then $j \in D$ but $k \in C$ so $k < j$.
\end{proof}

\begin{prop} \label{prop:kappasucc}
 For $\kappa < \biginfty$, the set $\nuordpk{\kappa}$ has a $\psetkappa$-Cauchy 
element with precisely $\kappa$ successors.
\end{prop}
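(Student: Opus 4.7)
The plan is to take $a \eqdef \pcop{\kappa}{\von{\kappa}}$ and verify both requirements. Since $\kappa$ is an infinite regular cardinal, it is a limit ordinal, so $\pred{\kappa} = \kappa$.

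I would first show that $a$ is $\psetkappa$-Cauchy, which in particular places $a$ in $\nuordpk{\kappa}$. Fix $j < \kappa$. Since $j \leqslant j+1$ and $j \leqslant \kappa$, Proposition~\ref{prop:von}(\ref{item:vonsimi}) yields $\von{\kappa} \simi{j} \von{j+1}$, and then Proposition~\ref{prop:kernelsim}(\ref{item:kernelsimi}) gives $\pcop{j}{\von{\kappa}} = \pcop{j}{\von{j+1}}$. The nodes reachable from $\von{j+1}$ are the $\von{k}$ with $k \leqslant j+1$, each of branching degree $k \leqslant j+1 < \kappa$, so $\von{j+1}$ is $\lkappa$-branching and hence a pointed $\psetkappa$-coalgebra. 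Therefore $\pcop{j}{\von{\kappa}}$ is $\psetkappa$-coalgebraic, and since this holds for every $j < \kappa$, the element $a$ is $\psetkappa$-Cauchy.

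Next, I would count the successors of $a$ by direct appeal to Proposition~\ref{prop:vonsucc} with $j = i = \kappa$: the successors of $\pcop{\kappa}{\von{\kappa}}$ are enumerated without repetition as $(\pcop{\kappa}{\von{k}})_{k \leqslant \kappa}$. This is a set of cardinality $\kappa + 1 = \kappa$, completing the argument.

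There is no serious obstacle here: the substantive content is already packaged in the earlier propositions on the von Neumann ordinals. The single point worth highlighting is that although $\von{\kappa}$ itself fails to be $\lkappa$-branching (its root has $\kappa$ successors), each of its $j$th projections for $j < \kappa$ collapses under $\simi{j}$ onto that of the $\lkappa$-branching truncation $\von{j+1}$, which is exactly what delivers the Cauchy property.
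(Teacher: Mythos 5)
Your proposal is correct and follows essentially the same route as the paper: take $\pcop{\kappa}{\von{\kappa}}$, show each component is $\psetkappa$-coalgebraic by identifying it with the projection of a $\lkappa$-branching truncated von Neumann ordinal (the paper uses $\von{j}$ where you use $\von{j+1}$, an immaterial difference), and count successors via Proposition~\ref{prop:vonsucc}.
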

\begin{proof}
 Put $e \eqdef \pcote{\kappa}{\von{\kappa}}$.  For $i < \kappa$, the component $e_i$ is $\pcote{i}{\von{\kappa}} = \pcote{i}{\von{i}}$, hence $\psetkappa$-coalgebraic.   By Proposition~\ref{prop:vonsucc}, the successors of $e$ are listed without repetition as $(\pcote{\kappa}{\von{i}})_{i \leqslant \kappa}$, so $e$ has $\kappa$-many successors.
\end{proof}

Thus we may characterize which connecting maps are bijective.
\begin{prop}~\cite{AdamekLevyMiliusMossSousa:saturate} \label{prop:noearlier}
For ordinals $j \leqslant i$, the following are equivalent:
\begin{enumerate}
\item \label{item:noearlierp} The connecting map $\nuconn{j}{i}\psetkappa$ is bijective.
\item \label{item:noearlierc} Either  $j=i$ or $j \geqslant \kappa+\omega$ (for $\kappa < \biginfty$).
\end{enumerate}
\end{prop}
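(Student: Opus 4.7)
The plan is to prove the two implications separately. For (\ref{item:noearlierc}) $\Rightarrow$ (\ref{item:noearlierp}), the case $j=i$ gives the identity, so the only real content is the case $j \geqslant \kappa + \omega$. Here I would invoke Proposition~\ref{prop:rangeinj} to get that $\nuconn{\kappa}{\kappa+1}\psetkappa$ is injective, then apply Proposition~\ref{prop:allinj} to conclude that the final chain is stable at $\kappa + \omega$, and finally read off from Proposition~\ref{prop:ifstable} that $\nuconn{j}{i}\psetkappa$ is bijective for all $i \geqslant j \geqslant \kappa + \omega$.

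For the converse, I would argue the contrapositive: assume $j < i$ and $j < \kappa + \omega$. If $j < \kappa$, Proposition~\ref{prop:injective} already gives non-injectivity of $\nuconn{j}{i}\psetkappa$, so bijectivity fails. The interesting range is $\kappa \leqslant j < \kappa + \omega$, where the map is injective; here I would show failure of surjectivity. The crux is at $j = \kappa$: take the element $e = \pcote{\kappa}{\von{\kappa}} \in \nuordpk{\kappa}$ of Proposition~\ref{prop:kappasucc}, which has exactly $\kappa$ successors under $\fctrans{\kappa}$. Suppose for contradiction that $e = \nuconn{\kappa}{\kappa+1}(a)$ for some $a \in \nuordpk{\kappa+1}$. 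Applying Proposition~\ref{prop:embedtrans} with $i=\kappa+1$ and $j=\kappa$ (both $\geqslant \kappa$, and noting $\pred{\kappa+1}=\pred{\kappa}=\kappa$), the map $\nuconn{\kappa}{\kappa}=\id$ is a bijection between the $\fctrans{\kappa+1}$-successors of $a$ and the $\fctrans{\kappa}$-successors of $e$. But by Definition~\ref{def:fctrans} the successors of $a$ are exactly its elements as a subset of $\nuordpk{\kappa}$, and $a$ is $\lkappa$-sized; so $a$ has fewer than $\kappa$ successors, contradicting that $e$ has $\kappa$ many. Hence $\nuconn{\kappa}{\kappa+1}\psetkappa$ is not surjective.

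To extend this to all $j = \kappa + n$, I would iterate: since $\nuconn{\kappa+n+1}{\kappa+n+2}\psetkappa = \psetkappa(\nuconn{\kappa+n}{\kappa+n+1}\psetkappa)$ and $\psetkappa$ preserves non-surjectivity (a missed $y$ witnesses that the singleton $\setbr{y}$ is missed by $\psetkappa f$), the map $\nuconn{\kappa+n}{\kappa+n+1}\psetkappa$ fails surjectivity for every $n \in \nats$. Finally, for $i \geqslant j+1$ the factorization $\nuconn{j}{i} = \nuconn{j}{j+1} \circ \nuconn{j+1}{i}$ inherits non-surjectivity from its first factor, completing the proof. The main obstacle is the non-surjectivity claim at level $\kappa$: the challenge is to convert the cardinality bound ``$e$ has $\kappa$ successors'' into a statement about the connecting map, and the cleanest route is precisely through the successor-matching bijection of Proposition~\ref{prop:embedtrans} combined with the fact that elements of $\nuordpk{\kappa+1}$ are literally $\lkappa$-sized sets whose transition successors are their members.
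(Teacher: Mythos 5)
Your proof is correct and follows essentially the same route as the paper: stability at $\kappa+\omega$ for the forward direction, Proposition~\ref{prop:injective} for $j<\kappa$, and for $\kappa \leqslant j < \kappa+\omega$ the element of $\nuordpk{\kappa}$ with $\kappa$ successors from Proposition~\ref{prop:kappasucc} played off against the successor bijection of Proposition~\ref{prop:embedtrans} and the fact that elements of $\nuordpk{\kappa+1}$ are $\lkappa$-sized. The only (harmless) difference is bookkeeping: the paper exhibits the witness $\setbr{-}^{n}a$ directly at each level $\kappa+n$, whereas you establish non-surjectivity at $\kappa$ and propagate it via the observation that $\psetkappa$ preserves non-surjectivity, which is a perfectly valid and arguably more modular variant.
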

\begin{proof} For (\ref{item:noearlierc})$\Rightarrow$(\ref{item:noearlierp}), we saw in Section~\ref{sect:stable} that the final chain is stable at $\kappa + \omega$.  For the converse, assuming $j<i$, Proposition~\ref{prop:injective} covers the case where $j<\kappa$.  For $j = \kappa + n$, where $n \in \nats$, take $a \in \nuordpk{\kappa}$ with $\kappa$ successors.  Then $\setbr{-}^{n}a$ is not in the range of $\nuconn{\kappa+n}{\kappa+n+1}$, because for every $b \in \nuordpk{\kappa+n+1}$, each $n$-step descendant of $b$ has $\lkappa$-successors. \end{proof}

\section{Before $\omega_1$} \label{sect:count}

This section reprises from~\cite{AdamekLevyMiliusMossSousa:saturate} several properties that the final chain enjoys at countable ordinals.  (We have already seen one such property---surjectivity of connecting maps.)  We first mention some properties enjoyed at \emph{finite} ordinals:
\begin{prop} \label{prop:finpost}
  Let $n \in \nats$.
  \begin{enumerate}
  \item For $a,b \in \nuordp{n}$, if $a \simi{n} b$ then $a=b$.
  \item For any pointed system $x$, we can characterize $\pcop{n}{x}$ as the unique $b \in \nuordp{n}$ such that $b \simi{n} x$.
  \item \label{item:finpost} For any ordinal $i$ and $a \in \nuordp{i+n}$, we have $\pcop{n}{a} = \nuconn{n}{i+n}a$, and it can be characterized as the unique $b \in \nuordp{n}$ such that $b \simi{n} a$.
  \end{enumerate}
\end{prop}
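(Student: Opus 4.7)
The three parts are tightly linked, so my plan is to prove them in the order 3, 1, 2: part~3 does the real work, while parts~1 and~2 then fall out almost immediately.

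For part~3 I would proceed by induction on $n$, keeping $i$ an arbitrary parameter throughout. The base case $n=0$ is immediate, since both $\pcop{0}{a}$ and $\nuconn{0}{i}a$ are forced to be the unique element $\emptytup$ of $\nuordp{0}$. For the inductive step I would take $a\in\nuordp{i+n+1}$ and use the crucial fact that $i+n+1$ is a successor ordinal: by Definition~\ref{def:fctrans} the $\fctrans{i+n+1}$-successors of $a$ are precisely the elements of $a$, which lie in $\nuordp{i+n}$. Unfolding the coalgebra projection gives $\pcop{n+1}{a}=\setbr{\pcop{n}{y}\mid y\in a}$, which by the induction hypothesis equals $\setbr{\nuconn{n}{i+n}y\mid y\in a}$. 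The clause $\nuconn{j+1}{k+1}\pset=\pset\,\nuconn{j}{k}$ of Definition~\ref{def:finalseq}, instantiated at $j=n$ and $k=i+n$, unfolds $\nuconn{n+1}{i+n+1}a$ to the same set, establishing the equation $\pcop{n+1}{a}=\nuconn{n+1}{i+n+1}a$.

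For part~1, let $a,b\in\nuordp{n}$ with $a\simi{n}b$. Proposition~\ref{prop:kernelsim}(\ref{item:kernelsimi}) gives $\pcop{n}{a}=\pcop{n}{b}$; applying the equation from part~3 with $i=0$, and noting that $\nuconn{n}{n}$ is the identity, both sides collapse to their arguments, yielding $a=b$. The characterization half of part~3 now follows: applying the part~3 equation at $\pcop{n}{a}\in\nuordp{n}$ with $i=0$ gives $\pcop{n}{\pcop{n}{a}}=\pcop{n}{a}$, hence $\pcop{n}{a}\simi{n}a$ by Proposition~\ref{prop:kernelsim}(\ref{item:kernelsimi}), and uniqueness is exactly part~1. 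Part~2 is the same argument with an arbitrary pointed system $x$ in place of $a$: $\pcop{n}{x}\in\nuordp{n}$; by part~3 at $i=0$, $\pcop{n}{\pcop{n}{x}}=\pcop{n}{x}$, so $\pcop{n}{x}\simi{n}x$ by Proposition~\ref{prop:kernelsim}(\ref{item:kernelsimi}); and uniqueness is once more part~1.

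I do not expect any real obstacle: the whole proof is a tidy induction together with two appeals to Proposition~\ref{prop:kernelsim}. The one feature worth flagging is that the argument rests entirely on $i+n$ being a successor ordinal whenever $n\geq 1$, a property peculiar to finite $n$, so the strategy does not survive into the transfinite regime that the later sections are concerned with.
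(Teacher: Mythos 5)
Your proof is correct and follows the same route the paper intends: the paper's entire proof is ``By induction on $n$'', and your argument is exactly that induction, with part~3's equation $\pcop{n}{a}=\nuconn{n}{i+n}a$ carried through the successor step via the membership description of $\fctrans{i+n+1}$ and the clause $\nuconn{n+1}{i+n+1}a=\setbr{\nuconn{n}{i+n}b\mid b\in a}$, and parts~1 and~2 extracted via Proposition~\ref{prop:kernelsim}. Nothing is missing; you have simply written out the details the paper omits.
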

\begin{proof} By induction on $n$. \end{proof}

The properties enjoyed at countable ordinals derive from the following fact.
\begin{prop} \label{prop:backconn} For $j < i < \omega_1$ and all $a \in \nuordp{i}$ we have 
  \begin{eqnarray*}
    \nuconn{j+1}{i} a & = &  \setbr{\nuconn{j}{i}b \mid a \fctrans{i} b}
  \end{eqnarray*}.
\end{prop}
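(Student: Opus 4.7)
I split by cases on $i$. If $i = k+1$ is a successor, then $\pred i = k$ and $a \fctrans{i} b$ simply says $b \in a$; on the other hand Definition~\ref{def:finalseq} gives $\nuconn{j+1}{i}a = \pset(\nuconn{j}{k})(a) = \setbr{\nuconn{j}{k}b \mid b \in a}$, which matches the right-hand side immediately.

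Assume now that $i$ is a positive limit ordinal, so $\pred i = i$ and, by Proposition~\ref{prop:psi}, $\Psi_i(a) = (a_{k+1})_{k<i}$ is a channel through $(\nuordp{k})_{k<i}$, with $\nuconn{j}{i}b = b_j$ for $b \in \nuordp{i}$. Under this translation, $a \fctrans{i} b$ unfolds to: $b$ is a full branch of the channel $\Psi_i(a)$, i.e.\ $b_k \in a_{k+1}$ for every $k<i$. The inclusion $\supseteq$ then follows by taking $k = j$.

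For the inclusion $\subseteq$, given $c \in a_{j+1}$ I must produce a full branch $b$ of $\Psi_i(a)$ with $b_j = c$. Since $i < \omega_1$ is a limit, $i$ has cofinality $\omega$, so I fix a strictly increasing sequence $j = l_0 < l_1 < l_2 < \cdots$ in $i$ with $\sup_n l_n = i$. The channel's connecting maps are surjective: $a_{k+1} = \pset(\nuconn{k}{l})(a_{l+1})$ for $k \leqslant l < i$, because $a$ is a full branch. By dependent choice, pick $b_{l_n} \in a_{l_n + 1}$ with $b_{l_0} = c$ and $\nuconn{l_n}{l_{n+1}} b_{l_{n+1}} = b_{l_n}$. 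For arbitrary $k < i$, take the least $n$ with $k \leqslant l_n$ and set $b_k \eqdef \nuconn{k}{l_n} b_{l_n}$; coherence of the connecting maps makes this independent of $n$, the downward direction of the channel gives $b_k \in a_{k+1}$, and the $b_k$'s assemble into the required full branch with $b_j = c$.

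The main obstacle is the $\subseteq$ inclusion in the limit case: the argument uses crucially that $i < \omega_1$ yields countable cofinality, allowing dependent choice along a cofinal $\omega$-sequence to thread a branch through the channel. This is precisely where things must break at $\omega_1$, foreshadowing the ``ghost'' $a \in \nuordp{\omega_1}$ of Section~\ref{sect:ghostomegaone}, for which every $a_{j+1}$ is inhabited but $a$ has no $\fctrans{\omega_1}$-successor.
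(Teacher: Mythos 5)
Your proof is correct and follows essentially the same route as the paper's: the successor case is handled by unfolding the definition of $\nuconn{j+1}{i}$, and in the limit case the key point is exactly the paper's, namely that $i<\omega_1$ has countable cofinality, so an element of the $j$th level of the channel $\Psi_i(a)$ extends by dependent choice along a cofinal $\omega$-sequence to a full branch. The only difference is that you spell out the interpolation of the branch at levels between the chosen cofinal indices, which the paper leaves implicit.
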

\begin{proof} 
Trivial if $i$ is a successor or $0$.  If $i$ is 
a positive limit, take a strictly increasing sequence $(i_{n})_{n \in \nats}$ with supremum $i$.  Any $b \in \nuconn{j+1}{i}a$,  is an element of the $j$th level
  of the channel $\Psi_i(a)$ through $(\nuordpk{k})_{k < i}$, and it
  extends by dependent choice to a full branch.
\end{proof} 

\begin{lem}\label{lem:pconuconn}
Let $j \leqslant i < \omega_1$.  For $a \in \nuordp{i}$ we have $\pcop{j}{a} = \nuconn{j}{i}a$.
\end{lem}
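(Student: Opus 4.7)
The plan is to prove the identity by transfinite induction on $j$, separating the base case, the successor case, and the limit case, and invoking Proposition~\ref{prop:backconn} at exactly one spot.

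For the base case $j = 0$, both sides equal $\emptytup$ directly from the definitions of $\pco{}{0}$ and of the $0$th level of the final chain. For the limit case, I would unfold the definition $\pcop{j}{a} = (\pcop{k}{a})_{k < j}$, apply the inductive hypothesis componentwise to rewrite each $\pcop{k}{a}$ as $\nuconn{k}{i}a$, and recognize the resulting tuple as the image of $a$ under $\nuconn{j}{i}$, using that $\nuordp{j}$ is by definition the limit $\lim_{k<j}\nuordp{k}$ with projections $\nuconn{k}{j}$, together with $\nuconn{k}{j}\circ \nuconn{j}{i} = \nuconn{k}{i}$.

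The successor case $j = k+1 \leqslant i$ is the heart of the argument. By the formula for $\pcop{k+1}{a}$ in Section~\ref{sect:fctrans}, we have $\pcop{k+1}{a} = \setbr{\pcop{k}{b} \mid a \fctrans{i} b}$, with each $b \in \nuordp{\pred{i}}$. Since $k+1 \leqslant i$ gives $k \leqslant \pred{i}$, the inductive hypothesis (applied at the ordinal $\pred{i} < \omega_1$) rewrites this as $\setbr{\nuconn{k}{\pred{i}}b \mid a \fctrans{i} b}$. If $i$ is a successor, this immediately coincides with $\nuconn{k+1}{i}a = \pset(\nuconn{k}{\pred{i}})a$ by the definition of $\fctrans{i+1}$ as membership. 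If $i$ is a (positive) limit, this is precisely where I invoke Proposition~\ref{prop:backconn} to collapse the right-hand side to $\nuconn{k+1}{i}a$.

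The only subtlety — and the only place where the hypothesis $i < \omega_1$ is used — lies in this limit subcase, since Proposition~\ref{prop:backconn} itself rests on countable cofinality (via dependent choice along a cofinal $\omega$-sequence). Once that proposition is in hand, the induction is otherwise routine bookkeeping, and I expect no further obstacles.
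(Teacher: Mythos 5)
Your proof is correct and follows essentially the same route as the paper, whose proof reads in its entirety ``By induction on $j$, using Proposition~\ref{prop:backconn} for the successor case.'' You have simply spelled out the routine base and limit cases and correctly located both the single appeal to Proposition~\ref{prop:backconn} and the point at which $i<\omega_1$ (countable cofinality, via dependent choice) is actually needed.
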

\begin{proof} By induction on $j$, using Proposition~\ref{prop:backconn} for the successor case. \end{proof}

\begin{prop}  \label{prop:countord}
  Let $i < \omega_1$.
  \begin{enumerate}
   \item \label{item:pcopself} For $a \in \nuordp{i}$ we have
    \begin{math}
      \pcop{i}{a}  =  a
    \end{math}.
  \item Every element of $\nuordp{i}$ is $\pset$-coalgebraic.
  \item \label{item:simieq} For $a,b\in \nuordp{i}$, if $a \simi{i} b$ then $a=b$.
  \item  For $a,b\in \nuordp{i}$, if $a \sim b$ then $a=b$.
  \end{enumerate}
\end{prop}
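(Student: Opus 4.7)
The plan is to derive all four parts directly from Lemma~\ref{lem:pconuconn}; the hard work has already been done there and in the supporting Proposition~\ref{prop:backconn}. Throughout, I view $a \in \nuordp{i}$ as a point of the pointed $\pset$-coalgebra $(\nuordp{i},\fctrans{i})$ from Definition~\ref{def:fctrans}, so that $\pcop{i}{a}$ refers to the $i$th coalgebra projection of this system.

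For part~(\ref{item:pcopself}), I instantiate Lemma~\ref{lem:pconuconn} at $j = i$: this gives $\pcop{i}{a} = \nuconn{i}{i}a = a$, since $\nuconn{i}{i}$ is the identity on $\nuordp{i}$. Part~(2) then follows immediately, because part~(\ref{item:pcopself}) exhibits $a$ as $\pcop{i}{x}$ for the pointed $\pset$-coalgebra $x = (\nuordp{i},\fctrans{i},a)$, which is the very definition of $\pset$-coalgebraic.

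For part~(\ref{item:simieq}), suppose $a \simi{i} b$. Proposition~\ref{prop:kernelsim}(\ref{item:kernelsimi}) yields $\pcop{i}{a} = \pcop{i}{b}$, and substituting via part~(\ref{item:pcopself}) on both sides gives $a = b$. Part~(4) is then immediate, since $a \sim b$ implies $a \simi{i} b$.

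I do not expect any real obstacle in this proof; all the subtlety sits upstream, in Lemma~\ref{lem:pconuconn} and Proposition~\ref{prop:backconn}, whose inductive successor steps at a positive limit $i<\omega_1$ rely on dependent choice to extend an element of the $j$th level of the channel $\Psi_i(a)$ to a full branch. This is precisely where the countability of $i$ is essential, and it is what makes the whole transfer between $\pcop{j}{-}$ and $\nuconn{j}{i}$ work below $\omega_1$.
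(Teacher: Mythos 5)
Your proof is correct and follows essentially the same route as the paper: part~(1) is exactly Lemma~\ref{lem:pconuconn} with $j=i$, part~(2) follows from it, part~(3) is the computation $a = \pcop{i}{a} = \pcop{i}{b} = b$ via Proposition~\ref{prop:kernelsim}, and part~(4) is immediate. Your closing remark correctly locates the real work in Lemma~\ref{lem:pconuconn} and Proposition~\ref{prop:backconn}, where dependent choice and the countability of $i$ are used.
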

\begin{proof} \hfill
\begin{enumerate}
\item By Lemma~\ref{lem:pconuconn}.   
\item Follows from part~(\ref{item:pcopself}).
\item $a = \pcop{i}{a} = \pcop{i}{b} = b$.
\item Follows from part~(\ref{item:simieq}).
\qedhere
\end{enumerate}
\end{proof}

\begin{cor} \label{cor:countchar} \hfill
  \begin{enumerate}
  \item \label{item:countcharproj} Let $i < \omega_1$.  For any pointed system $x$, we may characterize $\pcop{i}{x}$ as the unique $b \in \nuordp{i}$ such that $b \simi{i} x$.
\item  \label{item:countcharconn} Let $j \leqslant i < \omega_1$.  For any $a \in \nuordp{i}$
    we may characterize $\nuconn{j}{i}a$ as the unique $b \in \nuordp{j}$ such
    that $b \simi{j} a$.
  \end{enumerate}
\end{cor}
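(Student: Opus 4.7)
The plan is to derive Corollary~\ref{cor:countchar} directly by combining three facts already established: Proposition~\ref{prop:countord}(\ref{item:pcopself}), which says $\pcop{i}{a} = a$ for $a \in \nuordp{i}$ and $i < \omega_1$; Proposition~\ref{prop:kernelsim}(\ref{item:kernelsimi}), which translates $\simi{i}$ into equality of coalgebra projections; and Lemma~\ref{lem:pconuconn}, which identifies $\pcop{j}{a}$ with $\nuconn{j}{i}a$ in the countable range. Since all of the real content has already been proved (culminating in Proposition~\ref{prop:countord}), I expect the corollary to be a routine consequence rather than an obstacle-bearing argument; the only thing to keep straight is that each element $a \in \nuordp{i}$ is silently being treated as a pointed $\pset$-coalgebra via the transition structure of Section~\ref{sect:fctrans}, which is what makes expressions like $b \simi{i} x$ and $\pcop{j}{a}$ meaningful for $b$, $a$ in the final chain.

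For part~(\ref{item:countcharproj}), I would first verify that $\pcop{i}{x}$ itself is a candidate, then verify uniqueness. Existence: $\pcop{i}{x}$ lies in $\nuordp{i}$, so Proposition~\ref{prop:countord}(\ref{item:pcopself}) gives $\pcop{i}{\pcop{i}{x}} = \pcop{i}{x}$, and then Proposition~\ref{prop:kernelsim}(\ref{item:kernelsimi}) converts this equality of projections into $\pcop{i}{x} \simi{i} x$. Uniqueness: if $b \in \nuordp{i}$ satisfies $b \simi{i} x$, then Proposition~\ref{prop:kernelsim}(\ref{item:kernelsimi}) gives $\pcop{i}{b} = \pcop{i}{x}$, while Proposition~\ref{prop:countord}(\ref{item:pcopself}) gives $\pcop{i}{b} = b$, whence $b = \pcop{i}{x}$.

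For part~(\ref{item:countcharconn}), I would simply apply Lemma~\ref{lem:pconuconn} to rewrite $\nuconn{j}{i}a$ as $\pcop{j}{a}$, and then invoke part~(\ref{item:countcharproj}) with $x \eqdef a$; this is legitimate because $j \leqslant i < \omega_1$ implies $j < \omega_1$. The output is exactly the desired characterization of $\nuconn{j}{i}a$ as the unique $b \in \nuordp{j}$ with $b \simi{j} a$. No calculation beyond this chaining of citations is required.
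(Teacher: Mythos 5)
Your proof is correct and follows essentially the same route as the paper: both parts rest on Proposition~\ref{prop:countord}(\ref{item:pcopself}), Proposition~\ref{prop:kernelsim}(\ref{item:kernelsimi}), and Lemma~\ref{lem:pconuconn}, with your uniqueness step merely inlining the paper's appeal to Proposition~\ref{prop:countord}(\ref{item:simieq}) and your part~(\ref{item:countcharconn}) reducing to part~(\ref{item:countcharproj}) instead of repeating the short computation $\pcop{j}{\nuconn{j}{i}a} = \nuconn{j}{i}a = \nuconn{j}{i}\pcop{i}{a} = \pcop{j}{a}$. No gaps.
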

\begin{proof} \hfill
\begin{enumerate}
\item  $\pcop{i}{\pcop{i}{x}} = \pcop{i}{x}$, by Proposition~\ref{prop:countord}(\ref{item:pcopself}), so $\pcop{i}{x} \simi{i} x$.  Uniqueness is by Proposition~\ref{prop:countord}(\ref{item:simieq}). 
\item $\pcop{j}{\nuconn{j}{i}a} = \nuconn{j}{i}a =
  \nuconn{j}{i}\pcop{i}{a} = \pcop{j}{a}$,
  so $\nuconn{j}{i}{a} \simi{j} a$.  Uniqueness is by
  Proposition~\ref{prop:countord}(\ref{item:simieq}).\qedhere
\end{enumerate}
\end{proof}


A pointed system $a$ is \emph{$\lkappa$-branching at depth $<i$} when every $b$ such that  $a \rightsquigarrow^{m} b$, for a natural number $m < i$, has $\lkappa$ successors.  If $i \geqslant \omega$, this just says that $a$ is $\lkappa$-branching.
\begin{prop} \label{prop:brdepth}
 Assume $\kappa < \biginfty$.   
  \begin{enumerate}
  \item Let $j$ be an ordinal.  Every element of $\nuordpk{\kappa+j}$ that is $\psetkappa$-coalgebraic is $\lkappa$-branching, and conversely if $\kappa =\aleph_0$.
  \item Let $j \leqslant i$ be ordinals.  Every element of $\nuordpk{\kappa+j}$ that is in the range of $\nuconn{\kappa+j}{\kappa+i}$ is $\lkappa$-branching at depth $<i$, and conversely if $\kappa = \aleph_0$.
  \end{enumerate}
\end{prop}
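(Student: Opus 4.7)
The proof falls into four parts: the forward directions of (1) and (2), which iterate the bijection results of Section~\ref{sect:transrestr}, and the converses for $\kappa=\aleph_0$, which combine strong extensionality at countable ordinals (Corollary~\ref{cor:countchar}) with the stabilization of the $\psetfin$-chain at $\omega+\omega$ (Section~\ref{sect:stable}).

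For (1) forward, with $a = \pcop{\kappa+j}{x}$ for a pointed $\psetkappa$-coalgebra $(M,x)$, note that $\predm{m}{\kappa+j} \geq \kappa$ for every $m \in \nats$ since $\kappa$ is infinite. Proposition~\ref{prop:succpcopk} thus applies iteratively, exhibiting each descendant of $a$ as $\pcop{l}{w}$ for some $l \geq \kappa$ and some $w$ reachable from $x$ in $M$; this has $<\kappa$ successors because $w$ does. For (2) forward, take $b \in \nuordpk{\kappa+i}$ with $\nuconn{\kappa+j}{\kappa+i}b = a$. Iterating Proposition~\ref{prop:embedtrans}, the $m$-step descendants of $a$ (at level $\predm{m}{\kappa+j}$) biject with those of $b$ (at level $\predm{m}{\kappa+i} \geq \kappa+1$ for $m \in \nats$ with $m < i$); one more use of Proposition~\ref{prop:embedtrans} puts the successors of each such $b$-side descendant in bijection with the elements of its further push-down to $\nuordpk{\kappa+1} = \psetkappa\nuordpk{\kappa}$, which form a $\lkappa$-sized set.

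Now assume $\kappa = \aleph_0$, and consider the converse of (1). If $j \geq \omega$, then $\omega + j \geq \omega+\omega$, at which level the chain has stabilized, so $\nuordpf{\omega+j}$ is the final $\psetfin$-coalgebra (Proposition~\ref{prop:ifstable}) and every element is trivially $\psetfin$-coalgebraic. If $j < \omega$, then $\omega+j < \omega_1$; the reachable subsystem $R$ of $a$ in $(\nuordpf{\omega+j}, \fctrans{\omega+j})$ is finitely branching by hypothesis, hence a pointed $\psetfin$-coalgebra. By Corollary~\ref{cor:countchar}(\ref{item:countcharproj}), the projection $\pcop{\omega+j}{a}$ computed in $R$ is the unique element of $\nuordp{\omega+j}$ bisimilar at level $\omega+j$ to $a$, and $a$ itself qualifies; thus $\pcop{\omega+j}{a} = a$, so $a$ is $\psetfin$-coalgebraic.

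For the converse of (2), if $i \geq \omega$ the branching condition reduces to ordinary finite branching, so by (1) we have $a = \pcop{\omega+j}{x}$ for a pointed $\psetfin$-coalgebra $(M,x)$, and $\pcop{\omega+i}{x}$ lifts $a$. For $i < \omega$ (hence $j < \omega$), the plan is to show by induction on $m \leq i$ that $\pcop{\omega+m}{a}$, computed in the reachable $\pset$-subsystem of $a$, lies in $\nuordpf{\omega+m}$; the step at $m+1$ uses $a$'s finite branching at depth $\leq m$ to ensure that $\setbr{\pcop{\omega+m}{y} \mid a \rightsquigarrow y}$ is a finite subset of $\nuordpf{\omega+m}$. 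Taking $m = i$ yields $b \eqdef \pcop{\omega+i}{a} \in \nuordpf{\omega+i}$, and then $\nuconn{\omega+j}{\omega+i}b = \pcop{\omega+j}{a} = a$ (the last equality by the converse of (1)) completes the argument. The main subtlety will be this inductive depth-tracking through the projection; the remaining ingredients are direct applications of earlier lemmas.
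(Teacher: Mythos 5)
Your proposal is correct and follows essentially the same route as the paper: the forward directions via Propositions~\ref{prop:succpcopk} and~\ref{prop:embedtrans}, and the converses via $\pcop{\omega+j}{a}=a$ together with an induction on the finite depth bound in part (2). The one subtlety you flag is resolved exactly as the paper does it---the inductive statement must quantify over all elements that are $\lkappa$-branching at the given depth, so that the hypothesis applies to the successors of $a$; note also that the final equality $\pcop{\omega+j}{a}=a$ is simply Proposition~\ref{prop:countord}(\ref{item:pcopself}) and needs no branching assumption, so it should not be attributed to the converse of part (1).
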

\begin{proof} \hfill
\begin{enumerate}
\item ($\Rightarrow$) follows from Proposition~\ref{prop:succpcopk}.  For ($\Leftarrow$), if $a \in \nuordpf{\omega+j}$ is finitely branching, then it is $\psetfin$-coalgebraic since $\pcop{\omega+j}{a} = a$.
\item ($\Rightarrow$) follows from Proposition~\ref{prop:embedtrans}, since every element of $\nuordpk{\kappa+i}$ is $\lkappa$-branching at depth $<i$.  For ($\Leftarrow$), let $a \in \nuordpk{\kappa+j}$ be $\lkappa$-branching at depth $<i$.  Since  $b \eqdef \pcop{\omega+i}{a}$ is sent by $\nuconn{\omega+j}{\omega+i}$ to $a$, it suffices to show that $b \in \nuordpf{\omega+i}$.  If $i \geqslant \omega$ this holds because $a$ is finitely branching; we prove the case $i < \omega$ by induction on $i$. For $i = 0$ we have $j=i$ and $b=a \in \nuordpf{\kappa}$.  For $i = i'+1$ we have $b = \setbr{\pcop{\omega+i'}{c} \mid a \rightsquigarrow c}$, which is in $\nuordpf{\omega+i}$ because $a$ has finitely many successors and each of them is finitely branching at depth $<i'$.\qedhere
\end{enumerate}
\end{proof}

\section{A Ghost at $\omega_{1}$} \label{sect:ghostomegaone}

The previous section has established properties of $\nuordpk{i}$ for $i < \omega_1$.  But for $i = \omega_1$, many questions remain unresolved.  
\begin{itemize}
\item For $i < \omega_1$, every element of $\nuordpk{i}$ is $\psetkappa$-coalgebraic (for $\kappa > \aleph_0$).  Is every element of $\nuordpc{\omega_1}$ at least $\pset$-coalgebraic?
\item For $i < \omega_1$, the set $\nuordp{i}$ is strongly extensional, i.e.\ bisimilar elements are equal.  What about $\nuordpc{\omega_1}$?
\item For $j < i < \omega_1$ we saw (Proposition~\ref{prop:backconn}) that $\nuconn{j+1}{i}\pset$ sends $a$ to $\setbr{\nuconn{j}{\pred{i}}b \mid a \rightsquigarrow b}$.  Is this true for $\nuconn{j+1}{\omega_1}\psetcount$, where $j < \omega_1$?
\item The finitely branching elements of $\nuordpf{\omega}$ are the $\psetfin$-coalgebraic ones.  Are the countably branching elements of $\nuordpc{\omega_1}$ all $\psetcount$-coalgebraic?
\end{itemize}
The following result will provide a negative answer to all the above questions.
\begin{prop} \label{prop:ghostomegaone}
  The set $\nuordpc{\omega_1}$ has an element $a$, distinct from $0(\omega_1)$, that has no successor.
\end{prop}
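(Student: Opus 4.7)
The plan is to build $a\in\nuordpc{\omega_1}$ as an abstract element (not arising as $\pcop{\omega_1}{x}$ for any pointed $\psetcount$-coalgebra $x$) by embedding an Aronszajn tree into the final chain; any $\fctrans{\omega_1}$-successor of $a$ would then correspond, via the embedding, to a cofinal branch through the tree, of which there are none.

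Fix (in ZFC) a normal Aronszajn tree $T$ of height $\omega_1$: each level $T_\alpha$ is countable and nonempty, every node has descendants at every strictly higher level, and $T$ has no chain of length $\omega_1$. By transfinite recursion on $\alpha<\omega_1$, construct an injective map $\phi_\alpha\colon T_\alpha\to\nuordpc{\alpha}$, with $\phi_\alpha(t)\neq\zat{\alpha}$, satisfying the compatibility $\nuconn{\beta}{\alpha}\phi_\alpha(t)=\phi_\beta(t_\beta)$ for all $\beta<\alpha$ and $t\in T_\alpha$ (where $t_\beta$ is the $\beta$-ancestor of $t$). At $\alpha=0$ send the unique root to $\emptytup$. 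At a limit $\alpha$, each $t\in T_\alpha$ forces $\phi_\alpha(t)\eqdef(\phi_\beta(t_\beta))_{\beta<\alpha}\in\lim_{\beta<\alpha}\nuordpc{\beta}=\nuordpc{\alpha}$; injectivity persists because distinct $t,t'\in T_\alpha$ diverge in their ancestor chains at some $\beta<\alpha$ where $\phi_\beta$ is already injective. At a successor $\alpha=\beta+1$, for each $s\in T_\beta$ and each child $t$ of $s$, select $\phi_\alpha(t)\in\psetcount\nuordpc{\beta}$ as a distinct non-$\zat$ preimage of $\phi_\beta(s)$ under $\nuconn{\beta}{\beta+1}$; such preimages exist by the surjectivity of connecting maps (Corollary~\ref{cor:countsurj}).

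Now define $a_{\alpha+1}\eqdef\phi_\alpha[T_\alpha]$ for $\alpha<\omega_1$ and $a_\lambda\eqdef(a_\beta)_{\beta<\lambda}$ for limit $\lambda\leq\omega_1$. Each $a_{\alpha+1}$ is a nonempty countable subset of $\nuordpc{\alpha}$. The coherence $\nuconn{\beta+1}{\alpha+1}a_{\alpha+1}=a_{\beta+1}$ follows from the compatibility built into $\phi$ (one inclusion) together with the normality of $T$ (the other: every $s\in T_\beta$ is the $\beta$-ancestor of some $t\in T_\alpha$), so $a\in\nuordpc{\omega_1}$. Since $a_1=\{\emptytup\}\neq\emptyset=\zat{1}$, we have $a\neq\zat{\omega_1}$. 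For the absence of successors: if $a\fctrans{\omega_1}b$, then $b=(b_j)_{j<\omega_1}\in\nuordpc{\omega_1}$ is a coherent full branch with $b_j\in a_{j+1}=\phi_j[T_j]$; by injectivity $b_j=\phi_j(t_j)$ for a unique $t_j\in T_j$; coherence of $b$ combined with the compatibility of $\phi$ forces $t_k=(t_j)_k$ whenever $k<j$, so $(t_j)_{j<\omega_1}$ is a cofinal branch of $T$---contradicting the Aronszajn property.

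The main obstacle is the successor stage of the $\phi$-construction: for small $\beta$ the set $\nuordpc{\beta}$ is tiny (just $2$ elements at $\beta=1$, $4$ at $\beta=2$, etc.), so there may not be enough distinct non-$\zat$ preimages of $\phi_\beta(s)$ under $\nuconn{\beta}{\beta+1}$ to accommodate the children of $s$ in $T$. This is handled by taking $T$ to have singleton levels up to some countable $\gamma\geq\omega$ and to begin branching only thereafter, exploiting the fact that $|\nuordpc{\beta}|$ has already reached size $2^{\aleph_0}$ by $\beta=\omega$.
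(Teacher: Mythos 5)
Your overall architecture matches the paper's: reduce the ghost to the existence of an Aronszajn tree, embed that tree level\-/by\-/level into $(\nuordpc{i})_{i<\omega_1}$ so that the image is a channel with countable levels and a root, and observe that a successor of the resulting element of $\nuordpc{\omega_1}$ would yield an uncountable branch of the tree. The limit stages, the coherence of the family $(a_\beta)$, the non\-/$\zat{\omega_1}$ check, and the final ``no successor'' argument are all sound. The problem is the successor stage of your recursion, and you have correctly located it but not repaired it. Surjectivity of $\nuconn{\beta}{\beta+1}$ (Corollary~\ref{cor:countsurj}) gives you \emph{one} preimage of $\phi_\beta(s)$, whereas you need countably many distinct non\-/$\zat{\beta+1}$ preimages, one for each child of $s$. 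Your patch---delay branching until $|\nuordpc{\beta}|=2^{\aleph_0}$---controls the size of the whole level, not the size of the relevant fiber, and fibers of $\nuconn{\beta}{\beta+1}$ can be singletons even for $\beta\geqslant\omega$: take $y=\pcop{\omega}{x}$ for $x$ a pointed system with a single infinite linear path of unique successors; the channel $\Psi_\omega(y)$ has all levels singletons and a unique full branch, so $y$ has a unique preimage. Worse, at a limit $\alpha$ the value $\phi_\alpha(t)$ is forced by the construction below, so you cannot simply steer toward elements with large fibers; you would need to maintain and verify an invariant (e.g.\ that every channel you build has cofinally much redundancy, so that it admits infinitely many countable sets of full branches with full range), and your write-up contains no such invariant. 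As it stands the recursion can get stuck.

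The paper avoids fiber-counting entirely by making the embedding explicit. It factors through the complete binary $\omega_1$-tree: first a cofinal embedding of the (tidy) Aronszajn tree into $(\bits^{\omega\times i})_{i<\omega_1}$, where the successor step works because each node of the binary tree has $2^{\aleph_0}$-many developments $\omega$ levels up---room for any countable set of children without ever inspecting the final chain; then the Forti--Honsell encoding $c\mapsto \pcop{j+1}{\vonset{\bitsen{c}}}$, which sends each binary sequence to an explicitly constructed \emph{coalgebraic} element, with injectivity and naturality guaranteed by Lemma~\ref{prop:restrict} ($c=d\restriction_j$ iff $\vonset{\bitsen{c}}\simi{j+1}\vonset{\bitsen{d}}$) rather than by a choice of preimages under connecting maps. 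If you want to keep your direct recursion, the cleanest repair is to replace ``choose preimages'' by this kind of explicit assignment; otherwise you must prove that the specific elements your recursion is forced to produce have infinite fibers, which is a genuine additional argument.
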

Such an element $a$ is called a \emph{ghost} because it ``appears to be alive'' (for all $j<\omega_1$, the component $a_{j+1}$ is inhabited) yet is ``dead'' (has no successors).  The negative answers are deduced as follows. 
\begin{itemize}
\item For a pointed system $x$, if $a = \pcop{\omega_1}{x}$, then $\emptytup \in a_1 = \pcop{1}{x} = \setbr{\pcop{0}{y} \mid x \rightsquigarrow y}$.  So $x$ has a successor $y$, so $a = \pcop{\omega_1}{x} \rightsquigarrow \pcop{\omega_1}{y}$, contradiction.
\item $a$ and $\zat{\omega_1}$ are bisimilar (having no successors) yet distinct.  So $\nuordpc{\omega_1}$ is not strongly extensional.
\item For $j < \omega_1$, the set $\nuconn{j+1}{i}a = a_{j+1}$  has an element $b$, but $b$ is not of the form $\nuconn{j}{i}c$ for a successor $c$ of $a$.
\item The element $a$ is countably branching but not $\psetcount$-coalgebraic.
\end{itemize}
A ghost corresponds via $\Psi_{\omega_1}$ to a channel through $(\nuordpc{i})_{i < \omega_1}$, with all levels countable, that has a root but no full branch.  The rest of the section is devoted to proving that such a channel exists.
Our proof involves three steps:
\begin{enumerate}
\item obtaining an ``Aronszajn tree'' 
\item  embedding it into the complete binary tree
\item embedding the complete binary tree into the final chain.
\end{enumerate}
We begin with the following notions.
\begin{defi}
Let $I$ be a well-ordered set with least element.  A \emph{tidy $I$-tree}\footnote{The   set-theoretic literature commonly uses a more general notion of tree.  See Section~\ref{sect:treestidy} for a comparison.} is an inverse $I$-chain $D$ with the following properties.
\begin{itemize}
\item For any $j\leqslant i \in I$, the connecting map  $\ddiag{j}{i}$ is surjective, i.e.\ every  $a \in D_j$ has an
      $i$-development. 
\item For any limit $i \in I$, every full branch  through $(D_j)_{j < i}$
    has at most one extension to a full
    branch through $(D_j)_{j \leqslant i}$.
\item $D$ has a (necessarily unique) root.
\end{itemize}
\end{defi}
Note that a tidy $I$-tree is a channel through itself.  

It is important 
 to know whether a tidy tree is guaranteed to have a full branch.  The following two results~\cite{HigmanStone:arontree,Kurepa:arontree} show that this is not always the case.  Our presentation follows~\cite{Bergman:Aronszajnnote}.  
\begin{prop}\label{prop:nofull}
For regular $\lambda > \aleph_0$, there is a tidy $\lambda$-tree that has all levels $\leqlambda$-sized and has no full branch.
\end{prop}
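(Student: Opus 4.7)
The plan is the classical Aronszajn tree construction, adapted to make the tree tidy. For the central case $\lambda=\omega_1$ I would take nodes at level $i<\omega_1$ to be bounded strictly increasing functions $\sigma\colon i\to\mathbb{Q}$, with connecting maps given by restriction. A full branch would then glue into a strictly increasing map $\omega_1\to\mathbb{Q}$, impossible by cardinality, so there is no full branch.

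Build $D$ by transfinite recursion. Set $D_0=\{\emptyset\}$. At a successor $i=j+1$, put $\sigma\cup\{(j,q)\}$ into $D_i$ for every $\sigma\in D_j$ and every $q\in\mathbb{Q}$ with $q>\sup\sigma$. At a limit $i$, for each pair $(\sigma,q)$ where $\sigma\in D_j$ for some $j<i$ and $q>\sup\sigma$, construct a single node $b_{\sigma,q}$ of length $i$ extending $\sigma$ and with supremum $q$: fix a cofinal $\omega$-sequence $(i_n)$ in $i$ with $i_0\geqslant j$ and rationals $q_n\nearrow q$ with $q_0>\sup\sigma$, then inductively choose $f_n\in D_{i_n}$ with $f_0\supseteq\sigma$, $f_n\subseteq f_{n+1}$, and $\sup f_n\leqslant q_n$, using the inductive hypothesis that every earlier node admits an extension with any prescribed upper bound (which the construction itself maintains). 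Let $b_{\sigma,q}=\bigcup_n f_n$, and set $D_i=\{b_{\sigma,q}:(\sigma,q)\text{ admissible}\}$.

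All required properties then follow. Connecting maps are surjective by design, since every $\sigma\in D_j$ extends to $\sigma\cup\{(j,q)\}$ at the next successor level and to some $b_{\sigma,q}$ at each later limit. Tidiness at limits is automatic: any full branch through $(D_k)_{k<i}$ has at most one extension to a node of $D_i$, namely its union viewed as a function $i\to\mathbb{Q}$. Regularity of $\lambda$ gives $|D_i|\leqslant\lambda$ by an easy induction on $i$.

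The main obstacle is the general case of regular $\lambda>\aleph_0$: for $\lambda>\omega_1$ no dense linear order of cardinality $<\lambda$ admits cofinal sequences of every type $<\lambda$ at every element, so the $\mathbb{Q}$-based construction has to be recast, either using a dense $\kappa$-saturated linear order of size $\kappa$ when $\lambda=\kappa^+$, or using the more combinatorial Higman--Stone and Kurepa constructions cited. The overall strategy (define levels by transfinite recursion; at limit stages insert exactly one carefully chosen branch per admissible pair; extract a cardinality obstruction to full branches) remains unchanged.
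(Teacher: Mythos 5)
Your construction for $\lambda=\omega_1$ is correct, but it proves the wrong (harder) statement and, as you yourself flag in the last paragraph, leaves the general case of regular $\lambda>\aleph_0$ unproved. That general case is a genuine gap: the proposition is invoked later (in Proposition~\ref{prop:ghostbeforei}) at arbitrary regular uncountable $\lambda$, including inaccessibles, so it cannot be waved away. Neither of your proposed repairs closes it in ZFC. A dense $\kappa$-saturated linear order of size $\kappa$ with $\lambda=\kappa^+$ exists only under extra cardinal arithmetic (essentially $\kappa^{<\kappa}=\kappa$), and in any case a $\lambda$-Aronszajn tree with $\llambda$-sized levels provably cannot exist for weakly compact $\lambda$ -- so no construction along those lines can succeed for all regular $\lambda>\aleph_0$.

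The point you have missed is that the statement only asks for levels that are $\leqlambda$-sized, not $\llambda$-sized, so no Aronszajn-type construction is needed at all. The paper uses the Higman--Stone tree: $D_i$ is the set of finite strictly increasing sequences $0=x_0<\cdots<x_n<\lambda$ with $x_n\geqslant i$ and $x_m<i$ for $m<n$, with connecting maps given by taking the unique prefix in the lower level. Each level has size $\leqslant\lambda$, surjectivity of connecting maps is immediate (append an ordinal $\geqslant i$ if needed), the limit condition holds because an extension at a limit level is forced to be the eventual value of the branch, and a full branch would produce a strictly increasing $\omega$-sequence cofinal in $\lambda$, contradicting regularity. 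This works uniformly for every regular $\lambda>\aleph_0$ and is much shorter than what you attempted. Your $\mathbb{Q}$-based argument is essentially the paper's proof of the separate, genuinely harder Theorem~\ref{thm:tidyexists} (an Aronszajn tree at $\omega_1$ with \emph{countable} levels), which is specific to $\omega_1$; conflating the two statements is what led you astray here.
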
\begin{proof}
For each $i < \lambda$, let $D_i$ be the set of strictly increasing sequences $0=x_0 < \cdots < x_n < \lambda$ of ordinals, where $x_n \geqslant i$ but $x_m < i$ for all $m < n$.  Clearly $D_i$ is $\leqlambda$-sized.   For $j \leqslant i < \lambda$, let $\ddiag{j}{i} : D_i \to D_j$ send $x \in D_i$ to its unique prefix in $D_j$.  Then $(D_i)_{i < \lambda}$ is a tidy $\lambda$-tree, and any full branch would give a cofinal $\omega$-sequence in $\lambda$, contradicting regularity.
\end{proof}

\begin{thm} \label{thm:tidyexists}
There is a tidy $\omega_1$-tree that has all levels countable and no full branch.  (Such a tree is said to be \emph{Aronszajn}.)
\end{thm}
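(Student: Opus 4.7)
The plan is to build the tree $D$ by transfinite recursion following the classical Kurepa construction as presented in~\cite{Bergman:Aronszajnnote}. A node at level $\alpha < \omega_1$ will be a strictly increasing function $f \colon \alpha \to \mathbb{Q}$ whose range is bounded above in $\mathbb{Q}$; the connecting map $\ddiag{\beta}{\alpha}$ is restriction, and the root is the empty function at level $0$. The key inductive invariant to be maintained at each stage is: \emph{for every $\beta \leq \alpha$, every $f \in D_\beta$, and every rational $q > \sup f[\beta]$, there exists some $g \in D_\alpha$ extending $f$ with $\sup g[\alpha] \leq q$.} This invariant immediately delivers surjectivity of every connecting map, which is one of the three tidiness conditions.

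At a successor stage $\alpha+1$, I would add, for each $f \in D_\alpha$ and each rational $r > \sup f[\alpha]$, the function $f \cup \{(\alpha, r)\}$ to $D_{\alpha+1}$; this keeps the level countable and preserves the invariant (given any target bound $q$, choose $r \in (\sup f[\alpha], q)$). The delicate step is the limit stage $\lambda$. For each triple $(\beta, f, q)$ with $\beta < \lambda$, $f \in D_\beta$, and $q \in \mathbb{Q}$ satisfying $q > \sup f[\beta]$, I would fix once and for all a cofinal $\omega$-sequence $\beta = \beta_0 < \beta_1 < \cdots$ in $\lambda$ together with rationals $\sup f[\beta] < q_0 < q_1 < \cdots < q$ converging to $q$; then, using the invariant at the preceding levels, inductively choose a coherent chain $f = f_0 \subset f_1 \subset \cdots$ with $f_n \in D_{\beta_n}$ and $\sup f_n[\beta_n] \leq q_n$, and place $g := \bigcup_n f_n$ into $D_\lambda$. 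The resulting $g$ is a strictly increasing function $\lambda \to \mathbb{Q}$ bounded above by $q$ and extending $f$. Since there are only countably many such triples, $D_\lambda$ remains countable, and the invariant survives into higher levels.

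The two remaining tidiness conditions and the absence of a full branch are then straightforward. Tidiness at a limit $\lambda$ is automatic: a full branch through $(D_\beta)_{\beta < \lambda}$ is a coherent family whose union is a single uniquely determined function on $\lambda$, so at most one element of $D_\lambda$ can restrict to it. A full branch through all of $(D_\alpha)_{\alpha < \omega_1}$ would yield by union a strictly increasing map $\omega_1 \to \mathbb{Q}$, which cannot exist since it would inject an uncountable set into a countable one. The main obstacle in the plan is the bookkeeping at the limit stage, where the cofinal $\omega$-sequences and the auxiliary rational approximations to $q$ must be arranged so that countability of the level and the inductive invariant are simultaneously preserved; the rest of the argument is then essentially a verification.
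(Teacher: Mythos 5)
Your construction is correct and is essentially the paper's own proof: both follow the classical Aronszajn construction (via~\cite{Bergman:Aronszajnnote}) using strictly increasing bounded sequences of rationals, with levels built by recursion maintaining an ``extend with controlled supremum'' invariant and the limit stage handled by interleaving a cofinal $\omega$-sequence of ordinals with rationals converging to the target bound. The only cosmetic difference is that the paper demands the extension's supremum equal the target rational exactly (and be rational), which slightly streamlines composing the invariant across stages, whereas your ``$\leq q$'' version needs the intermediate rationals you already introduce.
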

\begin{proof} For $i < \omega_1$, let $E_i$ be the set of strictly increasing
  $i$-sequences $x = (x_j)_{j < i}$ of nonnegative rationals.  The supremum of $x \in E_i$---taken to be $0$ if $x$ is empty---is a nonnegative real or $\infty$, and written $\supof{x}$.

We shall define a countable subset $D_i \subseteq E_i$ for all $i < \omega_1$, with the following properties.
\begin{enumerate}
\item \label{item:finsup} If $x \in D_i$ then its supremum is (finite and) rational.
\item If $j \leqslant i < \omega_1$ and $x \in D_i$ then the $j$-sequence prefix of $x$ is in $D_j$.
\item \label{item:allextend} If $j < i < \omega_1$, then for any $x \in D_j$  and rational $r > \supof{x}$, $x$ has an extension  in $D_i$ with supremum $r$.
\end{enumerate}
Suppose that we have $D_j$ for all $j < i$, with these properties.  We define $D_i$ as follows; properties (\ref{item:finsup})--(\ref{item:allextend}) are easily verified in each case.
\begin{itemize}
\item For $i = 0$, let $D_i$ consist of the empty sequence.
\item Suppose $i = j+1$.  Let $D_i$ consist of all the extended sequences $xr$ for $x \in D_j$ and rational $r > \supof{x}$.
\item Suppose $i > 0$ is a limit.  By the Axiom of Choice, choose, for each $j < i$ and $x \in D_j$ and rational $r > \supof{x}$, a strictly increasing sequence of ordinals $(k_{n})_{n \in \nats}$, where $k_0 = j$, with supremum $i$; and  a strictly increasing sequence of rationals $(s_{n})_{n \in \nats}$, where $s_{0} = \supof{x}$, with supremum $r$.    (For example, take $s_n = 2^{-n}\supof{x} + (1-2^{-n})r$.)  We define $P_n(j,x,r) \in D_{k_{n}}$ with supremum $s_n$ by induction on $n \in \nats$.
  \begin{itemize}
  \item Let $x_0$ be $x$.
  \item Let $x_{n+1}$ be an extension of $x_n$ in $D_{k_{n}}$ with supremum $s_{n+1}$.
  \end{itemize}
Then the $i$-sequence $\bigsqcup_{n \in \nats} P_{n} (j,x,r)$ has supremum $r$ and extends $x$.  Let $D_i$ be the set of all these.  
\end{itemize}
We take the inverse chain $(D_i)_{i < \omega_1}$, with connecting map $\ddiag{j}{i}$ sending $x \in D_i$ to its $j$-sequence prefix.   The limit property is clear.  The connecting map $\ddiag{j}{i}$ is surjective because, in the nontrivial case $j < i$,  every  $x \in D_j$ has an extension in $D_i$ with supremum $\supof{x} +1$.  The levels are countable by construction.  Finally, any full branch of $(D_i)_{i < \omega_1}$ would give a  strictly increasing $\omega_1$-sequence of nonnegative rationals, which does not exist.  
\end{proof}

This completes our first step.  Before describing the second, we must give a suitable notion of embedding.
\begin{defi}
 A \emph{cofinal embedding} of inverse chains $\alpha \colon (D_i)_{i \in I} \to (E_{j})_{j \in J}$ consists
    of the following.
    \begin{itemize}
    \item A monotone map $\alzero \ccons I \to J$ (the \emph{index
        map}) that is cofinal, i.e.\ for any $j \in J$ there is
      $i \in I$ such that $\alzero i \geqslant j$.
    \item For each $i \in I$, an injection
      $\alone{i} \ccons D_i \to E_{\alzero i}$ (the \emph{$i$-th level
        map}) that is natural in $i$, i.e.\ for $j \leqslant i$ in
      $I$, if $y \in D_i$ is an $i$-development of $x \in D_j$ then
      $\alone{i}y$ must be an $\alzero{i}$-development of
      $\alone{j}x$.
    \end{itemize}
For any channel $B$ through $D$, its \emph{$\alpha$-image} is the unique channel $C$ through $E$ such that, for all $i \in I$, the set $C_{\alzero{i}}$ is  $\setbr{\alone{i}x \mid x \in B_i}$. 
\end{defi}


\begin{prop} \label{prop:cofinalemb}
Let $\alpha \colon (D_i)_{i \in I} \to (E_j)_{j \in J}$ be a cofinal embedding of inverse chains.  Let $B$ be a channel through $D$, with $\alpha$-image $C$.
  \begin{enumerate}
  \item (Assuming $I$ has a least element.) If $B$ has a root, then $C$ does too.
  \item If $B$ has all levels $\lkappa$-sized, then so does $C$.
  \item \label{item:bijfb} The map $(y_j)_{j \in J} \mapsto (x_{\alone{i}})_{i \in I}$ is a bijection from the full branches of $C$ to those of $B$.
  \end{enumerate}
\end{prop}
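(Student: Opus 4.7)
The plan is first to obtain an explicit description of every level of $C$ in terms of $B$, and then to deduce (1) and (2) immediately, leaving (3) as the main content.

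Because $f$ is cofinal, for each $j \in J$ I can pick some $i \in I$ with $f(i)\geqslant j$; channel closure together with the definition of $\alpha$-image then forces
\[
C_j \;=\; E_{f(i)\geqslant j}\bigl(\alpha_i(B_i)\bigr),
\]
where I write $E_{f(i)\geqslant j}$ for the corresponding connecting map in $E$. Independence of the choice of $i$ follows from naturality of the level maps together with the fact that $B$, being a channel, satisfies $D_{i'\geqslant i}(B_{i'})=B_i$ whenever $i\leqslant i'$. This explicit description of $C$ is the only tool I need throughout.

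For (1), since $I$ has a least element $0_I$, the element $f(0_I)$ makes $J$ nonempty and hence forces a least $0_J\leqslant f(0_I)$; then $E_{f(0_I)\geqslant 0_J}(\alpha_{0_I}(b))$, where $b$ is any root of $B$, is a root of $C$. For (2), the displayed formula gives $|C_j|\leqslant|\alpha_i(B_i)|=|B_i|<\kappa$ by injectivity of $\alpha_i$ together with the hypothesis on $B$.

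The real work is (3). I would exhibit the bijection in both directions. Forward: send a full branch $(y_j)_{j\in J}$ of $C$ to the tuple with $i$-th component $\alpha_i^{-1}(y_{f(i)})$, which is well defined because $y_{f(i)}\in C_{f(i)}=\alpha_i(B_i)$ and $\alpha_i$ is injective; that the result is a full branch of $B$ then reduces, via naturality of $\alpha$ and injectivity of $\alpha_j$, to the known compatibility equation for $(y_j)$ at $f(j)\leqslant f(i)$. Backward: send a full branch $(x_i)_{i\in I}$ of $B$ to $(y_j)_{j\in J}$ with $y_j \eqdef E_{f(i)\geqslant j}(\alpha_i(x_i))$ for any $i$ satisfying $f(i)\geqslant j$. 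The main obstacle is showing that $y_j$ does not depend on this choice: for $i\leqslant i'$ both valid I would use $D_{i'\geqslant i}(x_{i'})=x_i$ together with naturality to rewrite $\alpha_i(x_i)=E_{f(i')\geqslant f(i)}(\alpha_{i'}(x_{i'}))$, after which applying $E_{f(i)\geqslant j}$ collapses both candidates to $E_{f(i')\geqslant j}(\alpha_{i'}(x_{i'}))$. Once well-definedness is in hand, the membership $y_j\in C_j$ is immediate from the explicit formula for $C_j$, compatibility with the connecting maps of $E$ is routine, and the two constructions are visibly mutually inverse.
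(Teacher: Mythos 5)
Your proof is correct. The paper states Proposition~\ref{prop:cofinalemb} without proof, treating it as routine, and your argument --- deriving the explicit description $C_j = E_{f(i)\geqslant j}(\alpha_i(B_i))$ from the two channel axioms and the definition of the $\alpha$-image, then reading off (1) and (2) and exhibiting the two mutually inverse maps on full branches for (3) --- is precisely the verification the paper leaves implicit, with the well-definedness issues correctly discharged via naturality of the level maps and the comparability of indices in the well-ordered sets $I$ and $J$.
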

By part (\ref{item:bijfb}), if $B$ has no full branch, then neither does $C$.  Our second step involves the following tidy tree.
\begin{defi}\hfill
  \begin{enumerate}
   \item Let $\bits^{i}$ be the set of $i$-sequences $(d_j)_{j < i}$ of bits.
  \item For $j \leqslant i$ and $x \in \bits^{i}$ we write $x\restriction_j$ for the restriction of $x$ to $j$.
  \item For a positive limit $i$, the \emph{complete binary $i$-tree} is $(\bits^{j})_{j < i}$, with connecting maps given by restriction.
  \end{enumerate}
\end{defi}

\begin{prop}\label{prop:allintocompbin}
Let $D$ be a tidy $\omega_1$-tree with all levels countable.  Then there is a cofinal embedding $\alpha$ of $D$ into the complete binary $\omega_1$-tree.
\end{prop}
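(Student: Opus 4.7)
The plan is to construct $\alpha$ by simultaneous transfinite recursion on $i < \omega_1$, producing the index map $f \colon \omega_1 \to \omega_1$ together with injections $\alpha_i \colon D_i \to \bits^{f(i)}$. At the base I set $f(0) = 0$ and send the unique root to $\emptytup$. At a successor stage I take $f(j+1) = f(j) + \omega$; since $D_{j+1}$ is countable, each fibre $\ddiag{j}{j+1}^{-1}(x)$ over $x \in D_j$ is countable, so I can fix an injection $h_x$ from this fibre into $\bits^{\omega}$, and then I let $\alpha_{j+1}(y)$ be the sequence in $\bits^{f(j)+\omega}$ whose initial segment of length $f(j)$ is $\alpha_j(x)$ (where $x = \ddiag{j}{j+1}(y)$) followed by $h_x(y)$. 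At a positive limit $i$ I take $f(i) = \sup_{j<i} f(j)$, and for $y \in D_i$ I let $\alpha_i(y)$ be the unique element of $\bits^{f(i)}$ whose restriction to $f(j)$ is $\alpha_j(\ddiag{j}{i}y)$ for every $j < i$; the inductively established naturality below $i$ guarantees that these restrictions form a coherent system and so determine this element uniquely.

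I would then verify the four required properties. Naturality propagates by transfinite induction: at successor and limit stages the one-step restriction identities hold by construction, and composition of connecting maps handles the general case. Each $f(i)$ is a countable ordinal -- by a routine induction, $\omega$-additions of countables and countable suprema of countables stay below $\omega_1$ -- so $f$ indeed lands in $\omega_1$; being strictly increasing at successors and non-decreasing at limits it is unbounded, hence cofinal in $\omega_1$.

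Injectivity of each $\alpha_i$ is the heart of the argument. At successor steps it is immediate: distinct $y,y'$ sharing a parent $x$ yield distinct $h_x$-codes, while those with different parents are already separated on the first $f(j)$ coordinates by the inductive injectivity of $\alpha_j$. At a positive limit $i$, suppose $y \ne y'$ in $D_i$; the full branches $(\ddiag{j}{i}y)_{j<i}$ and $(\ddiag{j}{i}y')_{j<i}$ must themselves differ, for otherwise they would be two distinct extensions of a common branch through $(D_j)_{j<i}$ up to level $i$, contradicting the tidy property. Hence $\ddiag{j}{i}y \ne \ddiag{j}{i}y'$ for some $j<i$, and the inductive injectivity of $\alpha_j$ transports this to $\alpha_i(y) \ne \alpha_i(y')$.

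The main obstacle is not any single delicate calculation but the coordination at limits: tidiness is precisely the hypothesis that prevents collapse at limit levels and allows injectivity to propagate, while countability of each $D_j$ is what makes the $\omega$ extra bits introduced per successor stage sufficient to record all new branching.
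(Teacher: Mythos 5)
Your construction is correct and is essentially the paper's own proof: the index map you define recursively is exactly $i \mapsto \omega \times i$, the successor step codes each countable fibre injectively into the $2^{\aleph_0}$-many extensions, and the limit step takes the unique coherent extension, with injectivity there coming from the fact that an element of a limit level of a tidy tree is determined by its restrictions. Your explicit appeal to the tidy property at limits just spells out what the paper states as ``$x$ is determined by $(\ddiag{j}{i}x)_{j<i}$''.
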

\begin{proof}
The index map is $i \mapsto \omega \times i$, which is well-defined and cofinal because $i<\omega_1$ implies $i \leqslant \omega \times i < \omega_1$.  The injection $\alone{i} \ccons D_i \to \bits^{\omega \times i}$ is defined by induction on $i$, ensuring naturality wrt all $j \leqslant i$.  For the successor case, we choose an injection from the countably many $i+1$-developments of each $x \in D_i$ to the $2^{\aleph_0}$-many $\omega\times (i+1)$ developments of $\alone{i}x \in \bits^{\omega \times i}$.  The case where $i$ is a limit is uniquely defined: each $x \in D_i$ is mapped to the unique $i$-sequence that, for all $j < i$, extends $\alone{j}\ddiag{j}{i}x$.  This is injective because $x$ is determined by $(\ddiag{j}{i})_{j < i}$.
\end{proof}

\begin{cor}\label{cor:aronbin}
  There is a channel through the complete binary $\omega_1$-tree, with all levels countable, that has a root but no full branch.
\end{cor}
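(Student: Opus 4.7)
The plan is to assemble the corollary directly from the three ingredients built up in this section. Start with a tidy $\omega_1$-tree $D$ with countable levels and no full branch, which exists by Theorem~\ref{thm:tidyexists}. Note that $D$ is a channel through itself: it is rooted (by the definition of a tidy tree), all levels are countable by construction, and the channel conditions for $(D_i)_{i < \omega_1}$ viewed as a sub-chain of itself are trivially satisfied (downward closure is tautological, and the $i$-development condition is exactly surjectivity of the connecting maps, part of tidiness).

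Next, invoke Proposition~\ref{prop:allintocompbin} to obtain a cofinal embedding $\alpha \colon D \to (\bits^{j})_{j < \omega_1}$ of $D$ into the complete binary $\omega_1$-tree. Let $C$ be the $\alpha$-image of the channel $D$ (viewed as a channel through itself), so $C$ is a channel through the complete binary $\omega_1$-tree.

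Finally, apply Proposition~\ref{prop:cofinalemb}: since $D$ is rooted, $C$ is rooted; since every $D_i$ is countable, every level of $C$ is countable; and by part~(\ref{item:bijfb}) the full branches of $C$ are in bijection with the full branches of $D$, which are absent by Theorem~\ref{thm:tidyexists}, so $C$ has no full branch. This is exactly the channel required. No step in this argument is an obstacle in itself---the substantive work has already been done in Theorem~\ref{thm:tidyexists} and Proposition~\ref{prop:allintocompbin}---so the proof proposal is essentially a one-line deduction assembling those results.
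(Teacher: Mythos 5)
Your proposal is correct and follows exactly the paper's own argument: the paper's proof is the one-liner ``Proposition~\ref{thm:tidyexists} gives an Aronszajn tidy tree. Embed it into the complete binary $\omega_1$-tree and take its image,'' relying implicitly on the earlier observation that a tidy tree is a channel through itself and on Proposition~\ref{prop:cofinalemb} for preservation of the root, level sizes, and full branches. You have simply made those implicit appeals explicit.
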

\begin{proof} Proposition~\ref{thm:tidyexists} gives an Aronszajn tidy tree.  Embed it into the complete binary $\omega_1$-tree and take its image. \end{proof}

It remains to cofinally embed the complete binary $\omega_1$-tree into $(\nuordpc{i})_{i < \omega_1}$.  For this we encode a sequence of bits as a set of ordinals, following~\cite{FortiHonsell:modselfdescript}[Lemmas~2.2 and~2.3].
\begin{defi}
  For any ordinal $i$, and $c \in \bits^{i}$ we define 
  \begin{displaymath}
    \bitsen{c} = \setbr{j < i \mid c_j = 1} \cup \setbr{i}
  \end{displaymath}
\end{defi}

 For each  set of ordinals $R$, let $\vonoset{R}$ be a parent of $(\vono{i})_{i \in R}$.
\begin{lem}\label{prop:restrict}
  Let $j\leqslant i$ and $c\in \bits^{j}$ and $d \in \bits^{i}$.  Then $c = d \restriction_j$ iff $\vonset{\bitsen{c}} \simi{j+1} \vonset{\bitsen{d}}$.
\end{lem}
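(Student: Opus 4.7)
The plan is to unpack both sides into statements about the elements of $\bitsen{c}$ and $\bitsen{d}$ and then compare. First I would apply the parent construction to identify the successors of $\vonset{\bitsen{c}}$ and $\vonset{\bitsen{d}}$: they are precisely (embedded copies of) $\von{k}$ for $k$ ranging over $\bitsen{c}$ and $\bitsen{d}$, respectively, each listed without repetition. Embeddings preserve bisimilarity approximants, so questions about $\simi{j+1}$ can be phrased purely in terms of the families $(\von{k})_{k \in \bitsen{c}}$ and $(\von{k})_{k \in \bitsen{d}}$.

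Next I would unfold the definition of $\simi{j+1}$: the condition $\vonset{\bitsen{c}} \simi{j+1} \vonset{\bitsen{d}}$ says that for every $k \in \bitsen{c}$ there is $m \in \bitsen{d}$ with $\von{k} \simi{j} \von{m}$, and vice versa. By Proposition~\ref{prop:von}(\ref{item:vonsimi}), $\von{k} \simi{j} \von{m}$ holds iff $k=m$ or $j \leqslant k,m$. In other words, $\simi{j}$ on the $\von{}$-family is the equivalence relation on ordinals that keeps ordinals $<j$ distinct and collapses all ordinals $\geqslant j$ into a single class. So $\vonset{\bitsen{c}} \simi{j+1} \vonset{\bitsen{d}}$ iff $\bitsen{c}$ and $\bitsen{d}$ have the same image in this quotient.

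The last step is to compute those quotients. From the definition, $\bitsen{c} \cap \{k : k < j\} = \{k<j \mid c_k=1\}$ and, using $j \leqslant i$, also $\bitsen{d} \cap \{k : k < j\} = \{k<j \mid d_k=1\}$. Moreover both sets contain at least one ordinal $\geqslant j$: the ordinal $j$ belongs to $\bitsen{c}$ and the ordinal $i$ belongs to $\bitsen{d}$, so the ``top'' class of the quotient is hit by each. Hence the two quotient images coincide iff $\{k<j \mid c_k=1\} = \{k<j \mid d_k=1\}$, i.e.\ iff $c = d\restriction_j$.

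The only real subtlety is bookkeeping around the element $j$ itself: one must verify that $j \in \bitsen{c}$ and some element $\geqslant j$ in $\bitsen{d}$ always mutually witness each other in the back-and-forth, regardless of whether further ordinals $\geqslant j$ happen to lie in $\bitsen{d}$, so that the condition reduces cleanly to equality of the truncations below $j$. Once that is observed the equivalence is immediate.
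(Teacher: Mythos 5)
Your proof is correct. The paper actually states Lemma~\ref{prop:restrict} without proof (deferring to the Forti--Honsell encoding it cites), so there is no official argument to compare against; your route --- unfolding $\simi{j+1}$ on the parents, invoking Proposition~\ref{prop:von}(\ref{item:vonsimi}) to see that $\simi{j}$ identifies exactly the ordinals $\geqslant j$ while separating those below $j$, and then checking that the top equivalence class is witnessed on both sides by $j \in \bitsen{c}$ and $i \in \bitsen{d}$ --- is precisely the intended argument, and your final remark about the bookkeeping at $j$ is the one point that genuinely needs saying.
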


\begin{cor} \label{cor:bintreeinto}
There is a cofinal embedding $\beta$ from the complete binary $\omega_1$-tree into $(\nuordpc{i})_{i < \omega_1}$, with index map $i \mapsto i+1$.
\end{cor}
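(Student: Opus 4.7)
The plan is to define the level maps by $\beta_j(c) \eqdef \pcop{j+1}{\vonset{\bitsen{c}}}$ for each $j < \omega_1$ and $c \in \bits^{j}$, with index map $j \mapsto j+1$. This index map is monotone and cofinal in $\omega_1$ since $k \leqslant k+1 < \omega_1$ for any $k < \omega_1$.

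To see that $\beta_j(c)$ actually lies in $\nuordpc{j+1}$, note that $\bitsen{c} \subseteq j+1$ is countable (because $j<\omega_1$) and that for every $i\in\bitsen{c}$ the pointed system $\vono{i}$ has $i$ many, hence countably many, successors. So the parent $\vonset{\bitsen{c}}$ is a countably-branching pointed system, which makes its coalgebra projection a $\psetcount$-coalgebraic element of $\nuordp{j+1}$, i.e.\ an element of $\nuordpc{j+1}$.

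Naturality is an immediate application of Lemma~\ref{prop:restrict}. Given $k\leqslant j<\omega_1$ and $d\in\bits^{j}$ with $c\eqdef d\restriction_{k}$, the lemma (with its ``$j$'' taken to be our $k$ and its ``$i$'' taken to be our $j$) yields $\vonset{\bitsen{c}}\simi{k+1}\vonset{\bitsen{d}}$, whence $\pcop{k+1}{\vonset{\bitsen{c}}}=\pcop{k+1}{\vonset{\bitsen{d}}}$ by Proposition~\ref{prop:kernelsim}(\ref{item:kernelsimi}). Because coalgebra projections assemble into a full branch of the final chain, $\nuconn{k+1}{j+1}\circ\pcop{j+1}{-}=\pcop{k+1}{-}$; rewriting, this becomes $\nuconn{k+1}{j+1}\beta_j(d)=\beta_k(c)$, exactly the required naturality.

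Injectivity is the converse direction: if $\beta_j(c)=\beta_j(c')$ for $c,c'\in\bits^{j}$, Proposition~\ref{prop:kernelsim}(\ref{item:kernelsimi}) gives $\vonset{\bitsen{c}}\simi{j+1}\vonset{\bitsen{c'}}$, and Lemma~\ref{prop:restrict} (with $c$ and $c'$ as its ``$c$'' and ``$d$'' and with both ``$j$'' and ``$i$'' equal to our $j$) forces $c=c'\restriction_{j}=c'$. The combinatorial heart of the argument---tying bit-sequence restriction to approximants of bisimilarity on the parent construction---is already absorbed into Lemma~\ref{prop:restrict}, so the only minor obstacle in this corollary is the codomain check that $\beta_j(c)\in\nuordpc{j+1}$, addressed above.
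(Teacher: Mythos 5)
Your proposal is correct and follows essentially the same route as the paper: the same level maps $\beta_j(c) = \pcote{j+1}{\vonset{\bitsen{c}}}$ with index map $j \mapsto j+1$, with injectivity and naturality both extracted from Lemma~\ref{prop:restrict} via Proposition~\ref{prop:kernelsim}. You merely spell out details the paper leaves implicit (the codomain check that $\vonset{\bitsen{c}}$ is countably branching, and the identity $\nuconn{k+1}{j+1}\circ\pcop{j+1}{-}=\pcop{k+1}{-}$), which is fine.
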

\begin{proof}
The injection $\beta_j \colon \bits^{j} \to \nuordpk{j+1}$ sends $c$ to $\pcote{j+1}{\vonset{\bitsen{c}}}$.  Injectivity and naturality follow from Lemma~\ref{prop:restrict}.
\end{proof}
We now complete our proof.  Corollary~\ref{cor:aronbin} gives a channel $E$ through the complete binary $\omega_1$-tree, with all levels countable, that has a root but no full branch.  Its $\beta$-image is a channel through $(\nuordpc{i})_{i < \omega_1}$ with the same properties.

\section{The Number of Successors} \label{sect:numbersucc}

We further consider the final chain of $\psetcount$.   Proposition~\ref{prop:kappasucc} tells us that some element of $\nuordpc{\omega}$ has $\aleph_0$ successors (since $\nuordpc{\omega} = \nuordpf{\omega}$), and some element of $\nuordpc{\omega_1}$ has $\aleph_1$ successors.  Are there elements with more successors than this? Although this question is a digression from our main narrative, it provides an application of the infrastructure we have assembled.
\begin{prop}
  The size of $\nuordpc{i}$ is 
  \begin{itemize}
  \item finite and positive, if $i$ is finite
  \item $2^{\aleph_0}$, if $\omega \leqslant i < \omega_1$
  \item $2^{\aleph_1}$, if $i \geqslant \omega_1$. 
  \end{itemize}
\end{prop}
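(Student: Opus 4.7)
The proof proceeds by cases on the size of $i$, matching an upper and a lower cardinal bound in each range.

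For finite $i$, an easy induction suffices: $\nuordpc{0} = \setbr{\emptytup}$ has size one, and since $\psetcount$ agrees with $\pset$ on finite sets, $|\nuordpc{i+1}| = 2^{|\nuordpc{i}|}$, so $\nuordpc{i}$ is finite and nonempty.

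For $\omega \leqslant i < \omega_1$, the upper bound is a transfinite induction: $|\psetcount(X)| \leqslant |X|^{\aleph_0}$ at a successor step, and $|\nuordpc{i}| \leqslant \prod_{j<i}|\nuordpc{j}| \leqslant (2^{\aleph_0})^{\aleph_0} = 2^{\aleph_0}$ at a countable limit. For the matching lower bound I would restrict the cofinal embedding $\beta$ of Corollary~\ref{cor:bintreeinto} to indices $j<\omega$, giving a cofinal embedding of the complete binary $\omega$-tree into $(\nuordpc{i})_{i<\omega}$; its $2^{\aleph_0}$ full branches biject with $2^{\aleph_0}$ distinct elements of $\nuordpc{\omega}$ by Proposition~\ref{prop:cofinalemb}(\ref{item:bijfb}), and surjectivity of $\nuconn{\omega}{i}$ from Corollary~\ref{cor:countsurj} propagates the lower bound to all $\omega \leqslant i < \omega_1$.

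For $i \geqslant \omega_1$, the upper bound $\leqslant 2^{\aleph_1}$ follows from the product bound $|\nuordpc{\omega_1}| \leqslant (2^{\aleph_0})^{\aleph_1} = 2^{\aleph_1}$ together with Proposition~\ref{prop:injective}, which gives $|\nuordpc{i}| \leqslant |\nuordpc{\omega_1}|$ via the injective connecting map $\nuconn{\omega_1}{i}$. For the lower bound at $i = \omega_1$ I would apply Corollary~\ref{cor:bintreeinto} in full: the binary $\omega_1$-tree has $|\bits^{\omega_1}| = 2^{\aleph_1}$ full branches, which by Proposition~\ref{prop:cofinalemb}(\ref{item:bijfb}) correspond bijectively to $2^{\aleph_1}$ distinct elements of $\nuordpc{\omega_1}$. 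For $i > \omega_1$, the singleton injection $a \mapsto \setbr{a}$ from $\nuordpc{i}$ into $\psetcount(\nuordpc{i}) = \nuordpc{i+1}$ propagates the lower bound through successor ordinals $\omega_1 + n$, and Proposition~\ref{prop:ifstable} gives constant cardinality for $i \geqslant \omega_1 + \omega$ via stabilization.

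The main obstacle is pinning down the stable value $|\nuordpc{\omega_1 + \omega}|$ from below: neither singletons nor surjective propagation apply at this limit, and the encoding $R \mapsto \pcote{\omega_1}{\vonset{R}}$ used to prove $|\nuordpc{\omega_1}| \geqslant 2^{\aleph_1}$ breaks down for uncountable $R$, because $\vonset{R}$ is then no longer countably branching and the resulting element has an uncountable successor set at level $\omega_1 + 1$. Closing this gap amounts to showing $|\nu\psetcount| \geqslant 2^{\aleph_1}$ by exhibiting $2^{\aleph_1}$ pairwise non-bisimilar countably-branching strongly extensional pointed systems, exploiting the full $\omega_1$-depth available in such systems rather than any single $\omega_1$-branching root to encode arbitrary subsets of $\omega_1$.
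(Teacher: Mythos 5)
Your argument is correct---and in places more careful than the paper's---at every index of the form $\omega_1+n$ and below. The paper's own proof of the lower bounds is more direct than yours: it exhibits the families $(\pcote{i}{\vonset{R}})_{R \subseteq \omega}$ and $(\pcote{i}{\vonset{J}})_{J \subseteq \omega_1}$, where $\vonset{R}$ is a parent of $(\vono{j})_{j \in R}$, and checks pairwise distinctness via Proposition~\ref{prop:von}; your route through Corollary~\ref{cor:bintreeinto} and Proposition~\ref{prop:cofinalemb}(\ref{item:bijfb}) encodes essentially the same witnesses, at the cost of extra machinery but with the benefit that membership in $\nuordpc{i}$ is guaranteed by construction. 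Your upper bounds and your use of singleton injections and surjective connecting maps to propagate lower bounds are all fine.

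The gap you flag at $\omega_1+\omega$ is genuine, and your diagnosis of why the encoding fails there is exactly right: $\pcote{\omega_1+1}{\vonset{J}} = \setbr{\pcote{\omega_1}{\vono{k}} \mid k \in J}$ has $|J|$ distinct elements by Proposition~\ref{prop:von}, so for uncountable $J$ it does not lie in $\psetcount(\nuordpc{\omega_1})$. The paper's proof asserts that this family gives elements of $\nuordpc{i}$ for all $i \geqslant \omega_1$ ``by the same argument'', which is therefore an error, not something you failed to see. But the repair you propose cannot succeed either. By stabilization (Proposition~\ref{prop:ifstable} together with the discussion following Proposition~\ref{prop:rangeinj}), $\nuordpc{\omega_1+\omega}$ is a final coalgebra of $\psetcount$, so its elements are exactly the coalgebra projections of countably branching pointed systems modulo bisimilarity. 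The part of such a system reachable under $\rightsquigstar$ (finitely many steps) is a countable union of countable sets, hence countable, so every countably branching pointed system is bisimilar to one carried by a subset of $\nats$; there are at most $2^{\aleph_0}$ of those. Hence $|\nuordpc{i}| = 2^{\aleph_0}$ for all $i \geqslant \omega_1+\omega$, there is no family of $2^{\aleph_1}$ pairwise non-bisimilar countably branching systems, and the third bullet of the statement is provable only in the range $\omega_1 \leqslant i < \omega_1+\omega$ (beyond that it holds iff $2^{\aleph_0} = 2^{\aleph_1}$). In short, what you found is a real error in the proposition and its proof rather than a removable gap in your own argument.
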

\begin{proof} 
Induction on $i$ gives the upper bound, since $\psetcount$ and countable limit preserves $\leqslant\! 2^{\aleph_0}$-sizedness, and $\omega_1$-limit preserves $\leqslant\! 2^{\aleph_1}$-sizedness.  From $\omega_1$ onwards the sets cannot get bigger.  The lower bound is proved as follows.  For a set of ordinals $R$, we write $\vonoset{R}$ for a parent of $(\vono{i})_{i \in R}$.
\begin{itemize}
\item For $i \geqslant \omega$ we have a family  $(\pcote{i}{\vonset{R}})_{R \subseteq \omega}$ of $2^{\aleph_0}$ distinct elements of $\psetcount{i}$. For distinctness, let $R,S \subseteq \omega$ with $n \in R \setminus S$.  Then $\vonset{R} \not \simi{n+2} \vonset{S}$ because there is no $m \in S$ such that $\von{n} \simi{n+1} \von{m}$. 
\item For $i \geqslant \omega_{1}$ we have a family  $(\pcote{i}{\vonset{J}})_{J \subseteq \omega_1}$ of $2^{\aleph_1}$ distinct elements of $\psetcount{i}$, by the same argument.
  \qedhere
\end{itemize}
\end{proof}

\begin{prop}
For every positive limit $i < \omega_1$, some element of  $\nuordpc{i}$ has $2^{\aleph_0}$ successors.
\end{prop}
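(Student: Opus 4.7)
The plan is to construct, for each positive limit $i < \omega_{1}$, a countably branching pointed system whose projection to $\nuordpc{i}$ has $2^{\aleph_0}$ successors; the matching upper bound is immediate from $|\nuordpc{i}| = 2^{\aleph_0}$ established in the previous proposition.  The construction is in the Forti--Honsell style used for the ghost in Section~\ref{sect:ghostomegaone}.

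Fix a strictly increasing sequence $(i_n)_{n \in \nats}$ with supremum $i$, possible because $i$ is countable.  For each $s \in \bits^{<\omega}$ of length $n$, put $z_s \eqdef \vonset{\setbr{i_m : m < n, s_m = 1} \cup \setbr{i_n}}$, and for each $\sigma \in \bits^{\omega}$ put $y_\sigma \eqdef \vonset{\setbr{i_n : \sigma_n = 1} \cup \setbr{i}}$.  Take $x$ to be the root of a pointed system with successors enumerating $\setbr{z_s : s \in \bits^{<\omega}}$.  Then $x$ is countably branching, so $\pcop{i}{x}$ lies in $\nuordpc{i}$.

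The argument has two claims.  For approximation I would verify $z_{\sigma \restriction n} \simi{i_n + 1} y_\sigma$ using Proposition~\ref{prop:von}(\ref{item:vonsimi}): at level $\simi{i_n}$ the successors $\von{i_m}$ with $m \geq n$ all collapse with $\von{i}$ into one ``big'' equivalence class, while the $\von{i_m}$ with $m < n$ remain in pairwise distinct singletons.  The presence of $\setbr{i_n}$ in the set underlying $z_{\sigma \restriction n}$ and of $\setbr{i}$ in the one underlying $y_\sigma$ is what ensures both sides realise the big class, so the successor $\simi{i_n}$-classes on the two sides agree.  Given any $j < i$, picking $n$ with $i_n \geq j$ and applying Proposition~\ref{prop:pcoppcop} then gives $\pcop{i}{x} \fctrans{i} \pcop{i}{y_\sigma}$ for every $\sigma$.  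For distinctness, if $\sigma \neq \tau$ first differ at index $n^{*}$, the analogous calculation at level $\simi{i_{n^{*}+1}+1}$ shows that $\von{i_{n^{*}}}$ sits in its own $\simi{i_{n^{*}+1}}$-class and appears as a successor of exactly one of $y_\sigma, y_\tau$, so $y_\sigma \not\simi{i} y_\tau$ and hence $\pcop{i}{y_\sigma} \neq \pcop{i}{y_\tau}$.  Together, the $y_\sigma$'s witness $2^{\aleph_0}$ distinct successors of $\pcop{i}{x}$.

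The main obstacle will be the $\simi{\cdot}$-bookkeeping: one must juggle the offsets $i_n$ versus $i_n + 1$ so that the tail elements $\von{i_m}$ for $m \geq n$ together with $\von{i}$ collapse at the right level for the approximation step, yet $\von{i_{n^{*}}}$ can be isolated at the level used for the distinctness step.
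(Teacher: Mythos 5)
Your proposal is correct and is essentially the paper's own construction: your sets $\setbr{i_m : m<n,\ s_m=1}\cup\setbr{i_n}$ are exactly the encodings $\bitsen{\alpha_n(s)}$ obtained by composing the paper's padding embedding $\alpha$ of the complete binary $\omega$-tree into the complete binary $i$-tree (with index map $n\mapsto i_n$) with the Forti--Honsell encoding $\beta$ of Corollary~\ref{cor:bintreeinto}. The only difference is presentational: you verify the transition and distinctness claims by hand via Propositions~\ref{prop:pcoppcop} and~\ref{prop:von}, whereas the paper transports the $2^{\aleph_0}$ full branches of the binary $\omega$-tree through the cofinal-embedding/channel machinery (Proposition~\ref{prop:cofinalemb}) and reads off the successors via $\Psi_i$.
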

\begin{proof}
Let $(i_n)_{n \in \nats}$ be a strictly increasing sequence with supremum $i$.  
 We define a cofinal embedding $\alpha$ of the complete binary $\omega$-tree into the complete binary
  $i$-tree with index map $n \mapsto i_n$ as follows: the injection $\alpha_{n}$ sends $(c_m)_{m < n}$ to $(d_j)_{j < i_n}$, where $d_j$ is $c_m$ if $j = i_m$ for some (unique) $m < n$ and $0$ otherwise.  Corollary~\ref{cor:bintreeinto} gives a cofinal embedding $\beta$ of the complete binary $i$-tree into $(\nuordpc{j})_{j < i}$.  The $\beta\alpha$-image of the complete binary $\omega$-tree is a channel through $(\nuordpc{j})_{j < i}$
  with $2^{\aleph_0}$ full branches, corresponding across $\Psi_i$ to an
  element of $\nuordpc{i}$ with $2^{\aleph_0}$ successors.
\end{proof}

At $\omega_1$ the question is harder to answer.
\begin{prop}
For any cardinal $\lambda$, the following are equivalent.
\begin{enumerate}
  \item \label{item:kurel} Some element of $\nuordpc{\omega_1}$ has precisely $\lambda$ successors.
 \item \label{item:kurcount} There is a tidy $\omega_1$-tree, with all levels countable, that has precisely $\lambda$ full branches.
  \end{enumerate}
\end{prop}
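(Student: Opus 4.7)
The plan exploits the correspondence via $\Psi_{\omega_1}$ between elements of $\nuordpc{\omega_1}$ and channels through $(\nuordpc{j})_{j<\omega_1}$ with all levels countable, under which the successors of $a \in \nuordpc{\omega_1}$ biject with the full branches of $\Psi_{\omega_1}(a)$. The equivalence therefore reduces to matching tidy $\omega_1$-trees with countable levels on the one side against rooted countable-levelled channels through $(\nuordpc{j})_{j<\omega_1}$ on the other, counted by full branches.

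For (\ref{item:kurcount})$\Rightarrow$(\ref{item:kurel}), I would take a tidy $\omega_1$-tree $D$ with all levels countable and $\lambda$ full branches, view it as a channel through itself, and transport it first along the cofinal embedding into the complete binary $\omega_1$-tree from Proposition~\ref{prop:allintocompbin}, and then along the cofinal embedding into $(\nuordpc{j})_{j<\omega_1}$ from Corollary~\ref{cor:bintreeinto}. Each step, by Proposition~\ref{prop:cofinalemb}, preserves countability of levels, preserves the root, and bijects full branches; applying $\Psi_{\omega_1}^{-1}$ to the final image then yields the required element of $\nuordpc{\omega_1}$ with precisely $\lambda$ successors.

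For (\ref{item:kurel})$\Rightarrow$(\ref{item:kurcount}), the case $\lambda = 0$ is settled directly by the Aronszajn tree of Theorem~\ref{thm:tidyexists}. For $\lambda \geq 1$, given $a \in \nuordpc{\omega_1}$ with $\lambda$ successors, I would show that the channel $C = \Psi_{\omega_1}(a)$ is itself a tidy $\omega_1$-tree with $\lambda$ full branches: surjectivity of the connecting maps is built into the channel definition; uniqueness of extension at any limit $i < \omega_1$ follows since $\nuordpc{i} = \lim_{j<i}\nuordpc{j}$, so each element of $C_i \subseteq \nuordpc{i}$ is determined by its projections to lower levels; and the existence of at least one full branch forces $C_0 = \setbr{\emptytup}$, the unique root.

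The main obstacle is the $\lambda = 0$ case: there the channel could arise from $\zat{\omega_1}$, all of whose levels are empty, failing the root requirement of a tidy tree, so one cannot uniformly read a tidy tree off the given element. Splitting this case off and invoking Theorem~\ref{thm:tidyexists} sidesteps the difficulty; alternatively, the ghost of Proposition~\ref{prop:ghostomegaone} itself furnishes a channel (hence a tidy tree) with root $\setbr{\emptytup}$ and no full branch.
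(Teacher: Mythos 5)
Your proposal is correct and follows essentially the same route as the paper: the (\ref{item:kurcount})$\Rightarrow$(\ref{item:kurel}) direction uses exactly the two cofinal embeddings (Proposition~\ref{prop:allintocompbin} then Corollary~\ref{cor:bintreeinto}) together with Proposition~\ref{prop:cofinalemb}, and the (\ref{item:kurel})$\Rightarrow$(\ref{item:kurcount}) direction reads the tidy tree off the channel $\Psi_{\omega_1}(a)$, with the degenerate case handled by Theorem~\ref{thm:tidyexists}. The only cosmetic difference is that you split the converse by $\lambda=0$ versus $\lambda\geq 1$ where the paper splits by $a = \zat{\omega_1}$ versus not, and you spell out the verification that the channel is a tidy tree, which the paper leaves implicit.
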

\begin{proof} 

For  (\ref{item:kurel})$\Rightarrow$(\ref{item:kurcount}), any such element corresponds across $\Psi_{\omega_1}$ to the desired tree---unless it is $\zat{\omega_1}$, but in that case $\lambda = 0$ and Theorem~\ref{thm:tidyexists} gives~(\ref{item:kurcount}).

For  (\ref{item:kurcount})$\Rightarrow$(\ref{item:kurel}), let $D$ be such a tree.   We embed it cofinally into the complete binary $\omega_1$-tree (Proposition~\ref{prop:allintocompbin}), and embed this in turn into $(\nuordpc{i})_{i < \omega_1}$ (Corollary~\ref{cor:bintreeinto}), and the image corresponds via $\Psi_{\omega_1}$ to the desired element. \end{proof}
Thus the existence of an element of $\nuordpc{\omega_1}$ with more than $\aleph_1$ successors is equivalent to the existence of a tidy $\omega_1$-tree, with all levels countable, that has more than $\aleph_1$ full branches.  Such a tree is said to be \emph{Kurepa} and its existence is independent of ZFC, under certain assumptions\footnote{ZFC cannot prove that a Kurepa tree exists, if ZFC $+$ an inaccessible is consistent~\cite{Silver:kurepa}.
  ZFC cannot prove that a Kurepa tree does not exist, if ZFC is consistent (Solovay).\label{foot:kurepa}}.

\section{Beyond $\omega_1$} \label{sect:beyond}
\subsection{Properties of Ordinals and Cardinals}

We have now completed our study of the final chain of the countable powerset functor.  But for $\psetkappa$ we can continue, by asking what happens in the final chain beyond $\omega_1$.  We shall see that, surprisingly, there is a sharp division between those limit ordinals where strong extensionality holds and those where there is a ghost.   This is our main result, Theorem~\ref{prop:finalcover} below.  The same method of~\cite{FortiHonsell:modselfdescript} is adapted to construct all these ghosts. 

Our classification relies on the following properties.
\begin{defi}
  An  ordinal is \emph{$\omega$-cofinal} when it is the supremum of 
a strictly increasing sequence $(i_n)_{n \in \nats}$.
\end{defi}
Thus every countable limit is either $0$ or $\omega$-cofinal.
\begin{defi} \label{def:treepropinacc}
  Let $\lambda$ be a regular infinite cardinal.
  \begin{enumerate}
  \item \label{item:treeprop} $\lambda$ has the \emph{tree property} when every tidy $\lambda$-tree with all levels $\llambda$-sized has a full branch.  
  \item $\lambda$ is  \emph{strongly inaccessible} when, for every $\llambda$-sized set $A$, the set $\pset A$ is also $\llambda$-sized.
  \item $\lambda$ is \emph{weakly compact} when it is strongly inaccessible and has the tree property.
  \end{enumerate}
\end{defi}
Note that in Definition~\ref{def:treepropinacc}(\ref{item:treeprop}), it is essential for the levels to be $\llambda$-sized, because of Proposition~\ref{prop:nofull}.   Only in the case $\lambda = \aleph_0$ is a full branch guaranteed to exist without this condition.  By Theorem~\ref{thm:tidyexists}, $\aleph_1$ does not have the tree property.  Whether $\aleph_2$ has it is independent of ZFC under certain assumptions\footnote{\label{foot:treeprop}ZFC cannot prove that $\aleph_2$ has the tree property, if ZFC is consistent~\cite{Specker:problemsikorski}.  ZFC cannot prove that $\aleph_2$ lacks the tree property, if ZFC $+$ a weakly compact cardinal is consistent~\cite{Mitchell:aronszajn}.}.

We deem $\aleph_0$ to be strongly inaccessible (and hence weakly compact).  Successor cardinals such as $\aleph_1$ are not inaccessible.  The importance of strong inaccessibility  in our story comes from the following fact.
\begin{prop}\label{prop:stronginacc}
  If $\lambda$ is strongly  inaccessible, then for all $i< \lambda$ the set $\nuordp{i}$ is $\llambda$-sized.
\end{prop}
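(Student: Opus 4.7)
The plan is to argue by transfinite induction on $i < \lambda$, using the two consequences of strong inaccessibility: closure of $\llambda$-sized sets under the powerset operation (by definition), and, since $\lambda$ is regular, closure of $\lambda$ under taking suprema of $\llambda$-many $\llambda$-sized cardinals, and hence under forming products $\prod_{j<i}A_j$ of $\llambda$-many $\llambda$-sized sets.

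First I would dispose of the base and successor steps. For $i=0$ we have $\nuordp{0} = \setbr{\emptytup}$, which is trivially $\llambda$-sized since $\lambda$ is infinite. For $i = j+1$ we have $\nuordp{i} = \pset\nuordp{j}$, so if $\nuordp{j}$ is $\llambda$-sized then strong inaccessibility (applied directly in the form given in Definition~\ref{def:treepropinacc}(2)) gives that $\nuordp{i}$ is $\llambda$-sized.

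The limit case is the main work. Suppose $i < \lambda$ is a limit and $\nuordp{j}$ is $\llambda$-sized for all $j < i$. Since $\nuordp{i}$ is the set of full branches of $(\nuordp{j})_{j<i}$, we have an injection $\nuordp{i} \hookrightarrow \prod_{j < i}\nuordp{j}$, and it suffices to bound the cardinality of this product. Let $\mu \eqdef \sup_{j < i}|\nuordp{j}|$. Since $|i| < \lambda$ and the function $j \mapsto |\nuordp{j}|$ takes values $\llambda$, regularity of $\lambda$ gives $\mu < \lambda$. Then $|\prod_{j < i}\nuordp{j}| \leqslant \mu^{|i|}$, and since both $\mu$ and $|i|$ are $\llambda$, we can bound $\mu^{|i|} \leqslant 2^{\mu \cdot |i|}$ where $\mu \cdot |i| < \lambda$ (as either the cardinals are both finite or their product is the maximum, which is $\llambda$). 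Strong inaccessibility then gives $2^{\mu \cdot |i|} < \lambda$, completing the induction.

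The main obstacle, though routine, is merely making sure the limit step appeals cleanly to both properties of a strongly inaccessible cardinal (regularity and closure under powerset) in the right order; once $\mu < \lambda$ is secured by regularity, the rest is a single application of strong inaccessibility to $2^{\mu \cdot |i|}$.
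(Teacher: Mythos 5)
Your proof is correct and follows essentially the same route as the paper: induction on $i$, with the successor case handled directly by the definition of strong inaccessibility and the limit case by bounding the limit of an inverse chain of $\llambda$-sized sets. The paper simply asserts that such a limit is $\llambda$-sized, whereas you supply the cardinal arithmetic ($\mu^{|i|}\leqslant 2^{\mu\cdot|i|}<\lambda$) justifying that fact.
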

\begin{proof} By induction on $i$. The successor case is by the definition of strong inaccessibility.   For a limit $i < \lambda$,  we use the fact that if $(A_j)_{j < i}$ is an inverse chain of $\llambda$-sized sets then its limit is $\llambda$-sized. \end{proof}

\subsection{Connecting map to a successor}

In Section~\ref{sect:count} we derived several properties of countable ordinals from Proposition~\ref{prop:backconn}.  Accordingly, we shall begin by analyzing at which ordinals the property described there holds.
\begin{defi}
  The following limit ordinals are said to be \emph{$\kappa$-extensible}:
  \begin{itemize}
  \item $0$.
  \item Any $\omega$-cofinal ordinal.
  \item Any weakly compact cardinal.
  \item Any limit ordinal $\geqslant \kappa+\omega$, if $\kappa < \biginfty$.
  \item $\kappa$, if $\kappa < \biginfty$ and $\kappa$ has the tree property.
  \end{itemize}
\end{defi}
We complete Proposition~\ref{prop:backconn} as follows.
\begin{prop} \label{prop:backconnallord}
  For ordinals $j < i$ the following are equivalent.
  \begin{enumerate} 
  \item \label{item:backconnallordp} For  all $a \in \nuordpk{i}$ we have 
  \begin{math}
    \nuconn{j+1}{i} a  =   \setbr{\nuconn{j}{i}b \mid a \fctrans{i} b}
  \end{math}.
  \item \label{item:backconnallordc} $i$ is either a successor or a $\kappa$-extensible limit.
\end{enumerate}
\end{prop}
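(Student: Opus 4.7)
The proof proceeds by case analysis on the form of $i$.

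For $(2) \Rightarrow (1)$: The successor case $i = i'+1$ is immediate from Definitions~\ref{def:finalseq} and~\ref{def:fctrans}, both sides unfolding to $\{\nuconn{j}{i'}b : b \in a\}$, and the case $i = 0$ is vacuous. For a $\kappa$-extensible positive limit $i$, I would view $\Psi_i(a)$ as the channel $C$ with $C_k = a_{k+1}$, so that $\fctrans{i}$-successors of $a$ correspond precisely to full branches of $C$, and aim to show every $y \in C_j = a_{j+1}$ extends to a full branch. I then handle the four extensibility clauses separately. (i) For $\omega$-cofinal $i$, pick a strictly increasing $\omega$-sequence $(i_n)_n$ cofinal in $i$ with $i_0 > j$ and extend $y$ by dependent choice along $(C_{i_n})_n$, as in Proposition~\ref{prop:backconn}. (ii) For $i$ weakly compact, or $i = \kappa$ with the tree property, form the sub-tidy-tree $T_k \eqdef \{x \in C_k : \nuconn{j}{k}x = y\}$ with root $y$: surjectivity is inherited from $C$, limit uniqueness comes from the final-chain limit property, and each level is $<\kappa$-sized (directly when $i = \kappa$; via Proposition~\ref{prop:stronginacc} and $T_k \subseteq \nuordp{k}$ when $i$ is weakly compact), so the tree property at $i$ yields a full branch through $y$. (iii) For $\kappa < \biginfty$ and $i \geq \kappa+\omega$, apply Proposition~\ref{prop:embedtrans} at index $\max(j+1,\kappa+1)$ to biject successors of $a$ with elements of $a_{\max(j+1,\kappa+1)}$, and when $j < \kappa$, push down through $\psetkappa\nuconn{j}{\kappa}$ applied to $a_{\kappa+1}$ to recover $a_{j+1}$.

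For $(1) \Rightarrow (2)$, I argue by contrapositive. If $i$ is a limit failing all extensibility clauses, then $cf(i) > \omega$, $i$ is not weakly compact, and if $\kappa < \biginfty$ then $i \leq \kappa$, with $\kappa$ lacking the tree property in case $i = \kappa$. In each such case I exhibit a tidy $i$-tree $D$ with all levels $\lkappa$-sized, a root, and no full branch. For $i = \kappa$ without the tree property this is given by hypothesis; for $i < \kappa$ (or $\kappa = \biginfty$ with $i$ a regular cardinal not weakly compact) a mild generalization of Proposition~\ref{prop:nofull} suffices, since its argument works under the weaker assumption $cf(i) > \omega$ and its levels have size at most $|i|+1 < \kappa$; for the residual $\kappa = \biginfty$ subcase with $i$ strongly inaccessible but lacking tree property, the hypothesis supplies $D$ directly. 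I then cofinally embed $D$ into the complete binary $i$-tree (generalizing Proposition~\ref{prop:allintocompbin}) and, via the Forti--Honsell encoding $c \mapsto \pcote{k+1}{\vonset{\bitsen{c}}}$, into $(\nuordpk{k+1})_{k<i}$ (generalizing Corollary~\ref{cor:bintreeinto}); the encoding lands in $\psetkappa$ because $|\bitsen{c}| \leq |k|+1 < \kappa$ for all $k < i$, ensured by $i \leq \kappa$ (or $\kappa = \biginfty$). The image is a channel through $(\nuordpk{k})_{k<i}$ with all levels $\lkappa$-sized in which no full branch passes through the embedded root; via $\Psi_i$ this yields an $a \in \nuordpk{i}$ and $y \in a_{j+1}$ with no successor $b$ of $a$ satisfying $\nuconn{j}{i}b = y$.

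The main obstacle lies in the $(1) \Rightarrow (2)$ direction, where I must uniformly handle every combination of failures of the extensibility clauses and produce a tidy $i$-tree whose levels are small enough to fit inside $\psetkappa$ under the Forti--Honsell encoding. The delicate points are singular $i$ of uncountable cofinality (where Proposition~\ref{prop:nofull} must be slightly generalized beyond regular cardinals) and the boundary case $i = \kappa$ (where the tree-property hypothesis is precisely what delivers strictly $<\kappa$-sized levels, as the Forti--Honsell encoding demands).
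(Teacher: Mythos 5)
Your $(2)\Rightarrow(1)$ direction is essentially the paper's proof: it likewise reduces to extending an element of $a_{j+1}$ to a full branch of the channel $\Psi_i(a)$, forms the tidy tree of developments of that element, and splits into the $\omega$-cofinal (dependent choice), tree-property-at-$\kappa$, weakly compact (via Proposition~\ref{prop:stronginacc}) and $\geqslant\kappa+\omega$ (via Proposition~\ref{prop:embedtrans}/Corollary~\ref{cor:transfinal}) cases.

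The $(1)\Rightarrow(2)$ direction has a genuine gap at the step ``cofinally embed $D$ into the complete binary $i$-tree.'' The tidy $i$-tree obtained from your generalization of Proposition~\ref{prop:nofull} has levels of size $|i|$ (level $k$ consists of about $|k|\cdot|i|$ sequences, since the last entry ranges over all of $i$ above $k$). A cofinal embedding into the complete binary $i$-tree must, at each successor step, inject the $\leqslant|D_k|$-many developments of a node into the $2^{\mu_k}$-many developments available after advancing the index by $\mu_k$, while every partial sum of the increments stays below $i$. When $i$ is a regular cardinal this forces some $\mu_k<i$ with $2^{\mu_k}\geqslant i$, i.e.\ it needs $i$ to fail strong inaccessibility---a hypothesis you never invoke; and when $i$ is a strong limit (either regular, strongly inaccessible but not weakly compact with $i<\kappa$, or singular of uncountable cofinality) no such embedding into a binary tree exists at all. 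This is exactly where the paper's Propositions~\ref{prop:ghostnotree} and~\ref{prop:ghostbeforei} do their real work, with a three-way split you are missing: (a) for $i$ not a regular cardinal it abandons the $i$-tree entirely, takes the $\mathrm{cf}(i)$-tree of Proposition~\ref{prop:nofull}, embeds it into the complete $\lambda$-ary $\lambda$-tree for $\lambda=\mathrm{cf}(i)$, and uses the ordinals between $\lambda$ and $i$ to encode $\lambda$-ary branching via sets of sets of ordinals (pairs $(k,l)$ encoded as $\setbr{\eta_k,l}$)---the plain binary Forti--Honsell encoding cannot carry that much branching information below $i$; (b) for $i=\lambda$ regular and not strongly inaccessible it picks $\mu<\lambda$ with $2^{\mu}\geqslant\lambda$ and uses the index map $j\mapsto\mu\times j$; (c) for $i=\lambda$ regular and strongly inaccessible but lacking the tree property it uses the $\lambda$-Aronszajn tree witnessing that failure, whose levels are genuinely $\llambda$-sized so the summed index map works---and this case occurs for $i<\kappa$ with $\kappa<\biginfty$ as well, not only in the two subcases where you invoke the tree-property hypothesis. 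You correctly flag the singular case as delicate, but the fix you propose (generalize Proposition~\ref{prop:nofull} to an $i$-tree and keep the binary target) does not close it.
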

\begin{proof} of (\ref{item:backconnallordc}) $\Rightarrow$ (\ref{item:backconnallordp}).  The successor and zero cases are trivial, and the case $i \geqslant \kappa + \omega$ follows from Corollary~\ref{cor:transfinal}.  Let us consider the other cases. For $b \in \nuconn{j+1}{i}$, we know that $b$ is an element of the $j$th level of the channel $\Psi_i(a)$ through $(\nuordpk{k})_{k < i}$, and we want to extend it to a full branch.  We may reformulate this problem by expressing $i$ as $j+i'$, and defining the tidy $i'$-tree $(A_k)_{k < i'}$ where $A_k$ is the set of $j+k$-developments of $b$.  We want a full branch for this tree.
\begin{itemize}
\item Suppose that $i$ is $\omega$-cofinal, so $i'$ is too.  Let  $(i_n)_{n \in \nats}$ be a strictly increasing sequence with supremum $i'$.  We obtain a full branch by dependent choice. (The size of the sets $A_k$ is immaterial in this case.)
\item  Suppose that $\kappa$ has the tree property and $i = \kappa$, so $i' = \kappa$.  For $k < \kappa$ the set $A_k$, being a subset of  $\nuordpk{j+k}$, is $\lkappa$-sized. So the tree property gives a full branch through $(A_k)_{k < \kappa}$.
\item Suppose that $i$, and therefore $i'$, is a weakly compact cardinal $\lambda < \kappa$.  For $k < \lambda$, we have also $j+k < \lambda$, and so $A_k$, being a subset of $\nuordpk{j+k}$, is $\llambda$-sized by Proposition~\ref{prop:stronginacc}.  So the tree property gives a full branch through $(A_k)_{k < \lambda}$.  \qedhere
\end{itemize}
\end{proof}
To prove the converse, we use the following notion of ghost.
\begin{defi}
Let $i$ be a limit. A \emph{$\psetkappa$-ghost at $i$} is a $\psetkappa$-Cauchy element of $\nuordpk{i}$, distinct from $0(i)$, that has no successor.
\end{defi}
The Cauchy condition was omitted in Proposition~\ref{prop:ghostomegaone} because every element of $\nuordpk{\omega_1}$ is $\psetkappa$-Cauchy (Corollary~\ref{cor:countsurj}(\ref{item:omegaonecauchy}).)

Any $\psetkappa$-ghost $a$ at $i$ does not satisfy the equation in (\ref{item:backconnallordp}), because the LHS is $a_{j+1}$, which is inhabited, but the RHS is empty.  So there cannot be a $\psetkappa$-ghost at a $\kappa$-extensible ordinals.  The following two results establish that, at all other limits, a $\psetkappa$-ghost does exist, giving (\ref{item:backconnallordp}) $\Rightarrow$ (\ref{item:backconnallordc}) as required.

Again we use channels: a $\psetkappa$-ghost at $i$ corresponds via $\Psi_i$ to a channel through $(\nucoalgpk{j})_{j < i}$, with all levels $\lkappa$-sized, that has a root but no full branch. 
\begin{prop} \label{prop:ghostnotree}
  Let $\kappa < \biginfty$ not have the tree property.  Then there is a $\psetkappa$-ghost at $\kappa$.
\end{prop}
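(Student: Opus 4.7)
The plan is to mirror the three-step construction used in the proof of Proposition~\ref{prop:ghostomegaone}, replacing $\omega_1$ throughout by $\kappa$. Since $\aleph_{0}$ has the tree property by K\"{o}nig's lemma, the failure of the tree property at $\kappa$ forces $\kappa$ to be an uncountable regular cardinal. By hypothesis, there exists a tidy $\kappa$-tree $D$, with all levels $\lkappa$-sized, that has no full branch; this is the analogue of the Aronszajn tree supplied by Theorem~\ref{thm:tidyexists}. I will then cofinally embed $D$ into the complete binary $\kappa$-tree, cofinally embed the latter into $(\nucoalgpk{j})_{j < \kappa}$, and compose the two to obtain a cofinal embedding of $D$ into $(\nucoalgpk{j})_{j < \kappa}$.

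The second step of Corollary~\ref{cor:bintreeinto} generalizes without modification: for any $j < \kappa$ and $c \in \bits^{j}$, the set $\vonset{\bitsen{c}}$ is indexed by the $\lkappa$-sized set $\bitsen{c} \subseteq j+1$, hence is $\psetkappa$-coalgebraic, so $\beta_{j}(c) \eqdef \pcote{j+1}{\vonset{\bitsen{c}}}$ lies in $\nucoalgpk{j+1}$; injectivity and naturality follow exactly as before from Lemma~\ref{prop:restrict}.

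The first step, generalizing Proposition~\ref{prop:allintocompbin}, is where the main care is needed. I will define the index map $\alzero \colon \kappa \to \kappa$ by transfinite recursion, setting $\alzero{0} = 0$, $\alzero(i+1) = \alzero{i} + \lambda_{i}$ where $\lambda_{i} < \kappa$ is chosen large enough that $2^{\lambda_{i}} \geqslant |D_{i+1}|$ (which is possible since $|D_{i+1}| < \kappa$: just take $\lambda_{i} = |D_{i+1}|$), and taking suprema at limits. Regularity of $\kappa$ ensures $\alzero{i} < \kappa$ for every $i < \kappa$, and monotonicity yields cofinality. The level maps $\alone{i} \colon D_{i} \to \bits^{\alzero{i}}$ are defined by induction: at a successor $i+1$, for each $x \in D_{i}$, inject the $(i+1)$-developments of $x$ into the $2^{\lambda_{i}}$ extensions of $\alone{i}(x)$ from $\bits^{\alzero{i}}$ to $\bits^{\alzero(i+1)}$; at a limit $i$, send $x \in D_{i}$ to the unique $\alzero{i}$-sequence extending every $\alone{j}(\ddiag{j}{i}x)$ for $j<i$, which is injective because tidiness forces $x$ to be determined by its projections.

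Finally, composing these cofinal embeddings and taking the $\beta\alpha$-image of $D$ (viewed as a channel through itself) yields, by Proposition~\ref{prop:cofinalemb}, a channel $C$ through $(\nucoalgpk{j})_{j < \kappa}$ with a root, all levels $\lkappa$-sized, and no full branch. Since connecting maps preserve coalgebraicity, $C$ in fact lives in $(\nucoalgpk{j})_{j<\kappa}$. Via $\Psi_{\kappa}$, $C$ corresponds to an element $a \in \nuordpk{\kappa}$ that is $\psetkappa$-Cauchy, has no successor (as there is no full branch), and satisfies $a_{1} \neq \emptyset$ (since the image of the root is inhabited), so $a \neq \zat{\kappa}$. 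Thus $a$ is the required ghost. The only real obstacle is setting up the recursion in step one carefully so that $\alzero$ stays bounded in $\kappa$ at every stage, which is handled by regularity; everything else is a routine transcription of the $\omega_{1}$ argument.
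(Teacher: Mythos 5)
Your proposal is correct and follows essentially the same route as the paper: take a $\kappa$-Aronszajn tidy tree, cofinally embed it into the complete binary $\kappa$-tree using an index map that accumulates the level sizes $|D_i|<\kappa$ (kept below $\kappa$ by regularity), then compose with the generalization of the embedding $\beta$ from Corollary~\ref{cor:bintreeinto} into $(\nucoalgpk{j})_{j<\kappa}$ and read off the ghost via $\Psi_\kappa$. The only differences are cosmetic (your recursive definition of the index map versus the paper's closed form $i \mapsto \sum_{j<i}|D_j|$, and your more explicit verification of coalgebraicity and of the ghost conditions).
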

\begin{proof} 
Since $\kappa$ does not have the tree property, there is a tidy $\kappa$-tree $D$ with no full branch.  (Such a tree is said to be \emph{$\kappa$-Aronszajn}.) 

We give a cofinal embedding $\alpha$ of $D$ into the complete binary $\kappa$-tree as follows.  The index map is $\alzero \ccons i \mapsto \sum_{j < i} |D_j|$.  This is well-defined and cofinal because $i<\kappa$ implies $i \leqslant \alzero{i} < \kappa$, since $1 \leqslant |D_j| < \kappa$ and $\kappa$ is regular.    We give the $i$-th level map $\alone{i} \ccons D_i \to \bits^{\alzero{i}}$ by induction on $i$, ensuring naturality wrt all $j \leqslant i$.   For the successor case, we have $\alzero{(i+1)} = \alzero{i} + |D_i|$, so we take an injection from the $\leqslant\! |D_i|$-many $i+1$-developments of each $x \in D_i$ to the $2^{|D_i|}$-many $\alzero{(i+1)}$-developments of $\alone{i}x \in \bits^{\alzero{i}}$.   The case where $i$ is a limit is uniquely defined, as in the proof of Proposition~\ref{prop:allintocompbin}.

The image $E$ of $D$ is a channel through the complete binary $\kappa$-tree with a root but no full branch. As in Corollary~\ref{cor:bintreeinto}, we define a cofinal embedding $\beta$ of the complete binary $\kappa$-tree into $(\nucoalgpk{j})_{j < \kappa}$, with index map $j \mapsto j+1$.  The $\beta$-image of $E$ is the desired channel.
\end{proof}

\begin{prop} \label{prop:ghostbeforei}
  Let $i<\kappa$ be a positive limit that is neither $\omega$-cofinal nor weakly compact.  Then there is a $\psetkappa$-ghost at $i$.
\end{prop}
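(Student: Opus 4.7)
The plan is to adapt the method of Proposition~\ref{prop:ghostnotree}: produce a tidy tree with no full branch and suitably small levels, cofinally embed it into the complete binary $i$-tree, and compose with a cofinal embedding $\beta$ of the complete binary $i$-tree into $(\nucoalgpk{j})_{j<i}$ analogous to Corollary~\ref{cor:bintreeinto}. The construction of $\beta$ via $c\mapsto\pcote{j+1}{\vonset{\bitsen{c}}}$ carries over verbatim for $i$ in place of $\omega_1$, since for $c\in\bits^{j}$ with $j<i<\kappa$ the set $\bitsen{c}$ has size $<\kappa$, making $\vonset{\bitsen{c}}$ a pointed $\psetkappa$-coalgebra. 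The image will then be a channel through $(\nucoalgpk{j})_{j<i}$ inheriting a root from the source tree and, by Proposition~\ref{prop:cofinalemb}(\ref{item:bijfb}), no full branch; it therefore corresponds via $\Psi_i$ to a $\psetkappa$-ghost.

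Let $\lambda=\mathrm{cof}(i)$, a regular uncountable cardinal since $i$ is not $\omega$-cofinal; note $\lambda\leq i<\kappa$. I would split the choice of source tree into three cases. If $i$ is a regular uncountable cardinal that lacks the tree property, take an $i$-Aronszajn tree, which is tidy with $<i$-sized levels and no full branch. If $i$ is a regular uncountable cardinal that has the tree property but fails to be strongly inaccessible---the other way for $i$ to avoid being weakly compact---apply Proposition~\ref{prop:nofull} to $i$ for a tidy $i$-tree with $\leq i$-sized levels and no full branch. If $i$ is not itself a regular cardinal (so $\lambda<i$), apply Proposition~\ref{prop:nofull} to $\lambda$ for a tidy $\lambda$-tree with $\leq\lambda$-sized levels and no full branch.

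The cofinal embedding into the complete binary $i$-tree is built by transfinite recursion as in Propositions~\ref{prop:allintocompbin} and~\ref{prop:ghostnotree}: the index map $\alzero$ must stay below $i$, be cofinal in $i$, and have each successor jump $\alzero(\xi+1)-\alzero(\xi)$ large enough that $2^{\alzero(\xi+1)-\alzero(\xi)}$ exceeds the number of children of a node in the source tree. In the Aronszajn case, $\alzero(\xi)=\sum_{j<\xi}|D_j|$ works verbatim, using regularity of $i$ and $|D_j|<i$. In the ``tree property, not strongly inaccessible'' case, the failure of strong inaccessibility furnishes $\mu<i$ with $2^{\mu}\geq i$, and $\alzero(\xi)=\mu\cdot\xi$ (ordinal product) is cofinal in $i$ with successor jumps of size $\mu$. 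In the $\lambda<i$ case, I would interleave a cofinal $\lambda$-sequence in $i$ with jump sizes sufficient to host $\leq\lambda$ children of each node.

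The main obstacle is the subcase where $\lambda<i$ and $\lambda$ is itself strongly inaccessible---for instance $\lambda$ weakly compact and $i=\lambda\cdot 2$. Here any tidy $\lambda$-tree without a full branch has $\lambda$-sized levels, the required binary-tree jumps have size $\geq\lambda$, and yet $\lambda$ such jumps would overshoot an $i$ of modest size, so the binary-tree intermediary breaks down. The workaround I expect is to abandon the binary tree and encode nodes directly into the upper segment $[\lambda,i)$ of the final chain using von Neumann sets of ordinals, exploiting the abundance of distinct $\simi{\lambda+\xi+1}$-classes of $\psetkappa$-coalgebras at levels $\geq\lambda$ (where $|\nucoalgpk{j}|$ is no longer forced to be $<\lambda$, unlike at levels $<\lambda$ by Proposition~\ref{prop:stronginacc}); verifying naturality and injectivity via the Forti--Honsell analysis underlying Lemma~\ref{prop:restrict} is then the remaining technicality.
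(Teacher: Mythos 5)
Your case split and your handling of the two subcases where $i$ is itself a regular cardinal agree with the paper's proof: the paper distinguishes (1) $i>\lambda$, (2) $i=\lambda$ not strongly inaccessible, which it treats exactly as you do via $\mu<\lambda$ with $2^{\mu}\geqslant\lambda$ and index map $j\mapsto\mu\times j$, and (3) $i=\lambda$ lacking the tree property, which it dispatches by citing Proposition~\ref{prop:ghostnotree}. You have also correctly isolated the genuine difficulty: when $\lambda<i$ and $\lambda$ is weakly compact, every branchless tidy $\lambda$-tree has a $\lambda$-sized level, so a binary-tree intermediary would need $\lambda$-many index jumps each of length $\geqslant\lambda$, which cannot fit inside an $i$ such as $\lambda\cdot 2$.

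The gap is that your proposed workaround for that subcase is not a proof, and as literally stated it cannot be made into one: encoding nodes as $\vonset{R}$ for sets of ordinals $R$ runs into exactly the counting obstruction you identified for the binary tree. By Proposition~\ref{prop:von}, the $\simi{m+1}$-class of $\vonset{R}$ records only $R\cap m$ together with one bit, so the number of codes at a level that refine a fixed code at an earlier level is essentially $2$ to the size of the intervening interval of ordinals; when $\lambda$ is strongly inaccessible this is $\llambda$ unless the interval has size $\geqslant\lambda$, and $\lambda$-many such intervals again overshoot $i=\lambda\cdot 2$. The ``abundance'' of $\simi{}$-classes above level $\lambda$ does not by itself yield $\lambda$-fold branching at each of $\lambda$-many successive steps of a natural system of injections. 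The paper's resolution---the one genuinely new idea in this proof relative to Proposition~\ref{prop:ghostnotree}---is a two-level encoding. It replaces the binary tree by the complete $\lambda$-ary $\lambda$-tree (into which the tree of Proposition~\ref{prop:nofull} embeds with identity index map), fixes a strictly increasing cofinal sequence $(\eta_j)_{j<\lambda}$ in $i$ with $\eta_0\geqslant\lambda$, encodes the pair $(k,l)$ with $k,l<\lambda$ as the two-element set $\setbr{\eta_k,l}$, and sends $c\in\lambda^{j}$ to $\pcote{\eta_j+2}{\vonsetset{\lbitsen{c}}}$ where $\lbitsen{c}=\setbr{\setbr{\eta_k,c_k}\mid k<j}\cup\setbr{\setbr{\eta_j,l}\mid l<\lambda}$. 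Because the payload $l$ lives below $\lambda\leqslant\eta_0$, the $\lambda$-many sets $\setbr{\eta_j,l}$ are already pairwise separated by $\simi{\eta_j+1}$, so each step from $\eta_j+2$ to $\eta_{j+1}+2$ carries $\lambda$-fold branching no matter how small the gap between $\eta_j$ and $\eta_{j+1}$ is. This device handles the whole case $\lambda<i$ uniformly, making the further subdivision in your third case unnecessary.
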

\begin{proof} 
Let $\lambda$ be the cofinality of $i$.   Then $\lambda$ is a regular cardinal $> \aleph_0$.  There are three possibilities:
\begin{enumerate}
\item \label{item:notreg}  $i$ is not a regular cardinal, i.e.\ $i > \lambda$.
\item \label{item:notinacc} $i=\lambda$ and $\lambda$ is not strongly inaccessible.
\item \label{item:nottree} $i=\lambda$ and $\lambda$ does not have the tree property.
\end{enumerate}
For case (\ref{item:notreg}), Proposition~\ref{prop:nofull} gives a tidy $\lambda$-tree $D$, with every level $\leqlambda$-sized, that has no full branch.  Evidently there is a cofinal embedding $\alpha$ of $D$ into the complete $\lambda$-ary $\lambda$-tree  $(\lambda^{j})_{j < \lambda}$, with index map $j \mapsto j$.  We shall give a cofinal embedding $\beta$ of the complete $\lambda$-ary $\lambda$-tree into $(\nucoalgpk{j})_{j < i}$.  Then the $\beta\alpha$-image of $D$ is a channel through $(\nucoalgpk{j})_{j < i}$, with every level $\leqlambda$-sized, that has a root but no full branch.  Whereas we previously encoded  a sequence of bits as a set of ordinals, we shall now encode a sequence of elements of $\lambda$ as a set of sets of ordinals.

Since $i > \lambda$, express $i$ as the  supremum of a strictly increasing $\lambda$-sequence $(\eta_{j})_{j < \lambda}$ in $\eta$, with $\eta_0 \geqslant \lambda$.  The idea is to encode an ordered pair $(k,l)$, where $k,l<\lambda$, as $\setbr{\eta_k,l}$.  Clearly, for $k,k' < j < \lambda$ and $l,l' < \lambda$, we have 
\begin{equation}\label{eqn:ordpair}
  \vonset{\setbr{\eta_k,l}} \simi{\eta_j + 1} \vonset{\setbr{\eta_{k'},l'}} \iff k=k' \text{ and } l=l'
\end{equation}
For $j < \lambda$ and $c \in \lambda^{j}$, we set
\begin{displaymath}
\lbitsen{c} \eqdef \setbr{\setbr{\eta_k,c_k} \mid k < j} \cup \setbr{\setbr{\eta_j,l} \mid l < \lambda}
\end{displaymath}
For a set of ordinals $R$ we write $\vonset{R}$ for a parent of $(\von{j})_{j \in R}$, and for a set of sets of ordinals $S$ we write $\vonsetset{S}$ for a parent of $(\vonset{R})_{R \in S}$.  For $k \leqslant j < \lambda$ and $c \in \lambda^{k}$ and $d \in \lambda^{j}$, we deduce from~(\ref{eqn:ordpair}) that
\begin{equation}\label{eqn:lrestrict}
  c = d \restriction_k \iff \vonsetset{\lbitsen{c}} \simi{\eta_k + 2} \vonsetset{\lbitsen{d}}
\end{equation}
The cofinal embedding $\beta$ of the complete $\lambda$-ary $\lambda$-tree into $(\nucoalgpk{j})_{j < i}$ has index map $j \mapsto \eta_j + 2$.  The $j$-th level map $\betone \ccons \lambda^{j} \to \nuordpk{\eta_j+2}$ sends $c$ to $\pcote{\eta_j+2}{\vonsetset{\lbitsen{c}}}$.  Injectivity and naturality follow from~\ref{eqn:lrestrict}.

For case (\ref{item:notinacc}), Proposition~\ref{prop:nofull} gives a tidy $\lambda$-tree $D$, with all levels $\leqlambda$-sized, that has no full branch.  We give a cofinal embedding $\alpha$ of $D$ in the complete binary $\lambda$-tree as follows.  Since $\lambda$ is not strongly inaccessible, there is a cardinal $\mu < \lambda$ such that $2^{\mu} \geqslant \lambda$.  The index map is $j \mapsto \mu \times j$.  This is well-defined and cofinal because $j < \lambda$ implies $j \leqslant \mu \times j < \lambda$, since $1 \leqslant \mu < \lambda$ and $\lambda$ is regular.   We give the $j$-th level map $\alone{j} \ccons D_j \to \bits^{\mu \times j}$ by induction on $j$, ensuring naturality wrt all $k \leqslant j$.  For the successor case, we take an injection from the $\leqslant\!\! \lambda$-many $j+1$-developments of each $x \in D_j$ to the $2^{\mu}$-many $\mu \times (j+1)$ developments of $\alone{j}x \in \bits^{\mu \times j}$.  The case where $j$ is a limit is uniquely defined, as in the proof of Proposition~\ref{prop:allintocompbin}.

Next we obtain an embedding $\beta$ of the complete  binary $\lambda$-tree into $(\nucoalgpk{j})_{j < \lambda}$, with index map $j \mapsto j+1$, as in Corollary~\ref{cor:bintreeinto}. The $\beta\alpha$-image of $D$ is a channel through $(\nucoalgpk{j})_{j < \lambda}$, with all levels  $\leqlambda$-sized, that has a root but no full branch.

For case (\ref{item:nottree}), Proposition~\ref{prop:ghostnotree} gives a $\psetlambda$-ghost at $\lambda$, which is also a $\psetkappa$-ghost at $\lambda$.
\end{proof}

\subsection{Consequences of the ghosts}

Our aim is to revisit each of the properties listed in Section~\ref{sect:count}, to see at which ordinals it holds.  So we want to obtain many negative results from a ghost.  We begin with some methods for doing so.
\begin{lem} \label{lem:notinimage}
  A $\pset$-ghost at limit $i$ is not in the image of $\nuconn{i}{i+1} \pset$.
\end{lem}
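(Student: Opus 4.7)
The plan is to unpack what it would mean for a $\pset$-ghost $a \in \nuordp{i}$ to lie in the image of $\nuconn{i}{i+1}\pset$ using the channel interpretation of $\nuordp{i}$, and then to derive a contradiction with one of the two defining clauses of a ghost, namely that it has no successor and that it differs from $\zat{i}$.

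First I would suppose $a = \nuconn{i}{i+1}(S)$ for some $S \in \nuordp{i+1} = \pset\nuordp{i}$. By Proposition~\ref{prop:psi}(\ref{item:psicomm}) with $F=\pset$, the map $\nuconn{i}{i+1}\pset$ corresponds across $\Psi_i$ to $(\pset\pi_j)_{j<i}$, which sends a set $S$ of full branches through $(\nuordp{j})_{j<i}$ to its range. Hence for every $j < i$ we obtain $a_{j+1} = \setbr{b_j \mid b \in S} = \setbr{\nuconn{j}{i}(b) \mid b \in S}$. Now I would case-split on whether $S$ is empty. If $S$ is nonempty, pick any $b \in S$: the previous identity gives $b_j \in a_{j+1}$ for every $j < i$, so Definition~\ref{def:fctrans} yields $a \fctrans{i} b$, contradicting the fact that the ghost $a$ has no successor. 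If $S = \emptyset$, then $a_{j+1} = \emptyset$ for every $j < i$; a quick induction on indices $j \leqslant i$ (using that at a limit $j$ the component $a_j$ is determined by $(a_k)_{k<j}$, and at $j=0$ that $\nuordp{0}$ is a singleton) then shows $a = \zat{i}$, contradicting the other clause of the ghost definition.

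The argument is essentially bookkeeping, and I expect no real obstacle. The only point deserving a moment's care is the translation, via $\Psi_i$, between the two views of $\nuordp{i}$---full branches and channels---since it is precisely this identification that lets us read off the complete list of successors of $a$ directly from any putative preimage $S$.
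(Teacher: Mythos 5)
Your proof is correct and follows essentially the same route as the paper's: identify a putative preimage $S$ with a set of full branches whose range is the ghost's channel, note that any element of $S$ would be a successor of the ghost (forcing $S=\emptyset$), and then observe that an empty range forces the ghost to equal $\zat{i}$, contradicting the remaining clause of the definition. The paper compresses your two cases into two sentences, but the content is identical.
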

\begin{proof}
Let $\nuconn{i}{i+1}b$, for $b \in \nuordp{i+1}$, be a $\pset$-ghost at $i$.  It corresponds across $\Psi_i$ to the range of $b$. Any element of $b$ would be a full branch of the range of $b$, which does not exist, so $b$ is empty.  But then its range is empty, contradiction. \end{proof}

We shall use the singleton operation $\setbr{-}$ applied $n$ times to a ghost.
\begin{lem} \label{lem:ghostn}
  Let $a$ a $\pset$-ghost at positive limit $i$, and $n \in \nats$.  Then $\setbr{-}^{n}a \in \nuordp{i+n}$ has the following properties.
  \begin{enumerate}
   \item \label{item:ghostnpcoconn} $\pcofc{i+n}{n+1}{\setbr{-}^{n}a} \not = \nuconn{n+1}{i+n}\setbr{-}^{n}a$. 
  \item \label{item:ghostnrange} $\setbr{-}^{n}a$ is not in the image of $\nuconn{i+n}{i+n+1} \pset$.
  \item \label{item:ghostnpset} $\setbr{-}^{n}a$ is not $\pset$-coalgebraic.
  \item \label{item:ghostnnotse} $\pfc{i+n}{\setbr{-}^{n}a} \sim \pfc{i+n}{ \setbr{-}^{n}0(i)}$ but $\setbr{-}^{n}a \not= \setbr{-}^{n}0(i)$.
  \item \label{item:ghostnnotsim} $\pfc{i+n}{\setbr{-}^{n}a} \not\simi{n+1} \pfc{j}{\nuconn{j}{i+n} \setbr{-}^{n}a}$ for any $j < i$ such that $j > n$.  
  \end{enumerate}
\end{lem}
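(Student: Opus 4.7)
The plan is to unroll the structure of $b_n \eqdef \setbr{-}^{n}a$ as a pointed system, compute two explicit values, and derive each clause from them. First I would observe that $a_1 = \setbr{()}$: if instead $a_1 = \emptyset$, then the identity $\nuconn{1}{k+1}\pset = \pset\,\nuconn{0}{k}$ forces $a_{k+1} = \emptyset$ for every $k$ with $k+1 < i$ (since nonempty sets are sent to nonempty sets by this map), and the full-branch constraints then give $a = 0(i)$, contradicting the ghost hypothesis. Next, viewing $b_n$ inside $(\nuordp{i+n},\fctrans{i+n})$: since $i+k$ is a successor ordinal for $k\geqslant 1$ and $b_k = \setbr{b_{k-1}}$, the $\fctrans{i+k}$-successors of $b_k$ are its elements, so $b_k$ has unique successor $b_{k-1}$; and $b_0 = a$ has no successor. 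Thus $b_n$ is a finite linear chain of length $n+1$ ending at a dead state.

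By induction on $n$ I then obtain
\begin{displaymath}
  \pcop{n+1}{b_n} \;=\; \setbr{-}^{n}\emptyset
  \qquad\text{and}\qquad
  \nuconn{n+1}{i+n}\,b_n \;=\; \setbr{-}^{n}\setbr{()},
\end{displaymath}
both living in $\nuordp{n+1}$. The first uses the successor clause $\pcop{m+1}{c} = \setbr{\pcop{m}{y} \mid c\fctrans{} y}$ starting from $\pcop{1}{a} = \emptyset$; the second uses $\nuconn{j+1}{l+1}\pset = \pset\,\nuconn{j}{l}$ to peel off one singleton per step, bottoming out at $\nuconn{1}{i}a = a_1 = \setbr{()}$. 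Since $\emptyset \neq \setbr{()}$ in $\nuordp{1}$ and $\setbr{-}$ is injective, these differ, yielding clause~(\ref{item:ghostnpcoconn}).

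For clauses~(\ref{item:ghostnrange}) and~(\ref{item:ghostnpset}) I induct on $n$. The base case of~(\ref{item:ghostnrange}) is Lemma~\ref{lem:notinimage}. For the inductive step, if $b_n = \nuconn{i+n}{i+n+1}c = \setbr{\nuconn{i+n-1}{i+n}d \mid d \in c}$ equalled $\setbr{b_{n-1}}$, then some $d \in c$ would satisfy $\nuconn{i+n-1}{i+n}d = b_{n-1}$, contradicting the inductive hypothesis. The base case of~(\ref{item:ghostnpset}) is the argument of Section~\ref{sect:ghostomegaone}: if $a = \pcop{i}{x}$ then $a_1 = \pcop{1}{x}$ is nonempty, forcing $x$ to have a successor $y$, and then $a\fctrans{i}\pcop{i}{y}$ contradicts ghostness. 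For the step, if $b_n = \pcop{i+n}{x} = \setbr{\pcop{i+n-1}{y} \mid x\rightsquigarrow y}$ equalled $\setbr{b_{n-1}}$, we would exhibit a successor $y$ of $x$ with $\pcop{i+n-1}{y} = b_{n-1}$, making $b_{n-1}$ coalgebraic, contradicting the inductive hypothesis.

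For~(\ref{item:ghostnnotse}), both $a$ and $0(i)$ have no $\fctrans{i}$-successors, so $a \sim 0(i)$; and applying $\setbr{-}$ preserves bisimilarity because $\setbr{x}$ and $\setbr{y}$ have the unique successors $x$ and $y$, so inductively $b_n \sim \setbr{-}^{n}0(i)$. Distinctness of $\setbr{-}^{n}a$ and $\setbr{-}^{n}0(i)$ follows from $a\neq 0(i)$ by injectivity of $\setbr{-}$. For~(\ref{item:ghostnnotsim}), Proposition~\ref{prop:finpost}(\ref{item:finpost}) applied to the natural number $n+1$ gives $\pcop{n+1}{c} = \nuconn{n+1}{j}c$ for any $c\in\nuordp{j}$ with $j>n$. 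Setting $c = \nuconn{j}{i+n}b_n$ and composing, we find $\pcop{n+1}{c} = \nuconn{n+1}{i+n}b_n = \setbr{-}^{n}\setbr{()}$, which differs from $\pcop{n+1}{b_n} = \setbr{-}^{n}\emptyset$, so $b_n \not\simi{n+1}c$ by Proposition~\ref{prop:kernelsim}(\ref{item:kernelsimi}). The main obstacle is keeping straight the two computations of $\pcop{n+1}{b_n}$ and $\nuconn{n+1}{i+n}b_n$: both take the shape $\setbr{-}^{n}(\cdot)$ but diverge at the innermost level between $\emptyset$ and $\setbr{()}$, and it is exactly this divergence that powers every clause.
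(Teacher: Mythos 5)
Your treatment of parts (\ref{item:ghostnpcoconn})--(\ref{item:ghostnnotse}) is correct and essentially the paper's: the two computations $\pcop{n+1}{\setbr{-}^{n}a}=\setbr{-}^{n}\emptyset$ and $\nuconn{n+1}{i+n}\setbr{-}^{n}a=\setbr{-}^{n}\setbr{\emptytup}$ drive part (\ref{item:ghostnpcoconn}), part (\ref{item:ghostnrange}) is the same induction from Lemma~\ref{lem:notinimage}, and part (\ref{item:ghostnnotse}) is the same bisimilarity-of-dead-chains argument. (You prove part (\ref{item:ghostnpset}) by a parallel induction where the paper deduces it from part (\ref{item:ghostnrange}); both work. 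Making explicit why $a_1=\setbr{\emptytup}$ is a welcome addition.)

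Part (\ref{item:ghostnnotsim}), however, has a genuine gap. You assert that Proposition~\ref{prop:finpost}(\ref{item:finpost}) yields $\pcop{n+1}{c}=\nuconn{n+1}{j}c$ for \emph{every} $c\in\nuordp{j}$ with $j>n$. That proposition only covers ordinals expressible as $i'+(n+1)$, i.e.\ those with $\nump{j}\geqslant n+1$, whereas the hypothesis of part (\ref{item:ghostnnotsim}) allows any $j$ with $n<j<i$, in particular $j$ a limit. In that range the identity you rely on is not merely uncited but false: taking $n=0$ and $c$ a ghost at a limit $j$ gives $\pcop{1}{c}=\emptyset\neq\setbr{\emptytup}=c_1=\nuconn{1}{j}c$ --- this is exactly part (\ref{item:ghostnpcoconn}) of the present lemma applied at level $j$, so your intermediate claim contradicts the lemma you are proving.

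For the specific element $c=\nuconn{j}{i+n}\setbr{-}^{n}a$ the required inequality does hold, but one must invoke the Cauchy condition on the ghost, which your argument never uses. The paper argues via the transition structure: $c\rightsquigarrow^{n}\nuconn{\predm{n}{j}}{i}a=a_{k}$ where $k=\predm{n}{j}$ satisfies $0<k<i$ (note $k$ may be a limit). Since $a$ is $\pset$-Cauchy, $a_{k}$ is $\pset$-coalgebraic, say $a_{k}=\pcop{k}{x}$; as $\pcop{1}{x}=a_1=\setbr{\emptytup}$ is inhabited, $x$ has a successor, hence so does $a_{k}$. Thus $c$ admits a transition path of length $n+1$, while every length-$n$ path from $\setbr{-}^{n}a$ ends at the dead element $a$; since admitting a path of length $m$ is invariant under $\simi{m}$, the two elements are not $\simi{n+1}$-related. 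It is precisely the step ``$a_{k}$ has a successor even when $k$ is a limit'' that your shortcut bypasses.
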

\begin{proof} \hfill
\begin{enumerate}
\item Induction on $n$ gives $\pcofc{i+n}{n+1}{\setbr{-}^{n}a}) = \setbr{-}^{n} \emptyset$ and $\nuconn{n+1}{i+n}\setbr{-}^{n}a = \setbr{-}^{n} \setbr{\emptytup}$.
\item Induction on $n$. The case $n=0$ is Lemma~\ref{lem:notinimage}.  The inductive step follows from $\nuconn{i+n+1}{i+n+2}b = \setbr{\nuconn{i+n}{i+n+1}c \mid c \in b}$.
\item Follows from part (\ref{item:ghostnrange}).
\item Each part is by induction on $n$.
\item We have  $\nuconn{j}{i+n} \setbr{-}^{n}a \rightsquigarrow^{n} \nuconn{\predm{n}{j}}{i} a$.  Since $n < j < i$ we have $0 < \predm{n}{j} < i$, so $\nuconn{\predm{n}{j}}{i} a$ has a successor.  But if $\setbr{-}^{n}a \rightsquigarrow^{n} b$, then $b = a$, which has no successor. 
\end{enumerate}
\end{proof}

For our results, we use the following terminology.
\begin{defi}
  For an ordinal $i$, let $i = i'+m$ where $i'$ is a limit and $m \in \nats$.  Then we write
  \begin{spaceout}{rcll}
    \limp{i} & \eqdef & i', & \text{the \emph{limit part} of $i$.} \\
    \nump{i} & \eqdef & m, & \text{the \emph{finite part} of $i$.}
  \end{spaceout}%
\end{defi}

The following completes Lemma~\ref{lem:pconuconn}.
\begin{thm} \label{prop:genpcoconn} 
For $j \leqslant i$, the following are equivalent.
\begin{enumerate}
\item \label{item:genpcoconnp} For all  $a \in \nuordpk{i}$ we have $\pcop{i}{a} = \nuconn{j}{i}a$.
\item \label{item:genpcoconnc} Either $j \leqslant \nump{i}$ or $\limp{i}$ is $\kappa$-extensible.
\end{enumerate}
\end{thm}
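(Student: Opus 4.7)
I will prove the two directions of the equivalence separately. Throughout I read the equation in (\ref{item:genpcoconnp}) as $\pcop{j}{a} = \nuconn{j}{i}a$, matching Lemma~\ref{lem:pconuconn}, which is the countable case of the result.

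For (\ref{item:genpcoconnc}) $\Rightarrow$ (\ref{item:genpcoconnp}), I split on which disjunct holds. If $j \leqslant \nump{i}$, then decomposing $i = (\limp{i} + (\nump{i} - j)) + j$ and applying Proposition~\ref{prop:finpost}(\ref{item:finpost}) directly yields $\pcop{j}{a} = \nuconn{j}{i}a$. If instead $\limp{i}$ is $\kappa$-extensible, I proceed by induction on $j$, taking the inductive hypothesis to range over all $a$ at all levels whose limit part is $\kappa$-extensible so that it may be applied at $\pred{i}$. The case $j=0$ is trivial. For a limit $j$, I unfold $\pcop{j}{a} = (\pcop{k}{a})_{k<j}$; the IH rewrites this as $(\nuconn{k}{i}a)_{k<j}$, which coincides with $\nuconn{j}{i}a$ by functoriality of the connecting maps. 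For $j = k+1$, the parent construction gives $\pcop{k+1}{a} = \setbr{\pcop{k}{b} \mid a \fctrans{i} b}$. Because $\limp{\pred{i}} = \limp{i}$ is still $\kappa$-extensible, the IH applies to each successor $b \in \nuordpk{\pred{i}}$ and yields $\pcop{k}{b} = \nuconn{k}{\pred{i}}b$. Finally, since $i$ is either a successor (when $\nump{i}>0$) or is itself the $\kappa$-extensible limit $\limp{i}$ (when $\nump{i}=0$), Proposition~\ref{prop:backconnallord}(\ref{item:backconnallordp}) applies and identifies $\setbr{\nuconn{k}{\pred{i}}b \mid a \fctrans{i} b}$ with $\nuconn{k+1}{i}a$.

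For (\ref{item:genpcoconnp}) $\Rightarrow$ (\ref{item:genpcoconnc}), I argue by contraposition. Assume $j > \nump{i}$ and that $\limp{i}$ is not $\kappa$-extensible. Unpacking the definition shows that $\limp{i}$ is a positive limit with $\limp{i} \leqslant \kappa$ that is neither $\omega$-cofinal nor weakly compact, and that if $\limp{i}=\kappa<\biginfty$ then $\kappa$ lacks the tree property. Consequently, Proposition~\ref{prop:ghostbeforei} (if $\limp{i} < \kappa$) or Proposition~\ref{prop:ghostnotree} (if $\limp{i} = \kappa$) provides a $\psetkappa$-ghost $g$ at $\limp{i}$. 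Put $a \eqdef \setbr{-}^{\nump{i}}g \in \nuordpk{i}$. Lemma~\ref{lem:ghostn}(\ref{item:ghostnpcoconn}) gives $\pcop{\nump{i}+1}{a} \neq \nuconn{\nump{i}+1}{i}a$. If we had $\pcop{j}{a} = \nuconn{j}{i}a$, then applying $\nuconn{\nump{i}+1}{j}$ to both sides (valid since $j \geqslant \nump{i}+1$), and using functoriality of the connecting maps together with the fact that $(\pcop{k}{a})_{k \in \On}$ is a full branch, would yield $\pcop{\nump{i}+1}{a} = \nuconn{\nump{i}+1}{i}a$, a contradiction.

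The main obstacle is the converse direction, where I must align the dichotomy between Propositions~\ref{prop:ghostbeforei} and \ref{prop:ghostnotree} with the sub-cases $\limp{i} < \kappa$ and $\limp{i} = \kappa$, and then transport the ghost's localized violation at level $\nump{i}+1$ to the possibly much larger level $j$. The forward direction is more routine but still requires care in the inductive step, where the IH is applied at $\pred{i}$ rather than $i$, and the two sub-cases $\nump{i}=0$ and $\nump{i}>0$ must both be channelled through Proposition~\ref{prop:backconnallord}.
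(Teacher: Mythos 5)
Your proof is correct and follows essentially the same route as the paper: the forward direction combines Proposition~\ref{prop:finpost}(\ref{item:finpost}) with the induction of Lemma~\ref{lem:pconuconn} (upgraded to use Proposition~\ref{prop:backconnallord}), and the converse applies a ghost at $\limp{i}$ together with Lemma~\ref{lem:ghostn}(\ref{item:ghostnpcoconn}) and a connecting map down to level $\nump{i}+1$. You also correctly read the stated equation as $\pcop{j}{a} = \nuconn{j}{i}a$, which is evidently the intended form.
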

\begin{proof}
For (\ref{item:genpcoconnc})$\Rightarrow$(\ref{item:genpcoconnp}), the case ($j \leqslant \nump{i}$) is Proposition~\ref{prop:finpost}(\ref{item:finpost}), and the case of $\kappa$-extensible $\limp{i}$ is proved the same way as Lemma~\ref{lem:pconuconn}.

For the converse, let $a$ be a $\psetkappa$-ghost $a$ at $\limp{i}$.  If  $\pcofc{i}{j}{\setbr{-}^{\nump{i}}a} = \nuconn{j}{i}{\setbr{-}^{\nump{i}}a}$ then (since $\nump{i}+1 \leqslant j$) we have $\pcofc{i}{\nump{i}+1}{ \setbr{-}^{\nump{i}}a} = \nuconn{\nump{i}+1}{i}{\setbr{-}^{\nump{i}}a}$, contradicting Lemma~\ref{lem:ghostn}(\ref{item:ghostnpcoconn}).
\end{proof}

We come to the main result of the paper, which completes Proposition~\ref{prop:countord}.
\begin{thm} \label{prop:finalcover}
For any ordinal $i$, the following are equivalent.
  \begin{enumerate}
  \item \label{item:finalcoverpself} For all $a \in \nuordpk{i}$ we have $\pcofc{i}{i}{a} = a$.
  \item \label{item:finalcoverco} Every $a \in \nuordpk{i}$ is $\pset$-coalgebraic.
  \item \label{item:finalcoversimi} For $a,b \in \nuordpk{i}$, if $a \simi{i} b$ then $a=b$.
  \item \label{item:finalcoversim} For $a,b \in \nuordpk{i}$, if $a \sim b$ then $a=b$.
  \item \label{item:finalcoverc} $\limp{i}$ is $\kappa$-extensible.
  \end{enumerate}
\end{thm}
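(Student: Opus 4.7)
The plan is to establish $(v) \Rightarrow (i) \Rightarrow (ii)$ and $(i) \Rightarrow (iii) \Rightarrow (iv)$, then obtain the converse via ghosts: $\neg(v) \Rightarrow \neg(ii)$ and $\neg(v) \Rightarrow \neg(iv)$ will close the loop because the contrapositives of $(i) \Rightarrow (ii)$ and $(iii) \Rightarrow (iv)$ yield $\neg(i)$ and $\neg(iii)$ as well.

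First, $(v) \Rightarrow (i)$ is an immediate specialization of Theorem~\ref{prop:genpcoconn} with $j = i$: condition $(v)$ is exactly the second disjunct in~(\ref{item:genpcoconnc}) there (the first disjunct, $j \leqslant \nump{i}$, automatically holds when $i$ is finite, in which case $\limp{i} = 0$ is $\kappa$-extensible by definition anyway), and $\nuconn{i}{i}$ is the identity. Next, $(i) \Rightarrow (ii)$ is trivial: $\pcop{i}{a} = a$ directly exhibits $a$ as the projection of a pointed $\pset$-coalgebra---namely $a$ itself, inside the transition system $(\nuordpk{i}, \fctrans{i})$. For $(i) \Rightarrow (iii)$, Proposition~\ref{prop:kernelsim}(\ref{item:kernelsimi}) gives $a \simi{i} b \Leftrightarrow \pcop{i}{a} = \pcop{i}{b}$, which together with $(i)$ collapses to $a = b$. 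And $(iii) \Rightarrow (iv)$ is automatic because $\sim$ refines $\simi{i}$.

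For the backward direction, suppose $\limp{i}$ is not $\kappa$-extensible. Then it is a positive limit that is neither $\omega$-cofinal nor weakly compact, and a short case split---$\limp{i} < \kappa$ (covered by Proposition~\ref{prop:ghostbeforei}) versus $\limp{i} = \kappa$ (covered by Proposition~\ref{prop:ghostnotree}; note that $\limp{i} > \kappa$ would force $\limp{i} \geqslant \kappa + \omega$, which is extensible, a contradiction)---produces a $\psetkappa$-ghost $a$ at $\limp{i}$. Setting $b = \setbr{-}^{\nump{i}} a \in \nuordpk{i}$, parts (\ref{item:ghostnpset}) and (\ref{item:ghostnnotse}) of Lemma~\ref{lem:ghostn} tell us that $b$ is not $\pset$-coalgebraic (refuting $(ii)$) and that $b \sim \setbr{-}^{\nump{i}} \zat{\limp{i}}$ while $b \neq \setbr{-}^{\nump{i}} \zat{\limp{i}}$ (refuting $(iv)$).

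The substantive work has already been packaged into Lemma~\ref{lem:ghostn} and the two ghost-existence propositions, so once they are in hand the remaining argument is essentially bookkeeping. The only real subtlety is the case analysis that turns ``$\limp{i}$ fails to be $\kappa$-extensible'' into an applicable hypothesis for Propositions~\ref{prop:ghostbeforei} or~\ref{prop:ghostnotree}; this is what I expect to be the main obstacle, and it is handled by noticing that the alternative ``$\limp{i} > \kappa$'' is ruled out by the gap between $\kappa$ and the next limit ordinal $\kappa + \omega$.
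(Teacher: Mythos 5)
Your proposal is correct and follows essentially the same route as the paper: the equivalence of (\ref{item:finalcoverpself}) and (\ref{item:finalcoverc}) via Theorem~\ref{prop:genpcoconn} with $j=i$, the implications (\ref{item:finalcoverpself})$\Rightarrow$(\ref{item:finalcoverco}),(\ref{item:finalcoversimi}),(\ref{item:finalcoversim}) as in Proposition~\ref{prop:countord}, and the refutation of (\ref{item:finalcoverco}) and (\ref{item:finalcoversim}) from a $\psetkappa$-ghost at $\limp{i}$ via Lemma~\ref{lem:ghostn}. The only difference is that you spell out the case split ($\limp{i}<\kappa$ versus $\limp{i}=\kappa$) needed to produce the ghost, which the paper had already packaged into the discussion preceding Propositions~\ref{prop:ghostnotree} and~\ref{prop:ghostbeforei}.
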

\begin{proof}
Proposition~\ref{prop:genpcoconn} gives (\ref{item:finalcoverc})$\Leftrightarrow$(\ref{item:finalcoverpself}).  Parts~(\ref{item:finalcoverco})--(\ref{item:finalcoversim}) follow just as in Proposition~\ref{prop:countord}.  

If $\limp{i}$ is not $\kappa$-extensible, let $a$ be a $\psetkappa$-ghost at $\limp{i}$.  By Lemma~\ref{lem:ghostn}(\ref{item:ghostnpset}), $\setbr{-}^{\nump{i}} a$ is not $\pset$-coalgebraic, and is bisimilar to but distinct from  $0(i)$.
\end{proof}

Another way of obtaining negative results from a ghost is the following.
\begin{lem} \label{lem:homeghost}
Let $i < \kappa$ be a positive limit, and let $a$ be a $\psetkappa$-ghost at $i$.  There is a sequence $(x_m)_{m \in \nats}$ of $\lkappa$-branching pointed systems such that
\begin{itemize}
  \item $\pcop{i}{x_0} \rightsquigarrow a$
\item every successor of $x_0$ has a successor
 \item $x_{m+1}$ is a parent of just $x_m$.
\end{itemize}
\end{lem}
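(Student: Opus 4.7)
The plan is to build $x_0$ as a parent of representatives of the components of $a$ and then take iterated singleton parents for $x_{m+1}$.  Since $a$ is $\psetkappa$-Cauchy, each $a_j \in \nucoalgpk{j}$, so I may fix for every $j<i$ a pointed $\psetkappa$-coalgebra $y_j$ with $\pcop{j}{y_j} = a_j$.  Let $x_0$ be a parent of the indexed family $(y_j)_{j<i}$.  Then $x_0$ has $i$ successors at its root; since $i<\kappa$ with $\kappa$ regular and each $y_j$ is $\lkappa$-branching, $x_0$ is itself $\lkappa$-branching.  Moreover $\pcop{j+1}{x_0} = \setbr{\pcop{j}{y_k} \mid k<i}$ contains $\pcop{j}{y_j} = a_j$ for every $j<i$, which, by Definition~\ref{def:fctrans} applied at the limit $i$, gives $\pcop{i}{x_0} \fctrans{i} a$.

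The delicate clause is that every successor of $x_0$ must itself have a successor.  This hinges on the fact---already used in the paragraph following the definition of ghost---that for any $\psetkappa$-ghost $a$ at $i$, every component $a_{j+1}$ with $j<i$ is inhabited.  Indeed, if some $a_{j+1}$ were empty then, since the connecting maps $\nuconn{k+1}{l+1}\psetkappa$ both preserve and reflect emptiness, every successor-indexed component of $a$ would be empty, and a routine induction through the limit indices $k<i$ would give $a_k = \zat{k}$, forcing $a = \zat{i}$ and contradicting $a \neq \zat{i}$.  In particular $a_1$ is inhabited.  Now for each $j \geqslant 1$, since $(\pcop{k}{y_j})_{k \in \On}$ is a full branch, $\pcop{1}{y_j} = \nuconn{1}{j} \pcop{j}{y_j} = \nuconn{1}{j} a_j = a_1$ is inhabited, so $y_j$ has a successor.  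For $j=0$ I simply reset $y_0 \eqdef y_1$, which still satisfies $\pcop{0}{y_0} = \emptytup = a_0$ and now has a successor.

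Finally, define $x_{m+1}$ inductively as a parent of the singleton family $(x_m)$.  Each $x_{m+1}$ has exactly one successor, namely the embedded copy of $x_m$, and is $\lkappa$-branching because $x_m$ is.  All three bulleted conditions then hold by construction.  The main obstacle is the middle bullet, which forces one to identify and use that every $a_{j+1}$ is inhabited; the rest is bookkeeping within the parent construction.
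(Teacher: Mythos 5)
Your proof is correct and follows essentially the same route as the paper: choose $\lkappa$-branching representatives of the components of the Cauchy element $a$, take $x_0$ to be their parent, and use the inhabitedness of the successor-indexed components (forced by $a\neq\zat{i}$) to see that every successor of $x_0$ has a successor. The only difference is cosmetic: the paper picks $y_j$ with $\pcop{j+1}{y_j}=a_{j+1}$ rather than $\pcop{j}{y_j}=a_j$, which makes each $y_j$ automatically have a successor and avoids your $j=0$ patch.
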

\begin{proof}
The only difficulty is to obtain $x_0$, for we can then obtain $x_m$ by induction on $m$.  Since $a$ is $\psetkappa$-Cauchy, we may choose, for each positive $j < i$, a $\lkappa$-branching pointed system $y_j$ such that $a_{j+1} = \pcop{j+1}{y_j}$ (and hence  $a_{j} = \pcop{j}{y_j}$).  Let $x_0$ be a parent of $(y_j)_{j < i}$.  Then for $j < i$, we have $(\pcop{i}{x_0})_{j+1}  =  \pcop{j+1}{x_0} 
  =  \setbr{\pcop{j}{y} \mid x_0 \rightsquigarrow y} 
  \ni  a_{j}$, 
so $\pcop{i}{x_0} \rightsquigarrow a$.  Any successor $y$ of $x_0$ is, for some $j < i$, an embedding applied to $y_j$.  Since $\pcop{j+1}{y_j}$ is inhabited, $y_j$ and hence $y$ must have a successor.\end{proof}

We shall complete Corollary~\ref{cor:countchar}, which characterizes $\pcop{i}{x}$ and $\nuconn{j}{i}a$, in two parts: firstly considering when these elements have the required property (Proposition~\ref{prop:connprojsim}), and secondly considering when no other element has it (Proposition~\ref{prop:char}).
\begin{prop} \label{prop:connprojsim} \hfill
 \begin{enumerate}
  \item \label{item:gensimproj} For an ordinal $j$, the following are equivalent.
    \begin{enumerate}
    \item \label{item:gensimprojp} For any $\lkappa$-branching pointed system $x$ we have $x \simi{j} \pcop{j}{x}$.
   \item \label{item:gensimprojc} Either $j \geqslant \kappa$ (for $\kappa < \biginfty$) or  $\limp{j}$ is $\kappa$-extensible.
    \end{enumerate}
\item \label{item:gensimconn} For ordinals $j \leqslant i$, the following are equivalent.
  \begin{enumerate}
  \item \label{item:gensimconnp} For any $a \in \nuordpk{i}$, we have $\nuconn{j}{i} a \simi{j} a$.
  \item \label{item:gensimconnc} Either $j=i$ or $j \leqslant \nump{i}$ or $j \geqslant \kappa$ (for $\kappa < \biginfty$) or both $\limp{j}$ and $\limp{i}$ are $\kappa$-extensible.
  \end{enumerate}
  \end{enumerate}
\end{prop}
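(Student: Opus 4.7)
The plan is to use Proposition~\ref{prop:kernelsim}(\ref{item:kernelsimi}) to rewrite both parts as questions about equality of coalgebra projections: part~(\ref{item:gensimproj}) becomes ``when does $\pcop{j}{\pcop{j}{x}} = \pcop{j}{x}$?'', and part~(\ref{item:gensimconn}) becomes ``when does $\pcop{j}{\nuconn{j}{i}a} = \pcop{j}{a}$?''. Theorem~\ref{prop:genpcoconn} together with the ghost constructions of Propositions~\ref{prop:ghostbeforei} and~\ref{prop:ghostnotree} will then do most of the work.

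For direction $(\ref{item:gensimprojc}) \Rightarrow (\ref{item:gensimprojp})$ of part~(\ref{item:gensimproj}): when $\limp{j}$ is $\kappa$-extensible, I will apply Theorem~\ref{prop:genpcoconn} at $i = j$ to conclude $\pcop{j}{a} = a$ for every $a \in \nuordpk{j}$, and specialise at $a = \pcop{j}{x}$. When $j \geq \kappa$ with $\kappa < \biginfty$, I instead proceed by transfinite induction on $l \leq j$ to show that $\pcop{l}{\pcop{j}{y}} = \pcop{l}{y}$ for every $\lkappa$-branching $y$: the successor step rewrites the successors of $\pcop{j}{y}$ as $\setbr{\pcop{\pred{j}}{z} \mid y \rightsquigarrow z}$ via Proposition~\ref{prop:succpcopk}, and then invokes the inductive hypothesis at $(\pred{j}, l')$, using that $\pred{j} \geq \kappa$. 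Specialising at $l = j$ closes this case.

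For $(\ref{item:gensimprojp}) \Rightarrow (\ref{item:gensimprojc})$, assume $j < \kappa$ and $\limp{j}$ is not $\kappa$-extensible. Proposition~\ref{prop:ghostbeforei} then supplies a $\psetkappa$-ghost $c$ at $\limp{j}$, and Lemma~\ref{lem:homeghost} produces a sequence $(x_m)$ of $\lkappa$-branching pointed systems with $\pcop{\limp{j}}{x_0} \fctrans{\limp{j}} c$ in which each $x_{m+1}$ has unique successor $x_m$. Setting $x = x_{\nump{j}}$, iterated use of the unique-successor rule yields $\pcop{j}{x} = \setbr{-}^{\nump{j}} \pcop{\limp{j}}{x_0}$, and the same argument under $\fctrans{j}$ gives $\pcop{j}{\pcop{j}{x}} = \setbr{-}^{\nump{j}} \pcop{\limp{j}}{\pcop{\limp{j}}{x_0}}$. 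The two differ at the innermost component: $(\pcop{\limp{j}}{x_0})_2 = \setbr{\setbr{\emptytup}}$ because every successor of $x_0$ has a successor, whereas $\pcop{2}{\pcop{\limp{j}}{x_0}}$ additionally contains $\pcop{1}{c} = \emptyset$ since $c$ is a dead $\fctrans{\limp{j}}$-successor. The main obstacle I anticipate is justifying the unique-successor formulas for $\pcop{j}{x}$ and its iterate under $\fctrans{j}$, which should reduce at each step to the fact that $\setbr{-}^{k+1} \pcop{\limp{j}}{x_0}$ has only one $\fctrans{\limp{j}+k+1}$-successor.

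Part~(\ref{item:gensimconn}) will then follow by combining part~(\ref{item:gensimproj}) with Theorem~\ref{prop:genpcoconn}. For $\Leftarrow$: the cases $j = i$ and $j \leq \nump{i}$ are handled by Proposition~\ref{prop:finpost}(\ref{item:finpost}), which identifies $\pcop{j}{a}$, $\pcop{j}{\nuconn{j}{i}a}$, and $\nuconn{j}{i}a$; the case $j \geq \kappa$ is another transfinite induction on $l \leq j$ showing $\pcop{l}{a} = \pcop{l}{\nuconn{j}{i}a}$, using Proposition~\ref{prop:embedtrans} to biject the respective successors; and the case of both $\limp{j}$ and $\limp{i}$ $\kappa$-extensible is two applications of Theorem~\ref{prop:genpcoconn}. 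For $\Rightarrow$, suppose the condition fails. If $\limp{i}$ is $\kappa$-extensible but $\limp{j}$ is not, let $x$ come from the $(\ref{item:gensimprojp}) \Rightarrow (\ref{item:gensimprojc})$ direction of part~(\ref{item:gensimproj}) and put $a = \pcop{i}{x}$; using $x \simi{i} \pcop{i}{x}$ from the $\Leftarrow$ direction of part~(\ref{item:gensimproj}) at $i$, we get $\pcop{j}{a} = \pcop{j}{x}$ whereas $\pcop{j}{\nuconn{j}{i}a} = \pcop{j}{\pcop{j}{x}}$ is different. If $\limp{i}$ is not $\kappa$-extensible, take a $\psetkappa$-ghost $a^*$ at $\limp{i}$ (using Proposition~\ref{prop:ghostbeforei} or~\ref{prop:ghostnotree}) and set $a = \setbr{-}^{\nump{i}} a^*$: for $\nump{i} < j < \limp{i}$, Lemma~\ref{lem:ghostn}(\ref{item:ghostnnotsim}) gives $a \not\simi{j} \nuconn{j}{i}a$ directly, while for $\limp{i} \leq j < i$ we write $j = \limp{i} + k$ with $k < \nump{i}$, observe $\nuconn{j}{i}a = \setbr{-}^k a^*$, and distinguish it from $a$ at level $k+1$ by Lemma~\ref{lem:ghostn}(\ref{item:ghostnnotse}), reducing to the failure of bisimilarity between chains of distinct finite lengths that terminate at a dead node.
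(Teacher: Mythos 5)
Most of your proposal tracks the paper's own argument: the positive directions via Theorem~\ref{prop:genpcoconn}, Propositions~\ref{prop:succpcopk}, \ref{prop:embedtrans} and \ref{prop:finpost}, and the negative direction of part~(\ref{item:gensimproj}) and case~(A) of part~(\ref{item:gensimconn}) via Lemma~\ref{lem:homeghost}, are all sound (modulo the routine observation that the induction for $j\geqslant\kappa$ must quantify over all such $j$ simultaneously so that the inductive hypothesis applies at $\pred{j}$). However, there is a genuine gap in case~(B) of part~(\ref{item:gensimconn}), in the sub-case $\limp{i}\leqslant j<i$, i.e.\ $\limp{j}=\limp{i}=L$ with $j=L+k$, $i=L+m$, $k<m$. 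You claim $\nuconn{j}{i}\setbr{-}^{m}a^{*}=\setbr{-}^{k}a^{*}$, but this is false: writing $\nuconn{j}{i}=\pset^{k}\nuconn{L}{L+m-k}$, the map $\nuconn{L}{L+1}$ sends $\setbr{a^{*}}$ to its range, whose component at each successor level $l+1$ is the singleton $\setbr{a^{*}_{l}}$, whereas $a^{*}_{l+1}$ is a level of the ghost's channel and cannot be a singleton at every level (a channel with all levels singletons is a full branch, which the ghost's channel lacks). Worse, the failure is not merely computational: $\nuconn{j}{i}\setbr{-}^{m}a^{*}$ is an element with a unique $\fctrans{j}$-successor chain of length exactly $m$ terminating at the dead $a^{*}$, and $\setbr{-}^{m}a^{*}$ is likewise a unique chain of length $m$ terminating at $a^{*}$; hence $a=\setbr{-}^{m}a^{*}$ satisfies $a\sim\nuconn{j}{i}a$ (both are bisimilar to the Zermelo numeral $z_{m}$), so this witness cannot refute $\nuconn{j}{i}a\simi{j}a$ no matter how the computation is repaired.

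The paper handles exactly this sub-case differently: it takes the sequence $(x_{m})_{m\in\nats}$ from Lemma~\ref{lem:homeghost} over a $\psetkappa$-ghost at $L$ and sets $b\eqdef\pcop{i}{x_{n}}$ with $j = L+n$. Then $\nuconn{j}{i}b=\pcop{j}{x_{n}}$ reaches the dead ghost in $n+1$ steps (by Proposition~\ref{prop:pcoppcop}, since $\pred{j}=j$ fails but the limit $L$ is reached after $n$ steps), whereas every $(n+1)$-step descendant of $b$ is of the form $\pcop{\cdot}{y}$ for a successor $y$ of $x_{0}$ and therefore has a successor; this yields $b\not\simi{n+2}\nuconn{j}{i}b$, and $n+2\leqslant j$. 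The essential point your witness misses is that one needs an element whose image under $\nuconn{j}{i}$ can reach a dead state \emph{strictly earlier} than the element itself can; the iterated singleton of a ghost dies at the same depth on both sides. You should replace that sub-case with the $\pcop{i}{x_{n}}$ construction (or otherwise produce a witness with this asymmetry).
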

\begin{proof} For (\ref{item:gensimprojc})$\Rightarrow$(\ref{item:gensimprojp}), the $\kappa$-extensible case is proved as in Corollary~\ref{cor:countchar}(\ref{item:countcharproj} and the case $j \geqslant \kappa$ holds by Proposition~\ref{prop:succpcopk}.  For (\ref{item:gensimconnc})$\Rightarrow$(\ref{item:gensimconnp}), the $\kappa$-extensible case is proved as in Corollary~\ref{cor:countchar}(\ref{item:countcharconn}, the case $j \geqslant \kappa$ is by Proposition~\ref{prop:embedtrans} and the case $j \leqslant \nump{i}$ follows from Proposition~\ref{prop:finpost}(\ref{item:finpost}).

For the converse we proceed as follows.
\begin{enumerate}
\item If $j < \kappa$ and $\limp{j}$ is not $\kappa$-extensible, let $a$ be a $\psetkappa$-ghost at $\limp{j}$ and form the sequence $(x_{m})_{m \in \nats}$ as in Lemma~\ref{lem:homeghost}.   Put $n \eqdef \nump{j}$.  Then
  $\pcop{j}{x_{n}} \rightsquigarrow^{n} \pcop{j}{x_0} \rightsquigarrow a$ and $a$ has no
  successor, whereas if $x_{n} \rightsquigarrow^{n+1}y$ then $y$ has a successor.  So $\pcop{j}{x_{n}} \not \simi{n+2} x_n$.
\item Suppose $\nump{i} < j < i,\kappa$, and $\limp{j}$ and $\limp{i}$ are not both $\kappa$-extensible.  We consider three cases.
  \begin{itemize}
\item If $\limp{j} = \limp{i}$, put $L = \limp{j}$ and $j = L+n$ and $i = L+m+n+1$.   Let $a$ be a  $\psetkappa$-ghost at $L$ and form the sequence $(x_m)_{m \in \nats}$ as in Lemma~\ref{lem:homeghost}.  Put $b \eqdef \pcop{i}{x_n}$.  Then $\nuconn{j}{i}b = \pcop{L+n}{x_n} \rightsquigarrow^{n+1}a$, which has no successor.  On the other hand, if $b \rightsquigarrow^{n+1} c$ then $c = \pcop{L+m}{y}$, for some successor $y$ of $x_0$, and since $y$ has a successor, $c$ does too.  So $b \not \simi{n+2} \nuconn{j}{i}b$.
  \item If $\limp{i}$ is not $\kappa$-extensible and $j < \limp{i}$, let $a$ be a $\psetkappa$-ghost at $\limp{i}$.  Put $n \eqdef \nump{i}$ and apply Lemma~\ref{lem:ghostn}(\ref{item:ghostnnotsim}).
  \item If $\limp{i}$ is $\kappa$-extensible then $\limp{j}$ is not, so let $a$ be a $\psetkappa$-ghost at $\limp{j}$ and form the  sequence $(x_m)_{m \in \nats}$ as in Lemma~\ref{lem:homeghost}.   Put $n \eqdef \nump{j}$ and $b \eqdef \pcop{i}{x_n}$.   Then $b \simi{i} x_n$ but $\nuconn{j}{i} b = \pcop{j}{x_n} \not \simi{n+2} x_n$, as we saw in the proof of part~\ref{item:gensimproj}.  So $b \not \simi{n+1} \nuconn{j}{i} b$. \qedhere
   \end{itemize}
 \end{enumerate}
 \end{proof}

\begin{prop} \hfill \label{prop:char}
 \begin{enumerate}
  \item \label{item:charproj} For any ordinal $j$, the following are equivalent.
    \begin{enumerate}
   \item \label{item:charprojpo} For any $\lkappa$-branching pointed system $x$, any $b \in \nuordpk{j}$ such that $b \simi{j} x$ is $\pcop{j}{x}$.
    \item \label{item:charprojpt} For any $\lkappa$-branching pointed system $x$, we can characterize $\pcop{j}{x}$ as the unique $b \in \nuordpk{j}$ such that $b \simi{j} x$.
  \item \label{item:charprojc} $\limp{j}$ is $\kappa$-extensible.
    \end{enumerate}
  \item \label{item:charconn} For any ordinals $j \leqslant i$, the following are equivalent.
    \begin{enumerate}
    \item \label{item:charconnpo} For any $a \in \nuordpk{i}$, any $b\in \nuordpk{j}$ such that $b \simi{j} a$ is $\nuconn{j}{i}a$.
    \item \label{item:charconnpt} For any $a \in \nuordpk{i}$, we can characterize $\nuconn{j}{i} a$ as the unique $b\in \nuordpk{j}$ such that $b \simi{j} a$.
   \item \label{item:charconnc} Either $j \leqslant \nump{i}$ or both $\limp{j}$ and $\limp{i}$ are $\kappa$-extensible.
    \end{enumerate}
  \end{enumerate}
\end{prop}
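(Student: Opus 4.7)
The plan is to establish the implications (b)$\Rightarrow$(a), (c)$\Rightarrow$(b) and (a)$\Rightarrow$(c) in each part, noting that (b)$\Rightarrow$(a) is immediate. For (c)$\Rightarrow$(b) in part~(\ref{item:charproj}), assuming $\limp{j}$ is $\kappa$-extensible, I would invoke Proposition~\ref{prop:connprojsim}(\ref{item:gensimprojp}) for existence ($\pcop{j}{x} \simi{j} x$) and Theorem~\ref{prop:finalcover} for uniqueness within $\nuordpk{j}$. For part~(\ref{item:charconn}), if $j \leqslant \nump{i}$ the characterization is delivered by Proposition~\ref{prop:finpost}(\ref{item:finpost}); otherwise both $\limp{j}$ and $\limp{i}$ are $\kappa$-extensible, so I combine Proposition~\ref{prop:connprojsim}(\ref{item:gensimconnp}) for existence with Theorem~\ref{prop:finalcover} for uniqueness.

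For (a)$\Rightarrow$(c) in part~(\ref{item:charproj}) I argue contrapositively: assuming $\limp{j}$ is not $\kappa$-extensible, I take a $\psetkappa$-ghost $g$ at $\limp{j}$ and set $n := \nump{j}$, $b := \setbr{-}^{n} g$ and $c := \setbr{-}^{n} 0(\limp{j})$, both in $\nuordpk{j}$. Lemma~\ref{lem:ghostn}(\ref{item:ghostnnotse}) yields $b \sim c$ with $b \neq c$. Viewing $c$ as a $\lkappa$-branching pointed system via $\fctrans{j}$, its unique-successor chain of $n$ singletons terminates at the dead element $0(\limp{j})$, so the successor recursion gives $\pcop{j}{c} = c$. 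Taking $x := c$ refutes (a).

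For (a)$\Rightarrow$(c) in part~(\ref{item:charconn}), again contrapositively, I suppose $j > \nump{i}$ and at least one of $\limp{j}, \limp{i}$ fails $\kappa$-extensibility, and split into cases. When $\limp{j}$ fails, I take a $\psetkappa$-ghost $g$ at $\limp{j}$, set $n := \nump{j}$, choose a pointed system consisting of states $z_0, \ldots, z_n$ with transitions $z_r \rightsquigarrow z_{r-1}$ and $z_0$ dead, and set $A := \pcop{i}{z_n}$ and $B := \setbr{-}^{n} g$. Then $\nuconn{j}{i} A = \pcop{j}{z_n} = \setbr{-}^{n} 0(\limp{j}) =: c$, and iterating the unique-successor property of the coalgebra projection shows $A$ is itself a chain of $n$ transitions ending at some dead element $0(k)$, so $\pcop{j}{A} = \setbr{-}^{n} \pcop{\limp{j}}{0(k)} = \setbr{-}^{n} 0(\limp{j}) = c$, and likewise $\pcop{j}{B} = c$; hence Proposition~\ref{prop:kernelsim}(\ref{item:kernelsimi}) gives $B \simi{j} A$, while $B \neq c = \nuconn{j}{i} A$. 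When only $\limp{i}$ fails (so $\limp{j} < \limp{i}$ and $j < \limp{i}$), I take a $\psetkappa$-ghost $g$ at $\limp{i}$, set $m := \nump{i}$, $A := \setbr{-}^{m} g$ and $B := \pcop{j}{A}$; Proposition~\ref{prop:connprojsim}(\ref{item:gensimprojp}) gives $B \simi{j} A$, and Lemma~\ref{lem:ghostn}(\ref{item:ghostnnotsim}) (using its $n$ as $m$ in the range $m < j < \limp{i}$) gives $A \not\simi{m+1} \nuconn{j}{i} A$, so $B = \nuconn{j}{i} A$ would contradict the lemma. The main obstacle is the first sub-case above: showing $\pcop{j}{A} = c$ demands careful tracking of the predecessor operator along the successor chain of $A = \pcop{i}{z_n}$, which I expect to handle uniformly via the fact that every $0(k)$ is dead, so $\pcop{\limp{j}}{0(k)} = 0(\limp{j})$ regardless of $k$.
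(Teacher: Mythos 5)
Your proposal is correct and follows essentially the same route as the paper: the forward directions via Proposition~\ref{prop:connprojsim}, Theorem~\ref{prop:finalcover} and Proposition~\ref{prop:finpost}, and the converses by exhibiting a $\psetkappa$-ghost at the non-extensible limit part together with the singleton towers $\setbr{-}^{n}$ over the ghost and over the Zermelo-style chain $z_n$ (whose projection is $\setbr{-}^{n}\zat{\limp{j}}$). The only cosmetic difference is that in the case where only $\limp{i}$ fails you cite Lemma~\ref{lem:ghostn}(\ref{item:ghostnnotsim}) where the paper instead derives the contradiction directly by applying $\nuconn{\nump{i}+1}{j}$; both work.
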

\begin{proof} The proof of~(\ref{item:charprojc})$\Rightarrow$(\ref{item:charprojpt}) and~(\ref{item:charconnc})$\Rightarrow$(\ref{item:charconnpt}) is proved the same way as Corollary~\ref{cor:countchar}.  For the converse parts, we form a sequence  $(z_m)_{m \in \nats}$ where $z_0$ is a parent of nothing, i.e.\ a pointed system with no successor, and $z_{m+1}$ a parent of just $z_m$.  (Essentially this is the Zermelo encoding of the natural numbers.)  

In both parts, if $\limp{j}$ is not $\kappa$-extensible, let $a$ be a
   $\psetkappa$-ghost at $\limp{j}$ and put $n \eqdef \nump{j}$.  Then
   $\pcop{j}{z_n} = \setbr{-}^{n}\zat{\limp{j}}$ (lemma: $\pcop{k+n}{z_n} = \setbr{-}^{n}\zat{k}$, by induction on $n$), which is bisimilar to
   but distinct from $\setbr{-}^{n}a$, so
(\ref{item:charprojpo}) is false. And, in part~\ref{item:charconn}, this is also $\nuconn{j}{i}{\pcop{i}{z_n}}$, so~(\ref{item:charconnpo}) is false.  

In part~(\ref{item:charconn}), if $\limp{i}$ is not $\kappa$-extensible and $j > m \eqdef \nump{i}$, let $a$ be a $\psetkappa$-ghost at $\limp{i}$.  Then $\pcop{j}{z_m}$ is bisimilar to $z_m$ (this is a lemma for all $j>m$, by induction on $m$) and hence to $\setbr{-}^{m}a$.  But $\pcop{j}{z_m} = \nuconn{j}{i}\setbr{-}^{m}a$ would imply, by applying $\nuconn{m+1}{j}$, that $\pcop{m+1}{z_m} = \nuconn{m+1}{i}\setbr{-}^{m}a$ i.e.\ $\setbr{-}^{m} = \setbr{-}^{m} \emptytup$, contradiction.  So~(\ref{item:charconnpo}) is false. \end{proof}


We complete Proposition~\ref{prop:brdepth} as follows.
\begin{prop} \label{prop:brdepthbeyond}
Assume $\kappa < \biginfty$.  
  \begin{enumerate}
  \item 
Let $j$ be an ordinal.   Every element of $\nuordpk{\kappa+j}$ that is $\psetkappa$-coalgebraic is $\lkappa$-branching, and conversely iff either $j \geqslant \omega$ or $\kappa$ has the tree property .
  \item Let $j\leqslant i$ be ordinals.  Every element of $\nuordpk{\kappa+j}$ that is in the range of $\nuconn{\kappa+j}{\kappa+i}$ is $\lkappa$-branching at depth $<i$, and conversely iff either $j=i$ or $j \geqslant \omega$ or $\kappa$ has the tree property.
  \end{enumerate}
\end{prop}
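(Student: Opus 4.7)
The forward directions of both parts ($\psetkappa$-coalgebraic implies $\lkappa$-branching in part~(1); in the range of $\nuconn{\kappa+j}{\kappa+i}\psetkappa$ implies $\lkappa$-branching at depth $<i$ in part~(2)) require no new argument beyond Proposition~\ref{prop:brdepth}, relying on Proposition~\ref{prop:succpcopk} and Proposition~\ref{prop:embedtrans} respectively. I therefore focus on the converses, splitting into the positive case (where the desired implication holds) and the negative case (where a ghost supplies a counterexample).

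For part~(1), the plan is to case-split on the hypothesis. If $j \geqslant \omega$ then $\limp{\kappa+j} \geqslant \kappa+\omega$ and hence $\kappa$-extensible, and if $\kappa$ has the tree property then $\limp{\kappa+j} = \kappa$ is $\kappa$-extensible by the tree-property clause of the definition. In either case Theorem~\ref{prop:finalcover} yields $\pcop{\kappa+j}{a} = a$, and since $a$ is $\lkappa$-branching, the $\fctrans{\kappa+j}$-subsystem of $\nuordpk{\kappa+j}$ reachable from $a$ is a pointed $\psetkappa$-coalgebra exhibiting $a$ as $\psetkappa$-coalgebraic. For the negative case, assume $j < \omega$ and $\kappa$ lacks the tree property. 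Proposition~\ref{prop:ghostnotree} then provides a $\psetkappa$-ghost $a$ at $\kappa$; Proposition~\ref{prop:kappasub} shows $a$ has no $\pset$-successor in $\nuordp{\kappa}$ either, so $a$ is a $\pset$-ghost at $\kappa$. The candidate counterexample is $\setbr{-}^{j}a \in \nuordpk{\kappa+j}$, which determines a single finite chain of successors terminating at $a$, hence is trivially $\lkappa$-branching, yet by Lemma~\ref{lem:ghostn}(\ref{item:ghostnpset}) it is not $\pset$-coalgebraic and \emph{a fortiori} not $\psetkappa$-coalgebraic.

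For part~(2) the cases $j = i$ and $j \geqslant \omega$ are immediate, the latter because Proposition~\ref{prop:allinj} makes the chain stable at $\kappa+\omega$. When $\kappa$ has the tree property and $j < \omega$, I set $b \eqdef \pcop{\kappa+i}{a}$; Theorem~\ref{prop:finalcover} applied to $\limp{\kappa+j} = \kappa$ gives $\pcop{\kappa+j}{a} = a$, hence $\nuconn{\kappa+j}{\kappa+i} b = a$. It remains to check $b \in \nuordpk{\kappa+i}$: if $i \geqslant \omega$ then $a$ is globally $\lkappa$-branching and a transfinite induction shows every component of $b$ is $\lkappa$-sized; if $i < \omega$ the induction on $i$ from Proposition~\ref{prop:brdepth}(2) adapts verbatim, using that $a$ having $\lkappa$-many successors each $\lkappa$-branching at depth $<i-1$ makes $b$ an $\lkappa$-sized subset of $\nuordpk{\kappa+i-1}$. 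For the negative case with $j < i$, $j < \omega$, and $\kappa$ lacking the tree property, I again take $\setbr{-}^{j}a$: it is $\lkappa$-branching at every depth, but Lemma~\ref{lem:ghostn}(\ref{item:ghostnrange}) rules it out of the range of $\nuconn{\kappa+j}{\kappa+j+1}\pset$, and since $\nuconn{\kappa+j}{\kappa+i}\psetkappa$ factors through this map its image is contained in that range.

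The main obstacle I anticipate is the tree-property case of part~(2) with $i < \omega$: the induction must carefully track how ``$\lkappa$-branching at depth $<i$'' degrades to ``$\lkappa$-branching at depth $<i-1$'' for the successors of $a$, and marry this with the $\lkappa$-sized count at each successor level in the final chain. The remaining steps essentially apply Theorem~\ref{prop:finalcover} and Lemma~\ref{lem:ghostn} as packaged utilities, and the splittings on $j$, $i$, and the tree property are straightforward bookkeeping.
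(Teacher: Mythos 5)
Your proof is correct and follows essentially the same route as the paper's: the positive converses come from $\kappa$-extensibility of $\limp{\kappa+j}$ (stability beyond $\kappa+\omega$, or the tree-property case handled exactly as the $\aleph_0$ argument of Proposition~\ref{prop:brdepth} via $\pcop{\kappa+j}{a}=a$), and the negative converses come from applying $\setbr{-}^{j}$ to a $\psetkappa$-ghost at $\kappa$ and invoking Lemma~\ref{lem:ghostn}(\ref{item:ghostnrange})--(\ref{item:ghostnpset}). Your write-up merely makes explicit some steps the paper leaves implicit (that the ghost is also a $\pset$-ghost, and that $\nuconn{\kappa+j}{\kappa+i}$ factors through $\nuconn{\kappa+j}{\kappa+j+1}$).
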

\begin{proof} We only need to prove the converse parts.  If $j \geqslant \omega$ then every element of $\nuordpk{\kappa+j}$ is $\lkappa$-branching.  If $\kappa$ has the tree property, we proceed as for $\aleph_0$.  Otherwise, let $a$ be a $\psetkappa$-ghost at $\kappa$.   The element $\setbr{-}^{j}a$ is $\lkappa$-branching but not $\psetkappa$-coalegebraic and not in the image of $\nuconn{\kappa+j}{\kappa+j+1}$, by Lemma~\ref{lem:ghostn}(\ref{item:ghostnrange})--(\ref{item:ghostnpset}). \end{proof}

\subsection{Surjectivity and $\psetkappa$-coalgebraicity}

We next consider which connecting maps are surjective.
\begin{prop} \label{prop:treepropsurj}
Let $\lambda$ be a regular infinite cardinal with the tree property that is less than $\kappa$.  Then $\psetkappa$ and $\psetpluskappa$ preserve, up to a surjection, the limit of any inverse $\lambda$-chain $D$ with all levels $\llambda$-sized.  Explicitly, any channel $C$ through $D$ is the range of a $\lkappa$-sized set of full branches. 
\end{prop}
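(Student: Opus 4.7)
The plan is first to reduce the claim to showing that every $x \in C$ lies on some full branch of $C$: by the Axiom of Choice, selecting one such branch $\theta(x)$ for each $x \in C$ yields a set of size at most $|C| \leqslant \lambda \cdot \lambda = \lambda < \kappa$, hence $\lkappa$-sized, whose range is evidently $C$. The $\psetpluskappa$ case is no different, since each constructed branch is nonempty on every level.

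To find a branch through a fixed $x \in C_i$, I would form the cone $T^x$ over $x$: set $T^x_j = \{\ddiag{j}{i}x\}$ for $j \leqslant i$ and $T^x_j = \{y \in C_j : \ddiag{i}{j}y = x\}$ for $j > i$. The channel properties of $C$ make $T^x$ an inverse $\lambda$-chain with surjective connecting maps, $\llambda$-sized levels, and a single-element root. However, $T^x$ need not be tidy: at a positive limit $j$, distinct elements of $T^x_j$ may share the same restriction to every $k < j$.

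I would repair this by quotienting. For each positive limit $j < \lambda$, let $y \approx_j y'$ iff $\ddiag{k}{j}y = \ddiag{k}{j}y'$ for all $k < j$, and set $\tilde{T}^x_j = T^x_j / \!\approx_j$; elsewhere set $\tilde{T}^x_j = T^x_j$. The connecting maps descend and inherit surjectivity, and the level sizes can only decrease. To verify tidiness, take $y \not\approx_j y'$ at a positive limit $j$ and let $k_0 < j$ be least with $\ddiag{k_0}{j}y \neq \ddiag{k_0}{j}y'$: if $k_0$ is a successor the disagreement is immediately visible in $\tilde{T}^x_{k_0} = T^x_{k_0}$; if $k_0$ is a positive limit, then by functoriality the disagreement also persists at the successor $k_0 + 1$, which is strictly below the limit $j$. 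Applying the tree property to $\tilde{T}^x$ yields a full branch $([y_j])_{j < \lambda}$.

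Finally I would lift this branch back to $T^x$. For a positive limit $j$ and $k < j$, every member of $[y_j]$ projects to the same element $r^j_k \in T^x_k$; branch coherence in $\tilde{T}^x$ together with functoriality gives $r^{j_2}_k = r^{j_1}_k$ whenever $k < j_1 < j_2$ are positive limits, so the value depends only on $k$. Setting $w_k = y_k$ for $k$ a successor or $0$, and $w_k = r^j_k$ for $k$ a positive limit (any positive limit $j > k$ suffices, and one exists by regularity of $\lambda$), a routine four-case check on whether $k$ and $j$ are successors or positive limits confirms $\ddiag{k}{j}w_j = w_k$ throughout, giving a coherent family in $T^x$ and hence a full branch of $C$ through $x$. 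The main obstacle is precisely this lifting step: the quotient discards which representative was picked at each limit, and reconstructing a coherent choice across all limit indices relies on the stability $r^{j_2}_k = r^{j_1}_k$ that the branch in $\tilde{T}^x$ provides.
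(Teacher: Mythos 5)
Your proof is correct, and its skeleton is the same as the paper's: reduce to producing, for each $i<\lambda$ and $x \in C_i$, a full branch of $C$ through $x$; get that branch by applying the tree property to the cone of developments of $x$ inside $C$; then select one branch per element by Choice and count $\lambda \times \lambda = \lambda < \kappa$. The one genuine divergence is how the cone is made tidy. The paper disposes of this in a single sentence, by reducing to the case where $D_j = \lim_{k<j}D_k$ at every limit $j<\lambda$ (so that a full branch below a limit has at most one extension and tidiness is automatic; the reduction is the cofinal-embedding manoeuvre of Proposition~\ref{prop:tidynoprob}). You instead quotient the cone $T^x$ at each positive limit level by the ``same restrictions'' relation and then lift the branch back. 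That route works: the two points it hinges on --- that a disagreement at a least level $k_0$ persists at the successor $k_0+1 < j$, so distinct classes at a limit level are already separated at an unquotiented level strictly below, and that branch coherence makes $r^{j}_k$ independent of the choice of positive limit $j>k$ --- are exactly the right ones, and your lift is forced to be coherent. The trade-off is that your argument is self-contained but considerably longer than the paper's reduction. One cosmetic remark: for $\psetpluskappa$ the point is that the selected \emph{set} of branches is inhabited (which it is, since $C$ has an inhabited level and hence contributes at least one $x$), not that each branch is ``nonempty on every level.''
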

\begin{proof}
It suffices to prove the case where $D_i = \lim_{j < i} D_j$ for each limit $i<\lambda$.  For $i < \lambda$, each $x \in C_i$ extends by the tree property to a full branch through $D$, because all levels are $\llambda$-sized.  For each $i < \lambda$ and $x \in C_i$,   choose such a branch $\theta_i(x)$, by the Axiom of Choice.  The set  $\setbr{\theta_i(x) \mid i < \lambda, x \in C_{i}}$ has size $\leqslant \lambda \times \lambda = \lambda < \kappa$ and has range $C$. \end{proof}

\begin{prop} \label{prop:surject}
  For ordinals  $j \leqslant i$, the following are equivalent.
\begin{enumerate}
\item \label{item:surjectp} The connecting map $\nuconn{j}{i}\psetkappa$ is an surjection.
\item \label{item:surjectc} Either $j=i$ or $\limp{j}$ is either $0$ or $\omega$-cofinal or weakly compact or $\geqslant \kappa + \omega$ (for  $\kappa < \biginfty$).
\end{enumerate}
\end{prop}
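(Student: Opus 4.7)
The plan is to start from Proposition~\ref{prop:projconnect}, which tells us that surjectivity of $\nuconn{j}{i}\psetkappa$ for any $i \geqslant j$ is equivalent to surjectivity of $\nuconn{j}{j+1}\psetkappa$ and, in turn, to the assertion that every element of $\nuordpk{j}$ is $\psetkappa$-coalgebraic. So, fixing $j < i$ (the case $j = i$ being trivial), the biconditional reduces to a characterization of those $j$ at which every element of $\nuordpk{j}$ is $\psetkappa$-coalgebraic, in terms of the decomposition $j = \limp{j} + \nump{j}$.

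For (\ref{item:surjectc})$\Rightarrow$(\ref{item:surjectp}) I would induct on $\nump{j}$ with case analysis on $\limp{j}$. The inductive step is immediate since $\psetkappa$ preserves surjections, so only the base case $j = \limp{j}$ matters. If $\limp{j} = 0$ then $\nuordpk{0}$ is a singleton; if $\limp{j}$ is a positive $\omega$-cofinal ordinal, pick a cofinal $\omega$-sequence with supremum $\limp{j}$ and invoke Proposition~\ref{prop:pressurj} to write any element of $\nuordpk{\limp{j}}$ as the range of a $\lkappa$-sized set of full branches, hence as $\psetkappa$-coalgebraic; if $\limp{j}$ is a weakly compact cardinal strictly less than $\kappa$, Proposition~\ref{prop:treepropsurj} plays the same role, with Proposition~\ref{prop:stronginacc} ensuring that the relevant inverse chain has levels of cardinality strictly less than $\limp{j}$; and if $\limp{j} \geqslant \kappa + \omega$, Propositions~\ref{prop:allinj} and~\ref{prop:ifstable} give stabilization, hence bijectivity.

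For the converse I would contrapose: if $\limp{j}$ fails every listed condition, then it is a positive limit, neither $\omega$-cofinal nor weakly compact, and (when $\kappa < \biginfty$) satisfies $\limp{j} < \kappa + \omega$. So either $\limp{j} < \kappa$ or $\limp{j} = \kappa$ (the sole limit ordinal in $[\kappa, \kappa+\omega)$). In the former case Proposition~\ref{prop:ghostbeforei} delivers a $\psetkappa$-ghost $a$ at $\limp{j}$; in the latter $\kappa$ lacks the tree property, so Proposition~\ref{prop:ghostnotree} delivers one. By Lemma~\ref{lem:ghostn}(\ref{item:ghostnpset}), $\setbr{-}^{\nump{j}} a \in \nuordpk{j}$ is not $\psetkappa$-coalgebraic, and Proposition~\ref{prop:projconnect} then concludes non-surjectivity. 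I expect the main obstacle to be the bookkeeping at the boundary $\limp{j} = \kappa$: Proposition~\ref{prop:kappasucc} already exhibits a non-$\psetkappa$-coalgebraic element of $\nuordpk{\kappa}$ with $\kappa$ successors, and one must align this with the reading of (\ref{item:surjectc}) so that the equivalence lands correctly at that cell.
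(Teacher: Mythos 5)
Your overall route coincides with the paper's: reduce via Proposition~\ref{prop:projconnect} to the question of whether every element of $\nuordpk{j}$ is $\psetkappa$-coalgebraic, prove the positive cases with Propositions~\ref{prop:pressurj} and~\ref{prop:treepropsurj} and stabilization, and refute the rest with ghosts. The forward direction and the converse for $\limp{j}<\kappa$ are sound. The genuine gap is exactly the cell you flagged, $\limp{j}=\kappa$, and it is not mere bookkeeping: your step there is ``$\kappa$ is not weakly compact, hence $\kappa$ lacks the tree property, hence Proposition~\ref{prop:ghostnotree} applies''. That inference is invalid, since weak compactness is the conjunction of strong inaccessibility and the tree property, and a non-inaccessible regular cardinal may consistently have the tree property (whether $\aleph_2$ does is independent of ZFC, see footnote~\ref{foot:treeprop}). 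When $\kappa$ has the tree property, $\kappa$ is $\kappa$-extensible and there is provably no $\psetkappa$-ghost at $\kappa$ (a ghost would contradict Proposition~\ref{prop:backconnallord}), so the witness you are trying to construct does not exist---yet $\nuconn{\kappa}{\kappa+1}\psetkappa$ still fails to be surjective.

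The paper avoids this by disposing of the entire band $\kappa\leqslant j<\kappa+\omega$ before doing any case analysis on $\limp{j}$: there $\nuconn{j}{i}\psetkappa$ is injective (Proposition~\ref{prop:injective}) but, for $j<i$, not bijective (Proposition~\ref{prop:noearlier}), hence not surjective; the underlying non-coalgebraic witness is the element of Proposition~\ref{prop:kappasucc} with $\kappa$ successors, not a ghost. You already name Proposition~\ref{prop:kappasucc} as the relevant tool, so the repair is to replace the tree-property step by this injectivity/non-bijectivity argument, noting that on this band condition~(\ref{item:surjectc}) is to be read as the paper's proof reads it, namely as $j=i$ or $j\geqslant\kappa+\omega$. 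A further small point: your forward direction invokes Proposition~\ref{prop:pressurj} in the $\omega$-cofinal case, and that proposition assumes $\kappa>\aleph_0$; this is harmless once $j\geqslant\kappa$ has been separated off first, since a positive $\omega$-cofinal $\limp{j}<\kappa$ forces $\kappa>\aleph_0$, but as written your case split does not guarantee it.
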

\begin{proof}
By Proposition~\ref{prop:projconnect} we may assume $i=j+1$.  The case $j \geqslant \kappa$ has been treated (Proposition~\ref{prop:injective} and~\ref{prop:noearlier}), so we assume $j<\kappa$.  Since $\psetkappa$ preserves and reflects surjectivity, we may assume $j$ is a limit.

For (\ref{item:surjectc})$\Rightarrow$(\ref{item:surjectp}), it suffices by Proposition~\ref{prop:psi} for $\psetkappa$ to preserve the limit of any inverse $j$-chain up to a surjection.   The $j=0$ case is trivial, the $\omega$-cofinal case is by Proposition~\ref{prop:pressurj}, and the weakly compact case is by Proposition~\ref{prop:treepropsurj}.  For the converse, if $j < \kappa$ and $j$ is not $\kappa$-extensible then there is a $\psetkappa$-ghost at $j$, which is not even $\pset$-coalgebraic. \end{proof}

\begin{cor} \label{prop:allpsetkco}
  For an ordinal $i$, the following are equivalent.
  \begin{itemize}
  \item \label{item:allpsetkcop} Every element of $\nuordpk{i}$ is $\psetkappa$-coalgebraic. 
  \item \label{item:allpsetkcoc}  $\limp{i}$ is either $0$ or $\omega$-cofinal or weakly compact or $\geqslant \kappa + \omega$ (for $\kappa < \biginfty$).
  \end{itemize}
\end{cor}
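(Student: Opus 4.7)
The plan is to chain two earlier characterizations, with no new construction required. First, Proposition~\ref{prop:projconnect} gives the equivalence between ``every element of $\nuordpk{i}$ is $\psetkappa$-coalgebraic'' and surjectivity of the successor connecting map $\nuconn{i}{i+1}\psetkappa$. Second, Proposition~\ref{prop:surject} fully characterizes, for each pair $j \leqslant h$, when $\nuconn{j}{h}\psetkappa$ is surjective. So the corollary should follow immediately by specializing Proposition~\ref{prop:surject} to the pair $(i,i+1)$.

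Concretely, I would instantiate Proposition~\ref{prop:surject} with ``$j$'' set to $i$ and ``$i$'' set to $i+1$. Since $i \ne i+1$, the disjunct $j=i$ (i.e.\ $j = h$ in the instantiated statement) is false, so condition~(\ref{item:surjectc}) collapses to the assertion that $\limp{i}$ is $0$, $\omega$-cofinal, weakly compact, or $\geqslant \kappa + \omega$ (when $\kappa < \biginfty$). Combining the two equivalences yields the stated biconditional.

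There is no real obstacle: the genuine work was already done when proving Proposition~\ref{prop:surject}. Its positive direction reduced surjectivity of $\nuconn{i}{i+1}\psetkappa$ to limit preservation of $\psetkappa$ over inverse chains of the relevant cofinality (handled via Propositions~\ref{prop:psi}, \ref{prop:pressurj}, and~\ref{prop:treepropsurj}), while its negative direction used the $\psetkappa$-ghosts supplied by Propositions~\ref{prop:ghostnotree} and~\ref{prop:ghostbeforei} at non-$\kappa$-extensible limits; the present corollary simply transports that dichotomy across the equivalence of Proposition~\ref{prop:projconnect}.
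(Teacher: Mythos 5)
Your proposal is correct and is exactly the paper's argument: the paper's proof reads ``Follows from Proposition~\ref{prop:surject} by Proposition~\ref{prop:projconnect},'' which is precisely the chaining you describe. Your instantiation of Proposition~\ref{prop:surject} at the pair $(i,i+1)$, discarding the vacuous disjunct $j=i$, and transporting the result across the equivalence of Proposition~\ref{prop:projconnect} is the intended reasoning.
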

\begin{proof} Follows from Proposition~\ref{prop:surject} by Proposition~\ref{prop:projconnect}. \end{proof}

We next see that $\psetkappa$-Cauchy is weaker than $\psetkappa$-coalgebraic.
\begin{prop} \label{prop:cauchyco}
  For a limit ordinal $i$, the following are equivalent.
  \begin{enumerate}
  \item \label{item:cauchycop} Every $\psetkappa$-Cauchy element of $\nuordpk{i}$ is $\psetkappa$-coalgebraic. 
  \item \label{item:cauchycoc}  $i$ is either $0$ or $\omega$-cofinal or weakly compact or $\geqslant \kappa + \omega$ (for $\kappa < \biginfty$).
  \end{enumerate}
\end{prop}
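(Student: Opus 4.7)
The plan is to reduce $(\ref{item:cauchycoc})\Rightarrow(\ref{item:cauchycop})$ to Corollary~\ref{prop:allpsetkco} directly, and to reduce the contrapositive of the converse to the existence of $\psetkappa$-ghosts (for $i<\kappa$) together with the cardinality pathology of Proposition~\ref{prop:kappasucc} (for $i=\kappa$).

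For $(\ref{item:cauchycoc})\Rightarrow(\ref{item:cauchycop})$: since $i$ is a limit we have $\limp{i}=i$, so the hypothesis on $i$ in (\ref{item:cauchycoc}) is exactly the one required by Corollary~\ref{prop:allpsetkco}. That corollary then yields that every element of $\nuordpk{i}$ is $\psetkappa$-coalgebraic, and a fortiori every $\psetkappa$-Cauchy one is.

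For the converse I would argue contrapositively. Suppose (\ref{item:cauchycoc}) fails; then $i$ is a positive limit, is neither $\omega$-cofinal nor weakly compact, and (if $\kappa<\biginfty$) satisfies $i<\kappa+\omega$, which since $i$ is a limit forces $i\leqslant\kappa$. If $i<\kappa$, Proposition~\ref{prop:ghostbeforei} yields a $\psetkappa$-ghost $a$ at $i$: by definition $a$ is $\psetkappa$-Cauchy, and it cannot even be $\pset$-coalgebraic, because were $a=\pcop{i}{x}$ then $a\ne\zat{i}$ would force $a_1=\pcop{1}{x}=\setbr{\emptytup}$, so $x$ would have some successor $y$ and hence $a\fctrans{i}\pcop{i}{y}$ (using $\pred{i}=i$), contradicting that $a$ has no successor.

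If instead $i=\kappa$ (so $\kappa$ is uncountable, since $\omega$ is weakly compact by convention), I would use the universal witness $e\eqdef\pcop{\kappa}{\von{\kappa}}$ provided by Proposition~\ref{prop:kappasucc}: it is $\psetkappa$-Cauchy and has exactly $\kappa$ successors. Were $e=\pcop{\kappa}{x}$ for some $\lkappa$-branching pointed system $x$, Proposition~\ref{prop:succpcopk} would describe the successors of $e$ as $\setbr{\pcop{\kappa}{y}\mid x\rightsquigarrow y}$, a set of cardinality at most $\lkappa<\kappa$---a contradiction. The case $i=\kappa$ is in my view the main obstacle: when $\kappa$ has the tree property but is not strongly inaccessible, no $\psetkappa$-ghost at $\kappa$ need exist, so Propositions~\ref{prop:ghostbeforei} and~\ref{prop:ghostnotree} both fall short; the key move is to dispense with ghosts there and lean on this simple cardinality argument instead.
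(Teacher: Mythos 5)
Your proposal is correct and follows essentially the same route as the paper: the forward direction is delegated to Corollary~\ref{prop:allpsetkco}, the case $i<\kappa$ uses the $\psetkappa$-ghost from Proposition~\ref{prop:ghostbeforei} (which is Cauchy but not even $\pset$-coalgebraic), and the case $i=\kappa$ uses the element of Proposition~\ref{prop:kappasucc} with $\kappa$ successors, ruled out as $\psetkappa$-coalgebraic by the cardinality bound of Proposition~\ref{prop:succpcopk}. Your closing observation about why ghosts cannot cover $i=\kappa$ when $\kappa$ has the tree property is exactly the point the paper's case split is designed to handle.
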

\begin{proof} For (\ref{item:cauchycoc})$\Rightarrow$(\ref{item:cauchycop}), every element of $\nuordpk{i}$ is $\psetkappa$-coalgebraic.  For the converse: if $i < \kappa$ and is neither $0$ nor $\omega$-cofinal nor weakly compact, let $a$ be a $\psetkappa$-ghost at $i$.  It is $\psetkappa$-Cauchy but not $\pset$-coalgebraic.  For $i = \kappa$, Proposition~\ref{prop:kappasucc} gives a $\psetkappa$-Cauchy element of $\nuordpk{\kappa}$ that has $\kappa$ successors and so is not $\psetkappa$-coalgebraic. \end{proof}

We substantiate our claim in Section~\ref{sect:transrestr} that the coalgebraic elements need not form a subsystem of the final chain.
\begin{prop} \label{prop:cosub}
  For a limit ordinal $i$, the following are equivalent.
  \begin{enumerate}
  \item \label{item:cosubp} The $\psetkappa$-coalgebraic elements form a subsystem of $\nuordpk{i}$.
  \item \label{item:cosubc}  $i$ is either $0$ or $\omega$-cofinal or weakly compact or $\geqslant \kappa$ (for $\kappa < \biginfty$).
  \end{enumerate}
\end{prop}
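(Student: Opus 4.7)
The plan is to verify the two implications in turn, with (\ref{item:cosubc})$\Rightarrow$(\ref{item:cosubp}) following by case analysis on $i$, and (\ref{item:cosubp})$\Rightarrow$(\ref{item:cosubc}) handled by the contrapositive through an explicit ghost-based counterexample.

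For (\ref{item:cosubc})$\Rightarrow$(\ref{item:cosubp}) I will split into two subcases. When $\kappa < \biginfty$ and $i \geqslant \kappa$, Proposition~\ref{prop:succpcopk} describes the successors of $\pcop{i}{x}$ as $\setbr{\pcop{\pred{i}}{y} \mid x \rightsquigarrow y}$; since each successor $y$ of an $\lkappa$-branching pointed system $x$ is again $\lkappa$-branching, each such $\pcop{\pred{i}}{y}$ is $\psetkappa$-coalgebraic, and closure is immediate. In the remaining cases ($i=0$, $i$ $\omega$-cofinal, or $i$ weakly compact, where $i$ is a limit so $\limp{i}=i$), Corollary~\ref{prop:allpsetkco} already guarantees that \emph{every} element of $\nuordpk{i}$ is $\psetkappa$-coalgebraic, making the subsystem property trivial.

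For (\ref{item:cosubp})$\Rightarrow$(\ref{item:cosubc}) I argue the contrapositive. Assume $i$ is a positive limit with $i < \kappa$ that is neither $\omega$-cofinal nor weakly compact. Then Proposition~\ref{prop:ghostbeforei} supplies a $\psetkappa$-ghost $a$ at $i$. I first check that $a$ is not $\pset$-coalgebraic: were $a = \pcop{i}{y}$, the component $a_1 = \pcop{1}{y} = \setbr{\pcop{0}{z} \mid y \rightsquigarrow z}$ would be inhabited (because $a$ is a ghost), giving a successor $z$ of $y$ and hence $a \fctrans{i} \pcop{i}{z}$, contradicting that $a$ has no successor. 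Next, and this is the decisive step, Lemma~\ref{lem:homeghost} produces an $\lkappa$-branching pointed system $x_0$ with $\pcop{i}{x_0} \fctrans{i} a$. Then $\pcop{i}{x_0}$ is $\psetkappa$-coalgebraic but has the successor $a$, which is not even $\pset$-coalgebraic, so the $\psetkappa$-coalgebraic elements fail to form a subsystem.

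The proof is essentially routine given the machinery already assembled; the only real trick is invoking Lemma~\ref{lem:homeghost} to realise the ghost as a successor of a $\psetkappa$-coalgebraic element, which is exactly the device used in the proofs of Propositions~\ref{prop:connprojsim} and~\ref{prop:char}. I do not expect a substantive obstacle; care is required only to line up the cases of (\ref{item:cosubc}) with the appropriate prior result (Proposition~\ref{prop:succpcopk} versus Corollary~\ref{prop:allpsetkco}).
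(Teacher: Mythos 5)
Your proof is correct and follows the paper's own argument essentially step for step: Proposition~\ref{prop:succpcopk} for the case $i \geqslant \kappa$, Corollary~\ref{prop:allpsetkco} for the remaining cases of (\ref{item:cosubc}), and for the converse a $\psetkappa$-ghost at $i$ (via Proposition~\ref{prop:ghostbeforei}) realised as a successor of a $\psetkappa$-coalgebraic element by Lemma~\ref{lem:homeghost}. The only difference is that you spell out details the paper leaves implicit, such as why the ghost is not $\pset$-coalgebraic.
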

\begin{proof} For (\ref{item:cosubc})$\Leftarrow$(\ref{item:cosubp}), if $i \geqslant \kappa$, we apply Proposition~\ref{prop:succpcopk}.  In the other cases, all elements are $\psetkappa$-coalgebraic.  For the converse, if the conclusion is false let $a$ be a $\psetkappa$-ghost at $i$.  It is not $\pset$-coalgebraic, and by Lemma~\ref{lem:homeghost} it is a successor of a $\psetkappa$-coalgebraic element. \end{proof}

\section{Related Work} \label{sect:related}
\subsection{Trees vs tidy trees} \label{sect:treestidy}

Our formulation of the Aronszajn and Kurepa properties used the notion of \emph{tidy} tree.  This suited our purposes, but to avoid confusion it must be compared with the following more general notion that commonly appears in the set-theoretic literature.  
\begin{defi}
  Let $I$ be a well-ordered set.  An \emph{$I$-tree} is an inverse $I$-chain where every level is inhabited. 
\end{defi}
If $I$ has a least element, then clearly every tidy $I$-tree is an $I$-tree; but there is also a kind of converse, as follows.  (Cf.~\cite{Jech:settheory}[Lemma 9.13].)
\begin{prop} \label{prop:tidynoprob}
Let $\lambda$ be a regular infinite cardinal.  For any $\lambda$-tree $D$ with all levels $\llambda$-sized, there is a tidy $\lambda$-tree $E$ with all levels $\llambda$-sized and a bijection from the full branches of $E$ to those of $D$.
\end{prop}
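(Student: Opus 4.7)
The plan is to build $E$ from $D$ in two passes: first prune $D$ to make all connecting maps surjective, then collapse at each limit level to enforce the uniqueness-of-extension clause in the tidy-tree definition.

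For the first pass, set $D^{*}_i \eqdef \setbr{x \in D_i \mid x \text{ has a $k$-development in $D$ for every $k<\lambda$}}$, with connecting maps restricted from those of $D$. I would use Proposition~\ref{prop:regular} to show that each $D^{*}_i$ is non-empty and that the restricted maps are surjective: given $y \in D^{*}_j$ and $i$ with $j \leqslant i < \lambda$, let $S_k$ be the set of $i$-developments of $y$ that further extend to level $k$. The family $(S_k)_{i \leqslant k < \lambda}$ is decreasing and each $S_k$ is a non-empty subset of the $\llambda$-sized set $D_i$, so $\bigcap_{k<\lambda} S_k$ is non-empty by Proposition~\ref{prop:regular}, and any member is an $i$-development of $y$ in $D^{*}_i$. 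Note that every full branch of $D$ lies entirely in $D^{*}$, so the full branches of $D^{*}$ are exactly those of $D$.

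For the second pass, define an equivalence $\approx_i$ on $D^{*}_i$ by recursion: $\approx_{j+1}$ is equality, and at every limit $i$ (including $0$), $x \approx_i y$ iff $\ddiag{k}{i}x \approx_k \ddiag{k}{i}y$ for every $k < i$ (vacuously total at $i = 0$). Put $E_i \eqdef D^{*}_i / \approx_i$ with the induced connecting maps. Then $|E_i| < \lambda$; the connecting maps remain surjective; $E_0$ is a singleton, giving the unique root; and at a limit $i$, if two classes $[x],[y]\in E_i$ agree after every connecting map $\ddiag{k}{i}$ with $k < i$, then $\ddiag{k}{i}x \approx_k \ddiag{k}{i}y$ for every such $k$, so $x \approx_i y$ and $[x] = [y]$. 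Thus $E$ is a tidy $\lambda$-tree with $\llambda$-sized levels.

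It remains to exhibit a bijection between full branches of $E$ and full branches of $D$. The map $b \mapsto ([b_i])_{i<\lambda}$ is manifestly a function from the latter to the former. Injectivity: if $[b_i] = [b'_i]$ for all $i$, then at successor $i = j+1$ the class $[b_{j+1}]$ is a singleton (since $\approx_{j+1}$ is equality), so $b_{j+1} = b'_{j+1}$, and at every limit $i$ this yields $b_i = \ddiag{i}{i+1}b_{i+1} = \ddiag{i}{i+1}b'_{i+1} = b'_i$. For surjectivity, given a full branch $([x_i])_{i<\lambda}$ of $E$, I would set $b_i \eqdef x_i$ at successor $i$ and $b_i \eqdef \ddiag{i}{i+1}x_{i+1}$ at limit $i$ (including $0$); coherence $\ddiag{j}{i}b_i = b_j$ is verified by splitting into four cases on whether $i$ and $j$ are successors or limits, the decisive observation being that whenever $k+1 \leqslant i$, the $E$-coherence $\ddiag{k+1}{i}[x_i] = [x_{k+1}]$ upgrades to the genuine equality $\ddiag{k+1}{i}x_i = x_{k+1}$ in $D^{*}$ because $\approx_{k+1}$ is equality. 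The main obstacle is precisely this balance: the quotient must identify enough at limit levels to secure unique extensions there, while leaving successor levels untouched so that full branches of $E$ can be canonically lifted back to $D$.
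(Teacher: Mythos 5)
Your proof is correct, and its first half coincides with the paper's: the pruned tree $D^{*}$ is exactly the channel $C$ that the paper constructs, via the same appeal to Proposition~\ref{prop:regular} for surjectivity of the restricted connecting maps and inhabitedness of each level. Where you diverge is in how the limit-uniqueness clause is then enforced. The paper does not quotient; it re-indexes, forming an auxiliary chain whose $i$th level is $\lim_{j<i}D_j$, embeds $D$ into it cofinally with index map $i \mapsto i+1$ (sending $x \in D_i$ to $(\ddiag{j}{i}x)_{j \leqslant i}$), and takes the image of $C$ under this embedding; at limit levels the image consists of coherent tuples, which are determined by their restrictions, and Proposition~\ref{prop:cofinalemb}(\ref{item:bijfb}) then delivers the bijection on full branches with no further work. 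Your quotient by $\approx_i$ achieves the same effect in place, without an index shift, at the cost of the explicit four-case verification that full branches of $E$ lift back to $D^{*}$; the observation that $\approx$ is equality at successor levels, so that $E$-coherence upgrades to genuine $D^{*}$-coherence there (successor levels being cofinal below every limit), is exactly the right pivot and makes both the lifting and the well-definedness of the quotient go through. Both arguments are sound; the paper's buys brevity by reusing the cofinal-embedding machinery already set up for the ghost constructions, while yours is self-contained and keeps each level recognizable as a quotient of the original one.
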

\begin{proof}
We first construct a channel through $D$, with all levels inhabited, that contains all the full branches.  For each $i < \lambda$, let $C_i$ be the set of all $x \in D_i$ that have a $k$-development for all $k < \lambda$ such that $k \geqslant i$.   To see that $C$ is a channel, let $j \leqslant i < \lambda$ and $x \in C_j$.  For each $k < \lambda$ such that $k \geqslant i$, let $R(k)$ be the set of $i$-developments of $x$ that has an $k$-development in $C$.  Since $C_i$ is $\llambda$-sized the subset $\bigcap_{i \leqslant k < \lambda} R(k)$ has an element, which is an $i$-development of $x$ in $C_i$.  We likewise prove for all $i < \lambda$ that $C_i$ is inhabited.  Evidently every full branch of $D$ is a full branch of $C$.

Next we form the inverse chain $E$, whose $i$th level, for $i < \lambda$, is $\lim_{j < i}D_j$, and whose connecting maps are given by restriction.  Let $\alpha \colon D \to E$ be the cofinal embedding with index map $i \mapsto i+1$ and $i$-th level map sending $x \in D_i$ to $(\ddiag{j}{i}x)_{j \leqslant i} \in E_{i + 1}$. The $\alpha$-image of $C$ is a tidy tree, and by Proposition~\ref{prop:cofinalemb} has the required properties.
\end{proof}
\begin{cor}\hfill
  \begin{enumerate}
  \item There is a Kurepa tree (an $\omega_1$-tree with all levels countable that has more than $\aleph_1$ full branches) iff there is a tidy one.
  \item Let $\lambda$ be a regular infinite cardinal.  There is a $\lambda$-Aronszajn tree (a $\lambda$-tree with all levels $\llambda$-sized that has no full branch) iff there is a tidy one.
  \end{enumerate}
\end{cor}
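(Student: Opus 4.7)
The plan is to derive both equivalences directly from Proposition~\ref{prop:tidynoprob}, which does all the real work.

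The backward directions are nearly trivial. Any tidy $\lambda$-tree is in particular a $\lambda$-tree in the sense of Definition above: it has a root in $D_0$, and for any $i<\lambda$ the connecting map $\ddiag{0}{i}$ is surjective, so $D_i$ is inhabited. Hence a tidy Kurepa tree is a Kurepa tree, and a tidy $\lambda$-Aronszajn tree is a $\lambda$-Aronszajn tree.

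For the forward directions, I would feed the given tree into Proposition~\ref{prop:tidynoprob}. In part~(1), starting from a Kurepa tree $D$ (an $\omega_1$-tree with all levels $\llambda$-sized for $\lambda = \aleph_1$, i.e.\ countable), the proposition produces a tidy $\omega_1$-tree $E$ with all levels countable together with a bijection between the full branches of $E$ and those of $D$. Since $D$ has more than $\aleph_1$ full branches, so does $E$, and $E$ is therefore a tidy Kurepa tree. In part~(2), starting from a $\lambda$-Aronszajn tree $D$ for the given regular infinite cardinal $\lambda$, the same proposition yields a tidy $\lambda$-tree $E$ with all levels $\llambda$-sized together with a bijection of full branches; since $D$ has no full branch, neither does $E$, so $E$ is a tidy $\lambda$-Aronszajn tree.

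There is no substantive obstacle here: all the construction work is packaged inside Proposition~\ref{prop:tidynoprob}. The only routine checks are that the $\llambda$-sized level condition is preserved (explicit in the statement of that proposition) and that the cardinality of the set of full branches is preserved through the bijection, which is immediate. I would write this corollary as a short paragraph invoking the proposition twice, once with $\lambda = \aleph_1$ and once with the given $\lambda$.
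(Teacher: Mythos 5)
Your proposal is correct and is exactly the argument the paper intends: the corollary is stated immediately after Proposition~\ref{prop:tidynoprob} and follows by applying it (with $\lambda=\aleph_1$ for the Kurepa case), using the branch bijection to transfer the cardinality or emptiness of the set of full branches, with the backward directions being the trivial observation that a tidy tree has all levels inhabited. No gaps.
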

So our use of tidy trees is not a significant change from the usual formulation.

\subsection{Saturation}

In~\cite{AdamekLevyMiliusMossSousa:saturate}, an account is given of the final chain of $\pset$ at countable ordinals.  Elements of the final chain are observed to be \emph{saturated}.  At limit ordinals, this property may be expressed as follows.
\begin{prop} \label{prop:saturate}
Let $i$ be a $\kappa$-extensible (e.g.\ countable) limit.  For any $a \in \nuordp{i}$ and pointed system $x$, if for all $j < i$, $a$ has a $\fctrans{i}$-successor $y$ such that $y \simi{j} x$, then $a$ has a $\fctrans{i}$-successor $y$ such that $y \simi{i} x$.
\end{prop}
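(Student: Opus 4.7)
The plan is to define the candidate $y \in \nuordp{i}$ componentwise by $y_j \eqdef \pcop{j}{x}$ for each $j < i$, and then verify that this $y$ is both a $\fctrans{i}$-successor of $a$ and bisimilar to $x$ at depth $i$. That $y$ is a well-defined element of $\nuordp{i}$ follows from the general compatibility of coalgebra projections with connecting maps: $\nuconn{j'}{j}\pcop{j}{x} = \pcop{j'}{x}$ for all $j' \leqslant j < i$, so $(y_j)_{j<i}$ is a full branch.

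The crux is then to show that $a \fctrans{i} y$, i.e.\ that $y_j = \pcop{j}{x} \in a_{j+1}$ for every $j < i$. Fix such a $j$; since $i$ is a positive limit we have $j+1 < i$, so the hypothesis supplies a $\fctrans{i}$-successor $y^{(j+1)}$ of $a$ with $y^{(j+1)} \simi{j+1} x$. By Proposition~\ref{prop:kernelsim}(\ref{item:kernelsimi}) this yields $\pcop{j+1}{y^{(j+1)}} = \pcop{j+1}{x}$, and applying $\nuconn{j}{j+1}$ on both sides gives $\pcop{j}{y^{(j+1)}} = \pcop{j}{x}$. Since $i$ is $\kappa$-extensible and $\limp{i} = i$, Theorem~\ref{prop:genpcoconn} (applied to $y^{(j+1)} \in \nuordp{i}$) gives $\pcop{j}{y^{(j+1)}} = \nuconn{j}{i}y^{(j+1)} = y^{(j+1)}_j$, and the latter lies in $a_{j+1}$ because $y^{(j+1)}$ is a $\fctrans{i}$-successor of $a$. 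Combining, $\pcop{j}{x} \in a_{j+1}$ as required.

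It remains to conclude $y \simi{i} x$. Since $i$ is a limit, this reduces to showing $y \simi{j} x$ for all $j < i$, equivalently (Proposition~\ref{prop:kernelsim}(\ref{item:kernelsimi})) $\pcop{j}{y} = \pcop{j}{x}$; and Theorem~\ref{prop:genpcoconn} applied once more gives $\pcop{j}{y} = \nuconn{j}{i}y = y_j = \pcop{j}{x}$. The main obstacle is exactly the identification $\pcop{j}{z} = z_j$ for $z \in \nuordp{i}$, used twice above: this is precisely the property that fails in the presence of a ghost, so the $\kappa$-extensibility assumption on $i$ is essential, and one expects the saturation property itself to break down at non-$\kappa$-extensible limits.
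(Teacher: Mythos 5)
Your proof is correct and follows essentially the same route as the paper: the witness is the same element $\pcop{i}{x}=(\pcop{j}{x})_{j<i}$, and the crucial use of $\kappa$-extensibility is the same identification $\pcop{j}{z}=\nuconn{j}{i}z=z_j$ from Theorem~\ref{prop:genpcoconn}. The only cosmetic difference is that you unfold Propositions~\ref{prop:pcoppcop} and~\ref{prop:connprojsim} inline and apply the extensibility to the hypothesized successors $y^{(j+1)}$ and to the witness, whereas the paper applies Theorem~\ref{prop:finalcover} to $a$ itself and then cites those two propositions directly.
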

\begin{proof} 
By Theorem~\ref{prop:finalcover}, $a = \pcop{i}a$, and transitions to $\pcop{i}{x}$ by  Proposition~\ref{prop:pcoppcop} since $\pred{i}=i$.  Lastly, $\pcop{i}{x} \simi{i} x$ by
  Proposition~\ref{prop:connprojsim}(\ref{item:gensimproj}). \end{proof}
It is not known whether this property holds in the case $i = \omega_1$.  We conjecture that it does not, even in the case where $a \in \nuordpc{\omega_1}$ and $x$ is countably branching.

\subsection{Cauchy completeness} \label{sect:cauchyco}

To enable comparison with~\cite{FortiHonsell:modselfdescript,LazicRoscoe:nonwellfound}, let us formulate Proposition~\ref{prop:cauchyco} in terms of Cauchy sequences.  
\begin{itemize}
\item An $i$-sequence of pointed systems $(x_j)_{j < i}$ is
  \emph{Cauchy} when for all $j \leqslant k< i$ we have $x_j \simi{j} x_k$.
\item A \emph{limit} for a Cauchy $i$-sequence $(x_j)_{j < i}$ is a pointed system $y$ such that for all $j<i$ we have $x_j \simi{i} y$.  
\end{itemize}
We then have the following.
\begin{prop} \label{prop:cauchyseq}
 For an ordinal $i$, the following are equivalent.
 \begin{enumerate}
 \item \label{item:cauchyseqp} Every Cauchy $i$-sequence of $\lkappa$-branching pointed systems has a $\lkappa$-branching limit.
 \item \label{item:cauchyseqc} $i$ is either a successor or $0$ or $\omega$-cofinal or weakly compact or $\geqslant \kappa + \omega$ (for $\kappa < \biginfty$).
 \end{enumerate}
\end{prop}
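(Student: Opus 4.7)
The plan is to identify Cauchy $i$-sequences of $\lkappa$-branching pointed systems with $\psetkappa$-Cauchy elements of $\nuordpk{i}$, and $\lkappa$-branching limits with $\psetkappa$-coalgebraic realizations, so that the proposition reduces to Proposition~\ref{prop:cauchyco} on the limit-ordinal side and collapses to triviality on the other side.

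First I would translate the Cauchy condition into a full-branch statement. Given a Cauchy $i$-sequence $(x_j)_{j<i}$, Proposition~\ref{prop:kernelsim}(\ref{item:kernelsimi}) turns $x_j \simi{j} x_k$ (for $j \leqslant k < i$) into $\pcop{j}{x_j}=\pcop{j}{x_k}$, i.e.\ $\nuconn{j}{k}(\pcop{k}{x_k}) = \pcop{j}{x_j}$. For limit $i$ this says the tuple $a \eqdef (\pcop{j}{x_j})_{j<i}$ is a full branch of $(\nuordpk{j})_{j<i}$, and every component is $\psetkappa$-coalgebraic because every $x_j$ is $\lkappa$-branching; thus $a$ is a $\psetkappa$-Cauchy element of $\nuordpk{i}$. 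Conversely, by the definition of $\psetkappa$-Cauchy, every such $a$ arises this way by choosing an $x_j$ with $a_j = \pcop{j}{x_j}$ for each $j<i$, and the resulting $(x_j)_{j<i}$ is Cauchy by the same chain of reasoning run backwards.

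Next I would match limits with coalgebraic realizations. If $y$ is a $\lkappa$-branching limit, then $x_j \simi{i} y$ for all $j<i$ gives $\pcop{j}{y}=\pcop{j}{x_j}=a_j$ for all $j<i$, so $\pcop{i}{y}=a$ and $a$ is $\psetkappa$-coalgebraic. Conversely, if $a = \pcop{i}{y}$ for some $\lkappa$-branching $y$, then for each $j<i$ and every $k<i$ we have $\pcop{k}{y} = a_k = \pcop{k}{x_k} = \pcop{k}{x_j}$ (using $x_j \simi{k} x_k$, which follows from the Cauchy condition for $\min(j,k)$ and symmetry of $\simi{\min(j,k)}$ together with transitivity via $x_{\min(j,k)}$). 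For a limit $i$, this componentwise equality forces $\pcop{i}{y}=\pcop{i}{x_j}$, i.e.\ $x_j \simi{i} y$, so $y$ is a limit. Combining these correspondences for limit $i$ makes (\ref{item:cauchyseqp}) exactly equivalent to ``every $\psetkappa$-Cauchy element of $\nuordpk{i}$ is $\psetkappa$-coalgebraic'', which by Proposition~\ref{prop:cauchyco} holds iff $i$ is $0$, $\omega$-cofinal, weakly compact, or $\geqslant \kappa+\omega$ — exactly the limit-case disjuncts in (\ref{item:cauchyseqc}).

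Finally I would dispatch the non-limit cases: $i=0$ has only the empty sequence, for which any $\lkappa$-branching pointed system is vacuously a limit; and for successor $i = j+1$ one uses $y = x_j$ together with the observation that the Cauchy condition restricted to indices $\leqslant j$ already packages the data of $\pcop{j+1}{x_j}$ and matches it against each $x_k$ ($k\leqslant j$) at the required depth. The main obstacle I expect is the careful bookkeeping that guarantees the sequence-to-element map is well-defined and that the equality of the resulting full branches $\pcop{i}{x_j}=\pcop{i}{y}$ really does yield $x_j \simi{i} y$ at all indices simultaneously; once that is established, the reduction to Proposition~\ref{prop:cauchyco} finishes everything.
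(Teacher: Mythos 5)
Your overall route is the same as the paper's: identify Cauchy $i$-sequences of $\lkappa$-branching systems with $\psetkappa$-Cauchy elements $a=(\pcop{j}{x_j})_{j<i}$ of $\nuordpk{i}$, identify limits with coalgebraic realizations $a=\pcop{i}{y}$, reduce the limit case to Proposition~\ref{prop:cauchyco}, and dispatch successors directly. But one step in your limit/coalgebraic correspondence is wrong. In the direction ``$a=\pcop{i}{y}$ implies $y$ is a limit'' you assert $\pcop{k}{x_k}=\pcop{k}{x_j}$ for \emph{all} $j,k<i$, justified by the claim $x_j\simi{k}x_k$. The Cauchy condition only yields $x_j\simi{\min(j,k)}x_k$, so for $k>j$ you get $x_j\simi{j}x_k$ and nothing more; the claimed equality fails. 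Concretely, $(\von{j})_{j<\omega}$ is a Cauchy sequence of $\lkappa$-branching systems, yet $\pcop{k}{\von{j}}\neq\pcop{k}{\von{k}}$ whenever $j<k$ (Proposition~\ref{prop:von}), and no $y$ satisfies $\von{j}\simi{\omega}y$ for every $j<\omega$, even though $a=\pcop{\omega}{\von{\omega}}$ is coalgebraic.

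The same example shows that the trouble lies in reading the limit condition literally as ``$x_j\simi{i}y$ for all $j<i$'': under that reading clause~(\ref{item:cauchyseqp}) would already fail at $i=\omega$, which is $\omega$-cofinal, so the proposition could not hold as you are interpreting it. The condition the argument needs --- and the one equivalent to $a=\pcop{i}{y}$, which is what the paper uses --- is $x_j\simi{j}y$ for all $j<i$. With that reading both directions of your correspondence are immediate ($a=\pcop{i}{y}$ iff $\pcop{j}{x_j}=\pcop{j}{y}$ for all $j<i$ iff $x_j\simi{j}y$ for all $j<i$), the faulty transitivity step is not needed, and the rest of your argument (the sequence-to-element translation, the reduction to Proposition~\ref{prop:cauchyco}, and the successor case via $y=x_j$) is sound and coincides with the paper's proof.
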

\begin{proof} The successor case is evident.  For the limit case, 
(\ref{item:cauchyseqp}) corresponds to Proposition~\ref{prop:cauchyco}(\ref{item:cauchycop}).  To see this, note that a Cauchy sequence $(x_j)_{j < i}$ gives a $\psetkappa$-Cauchy element $a = (\pcop{j}{x})_{j < i} \in \nuordpk{i}$, and every $\psetkappa$-Cauchy element arises in this way.  Moreover, $y$ is a limit for $(x_j)_{j < i}$ iff $a= \pcop{i}{y}$, so $(x_j)_{j < i}$ has a limit iff $a$ is $\psetkappa$-coalgebraic.  \end{proof}

\subsection{Comparison}

As stated, the work of~\cite{FortiHonsell:modselfdescript} provides the basis of the method we have used to obtain ghosts.  Because that paper is concerned with models of set theories, it treats primarily the full powerset functor.  It studies systems that resemble the final chain system in satisfying Proposition~\ref{prop:pcoppcop}, but differ from it by including only coalgebraic elements.  As we have seen---Proposition~\ref{prop:cosub}---these do not form a subsystem of the final chain, so the difference is considerable.

The work of~\cite{LazicRoscoe:nonwellfound} treats not only systems of coalgebraic elements but also systems of Cauchy elements (and the latter do form a subsystem of the final chain), specifically for the full powerset functor.  A more elaborate proof of Cauchy incompleteness is given in order to obtain additional negative results.  Relating that work to the present paper is a matter for future research.

\section{Conclusions} \label{sect:conclusions}

We have investigated several properties that the final chain of $\psetkappa$, viewed as a transition system, enjoys at countable ordinals.  In particular, the \emph{strong extensionality} property: bisimilar elements are equal.  We have seen that these properties do not hold at $\omega_1$, because of an element distinct from $\zat{\omega_1}$ that has no successor---a ``ghost''.  Using the same  method, we have precisely identified those ordinals at which each of these properties does hold.

\newcommand{\etalchar}[1]{$^{#1}$}


\end{document}